\documentclass[12pt]{article}
\usepackage[utf8]{inputenc}
\pdfoutput=1

\usepackage[margin=1.25in]{geometry}
\usepackage[onehalfspacing]{setspace}

\usepackage{fullpage}
\usepackage{booktabs} 
\usepackage[ruled]{algorithm2e} 

\SetAlFnt{\small}
\SetAlCapFnt{\small}
\SetAlCapNameFnt{\small}
\SetAlCapHSkip{0pt}
\IncMargin{-\parindent}

\usepackage{amsfonts,amsmath,amssymb,amsthm}
\usepackage{verbatim}
\usepackage{graphicx}
\usepackage{xcolor}
\usepackage{bbm}
\usepackage{hyperref}
\usepackage{cleveref}
\usepackage{nicefrac}       
\usepackage{xfrac}
\usepackage{algorithmic}
\newcommand{\Comment}[1]{{\footnotesize\ttfamily\textcolor{blue}{/* #1 */}}}

\usepackage{tikz}
\usetikzlibrary{positioning,arrows.meta,calc,fit,backgrounds}

\usepackage[suppress]{color-edits}  
\addauthor{wt}{blue}
\addauthor{yl}{red}
\addauthor{yd}{cyan}

\usepackage{enumitem}
\setlist{nosep} 
\usepackage[font=small,skip=2pt]{caption}

\newtheorem{proposition}{Proposition}
\newtheorem{lemma}{Lemma}
\newtheorem{theorem}{Theorem}
\newtheorem{corollary}{Corollary}

\theoremstyle{definition}
\newtheorem{definition}{Definition}

\newtheorem{remark}{Remark}

\newcommand{\ValSet}{\mathcal{V}}
\newcommand{\val}{v}
\newcommand{\valnum}{n}
\newcommand{\CostSet}{\mathcal{C}}
\newcommand{\cost}{c}
\newcommand{\costnum}{m}
\newcommand{\costpmf}{g}
\newcommand{\ValDist}{F}
\newcommand{\valpmf}{f}
\newcommand{\MarketSegment}{\ValDist_{d}}
\newcommand{\marketSegmentPmf}{\valpmf_{d}}
\newcommand{\segWeight}{\alpha_d}
\newcommand{\segi}{d}
\newcommand{\segI}{D}
\newcommand{\AggMarket}{F^*}

\newcommand{\MarketSegSet}{\mathcal{F}}

\newcommand{\FragSuppSet}{T}

\newcommand{\sw}{w}
\newcommand{\esw}{w}
\newcommand{\rsdz}{z}
\newcommand{\rsdw}{x}
\newcommand{\rsdf}{y}
\newcommand{\df}{d}
\newcommand{\mscdf}{p}
\newcommand{\reals}{\mathbb{R}}

\newcommand{\CostDist}{G}
\newcommand{\SuppSet}{S}
\newcommand{\supp}{s}
\newcommand{\ExtMarketDecomp}{\mathcal{D}}

\newcommand{\Frag}{{\bar F}}
\newcommand{\frag}{{\bar f}}

\newcommand{\op}{q}
\newcommand{\opm}{r}
\newcommand{\Op}{Q}
\newcommand{\argmax}{\operatorname*{argmax}}

\newcommand{\ba}{\boldsymbol{\alpha}}
\newcommand{\bw}{\boldsymbol{w}}
\newcommand{\bx}{\boldsymbol{x}}
\newcommand{\by}{\boldsymbol{y}}
\newcommand{\bz}{\boldsymbol{z}}
\newcommand{\ct}{\Rightarrow}
\newcommand{\SellerSurplus}{\textsf{SS}}
\newcommand{\BuyerSurplus}{\textsf{BS}}
\newcommand{\MinBuyerSurplus}{\BuyerSurplus_{\text{min}}}
\newcommand{\MaxBuyerSurplus}{\BuyerSurplus_{\text{max}}}
\newcommand{\SocialWelfare}{\textsf{SW}}
\newcommand{\MinSocialWelfare}{\SocialWelfare_{\text{min}}}
\newcommand{\MaxSocialWelfare}{\SocialWelfare_{\text{max}}}

\usepackage{natbib}
\bibpunct{(}{)}{;}{a}{,}{,}

\title{The Limits of Price Discrimination with a Bayesian Seller\thanks{A one-page abstract appeared in ACM EC'26.  We thank anonymous referees for valuable feedback.}}

\author{
Yuan Deng\thanks{Google Research, \texttt{dengyuan@google.com}.}
\and
Yilin Li\thanks{Chinese University of Hong Kong, \texttt{ylli25@cse.cuhk.edu.hk}.}
\and
Wei Tang\thanks{Chinese University of Hong Kong, \texttt{weitang@cuhk.edu.hk}.}
\and
Hanrui Zhang\thanks{Chinese University of Hong Kong, \texttt{hanrui@cse.cuhk.edu.hk}.}
}
\date{}
\begin{document}
\maketitle

\begin{abstract}
    
We study the limits of third-degree price discrimination when the production cost is Bayesian and private to the seller, generalizing the seminal work of \citet{BBM-15}.  The rough setup is the following: A monopoly seller sets different prices for buyers in different ``segments'' of the market so as to maximize seller surplus.  Different ways in which the aggregate market is decomposed into segments lead to different welfare outcomes, i.e., (seller surplus, buyer surplus) pairs.  When the production cost is Bayesian, the region of achievable welfare outcomes can exhibit complex shapes beyond the clean characterization by Bergemann, Brooks and Morris for the case with a fixed cost.  We show that with a Bayesian cost, this region coincides with a proper projection of a polytope defined by a polynomial number of linear constraints, the essential ones of which correspond to flow conservation in a ``discounted'' flow network.  As a result, we give a polynomial-time algorithm that computes optimal market segmentations in terms of any linear combination of the seller surplus and the buyer surplus.  En route, we establish the following structural property: Any market can be written as a convex combination of ``extremal markets'' in a way preserving the seller surplus and the buyer surplus.  These extremal markets are piecewise equal-surplus with respect to different possible costs, generalizing a similar notion introduced by Bergemann, Brooks and Morris when the cost is fixed.

\medskip

\noindent\textsc{Jel Classification: }D47, D82, D83.

\noindent\textsc{Keywords: }Price discrimination, Bayesian Persuasion, Extremal markets.
\end{abstract}



\newpage
\section{Introduction}
\label{sec:intro}

{\em Price discrimination} is the practice where a monopoly seller sets different prices for buyers in different ``segments'' of the market so as to maximize seller surplus.  The idea is the following: The aggregate market can be summarized as a distribution $F$ of all buyers' willingness to pay, i.e., $F$ is the distribution of a uniformly random buyer's value.  Suppose the production cost is $c$.  Without segmentation, the seller's optimal strategy is to set a price $p^* \in \argmax_p (p - c) \cdot F(p)$, where $F(p)$ is the {\em tail} probability at $p$, i.e., the probability that a uniformly random buyer's value is {\em at least} $p$.\footnote{As a general convention, throughout the paper, we use $F$ to denote the tail probability of a distribution, instead of the CDF.  This simplifies the presentation both syntactically and semantically.}  Nonetheless, if the seller knows that the aggregate market consists of two {\em distinguishable} groups of buyers (i.e., two segments), characterized by distributions $F_1$ and $F_2$ respectively, then the optimal pricing strategy is to offer two generally different prices $p_1^* \in \argmax_p (p - c) \cdot F_1(p)$ and $p_2^* \in \argmax_p (p - c) \cdot F_2(p)$ to the two segments respectively.  The latter strategy is always no worse, and almost always strictly better, than the former segment-oblivious strategy in terms of seller surplus.  This is but natural --- in the latter scenario, the seller makes use of more information, and extracts higher surplus in return.

Perhaps more curious is the fact that {\em buyers} can sometimes also benefit from price discrimination.  Consider the following example: In the aggregate market, $49\%$ of all buyers have a value of $1$, and the other $51\%$ have a value of $2$ (denoted by $\{(0.49, 1), (0.51, 2)\}$ for brevity).  The production cost is $0$.  Without segmentation, the seller sets a price of $2$, leading to a total buyer surplus of $0$ in the entire market.  On the other hand, suppose the market is decomposed into two segments: Segment 1: $\{(0.49, 1), (0.48, 2)\}$, and Segment 2: $\{(0.03, 2)\}$.  Now the optimal strategy for the seller is to set a price of $1$ on Segment 1 and a price of $2$ on Segment 2, leading to a total buyer surplus of $0.48$, contributed by Segment 1.  In other words, the example shows that price discrimination can lead to {\em strong Pareto improvements}.  In fact, an alternative perspective is to view (optimal) market segmentation as an {\em information design} \citep{kamenica2011bayesian} problem, where a mediator selectively discloses information about buyers, effectively creating segments, so as to optimize a certain objective (e.g., a function of the seller surplus and / or the buyer surplus).  This perspective provides further motivation for research on the {\em limits of price discrimination}, where the goal is to understand (in ways elaborated below) all pairs of $(\text{seller surplus}, \text{buyer surplus})$ achievable through segmentation in a given aggregate market.  This is the general context of our investigation in the current paper.

\paragraph{The model: from fixed production costs to Bayesian ones.}
The celebrated result of \citet{BBM-15} gives a surprisingly clean characterization (henceforth the BBM characterization) of the limits of price discrimination in the basic setting, where the production cost is fixed and public: Every welfare outcome, except ``clearly impossible'' ones, is achievable through segmentation.\footnote{We will use the word ``segmentation'' in two senses: (1) the general practice of creating segments in a market, and (2) a specific way of segmentation, i.e., a decomposition of a market into a convex combination of markets.  When used in the second, more technical sense, a segmentation is mathematically similar to an information structure or a signaling scheme.}  More precisely, assuming (without loss of generality) that the production cost is $0$, the region of feasible welfare outcomes is the triangle induced by the following $3$ extremal points:
\begin{itemize}
    \item Welfare-maximizing, seller-optimal: Full efficiency is achieved (i.e., the buyer buys with probability $1$), and the seller extracts the full welfare.  This is trivially achieved by a segmentation that always reveals the buyer's value to the seller.
    \item Welfare-maximizing, buyer-optimal: Full efficiency is achieved, and the seller extracts the minimum surplus possible, which is the optimal seller surplus without any segmentation.
    \item Welfare-minimizing: The social welfare is minimized, which is equal to the minimum seller surplus possible.  The buyer surplus is $0$.
\end{itemize} 
In particular, the achievability of the latter two points (the nontrivial ones) is established constructively based on a powerful structural property: Any aggregate market can be written as a convex combination of ``extremal markets'', each of which ensures that the seller is indifferent among all prices in the support of the extremal market.  In the context of price discrimination, each of these extremal markets can be viewed as a segment of the aggregate market.  As long as the supports of these extremal markets all cover the optimal price(s) without segmentation (which is always possible), the seller surplus must be the minimum possible.  Given such a decomposition, optimal price discrimination essentially becomes a matter of tiebreaking: If the seller favors the minimum price in each extremal market, then full efficiency is achieved and the buyer surplus is maximized (point 2 above); alternatively, if the seller favors the maximum price in each extremal market, then the buyer surplus is $0$, and the social welfare is minimized (point 3 above).

In this paper, we investigate the very question asked by BBM in the more general model where the production cost --- or in other words, the type of the seller --- is Bayesian and private to the seller.  Conceptual interpretations of this model are manifold: When the segmentation is exogenous, one can imagine that the mediator who designs the information structure has only distributional knowledge of the seller's production cost; when the segmentation is endogenous, one can imagine that the production cost realizes independently of the segmentation, or even refreshes every once in a while.  As we will see below, this simple and natural generalization leads to a dramatically richer technical problem.

\subsection{Results and Techniques}

\begin{figure}[h!]
    \centering
    \includegraphics[width=0.5\linewidth]{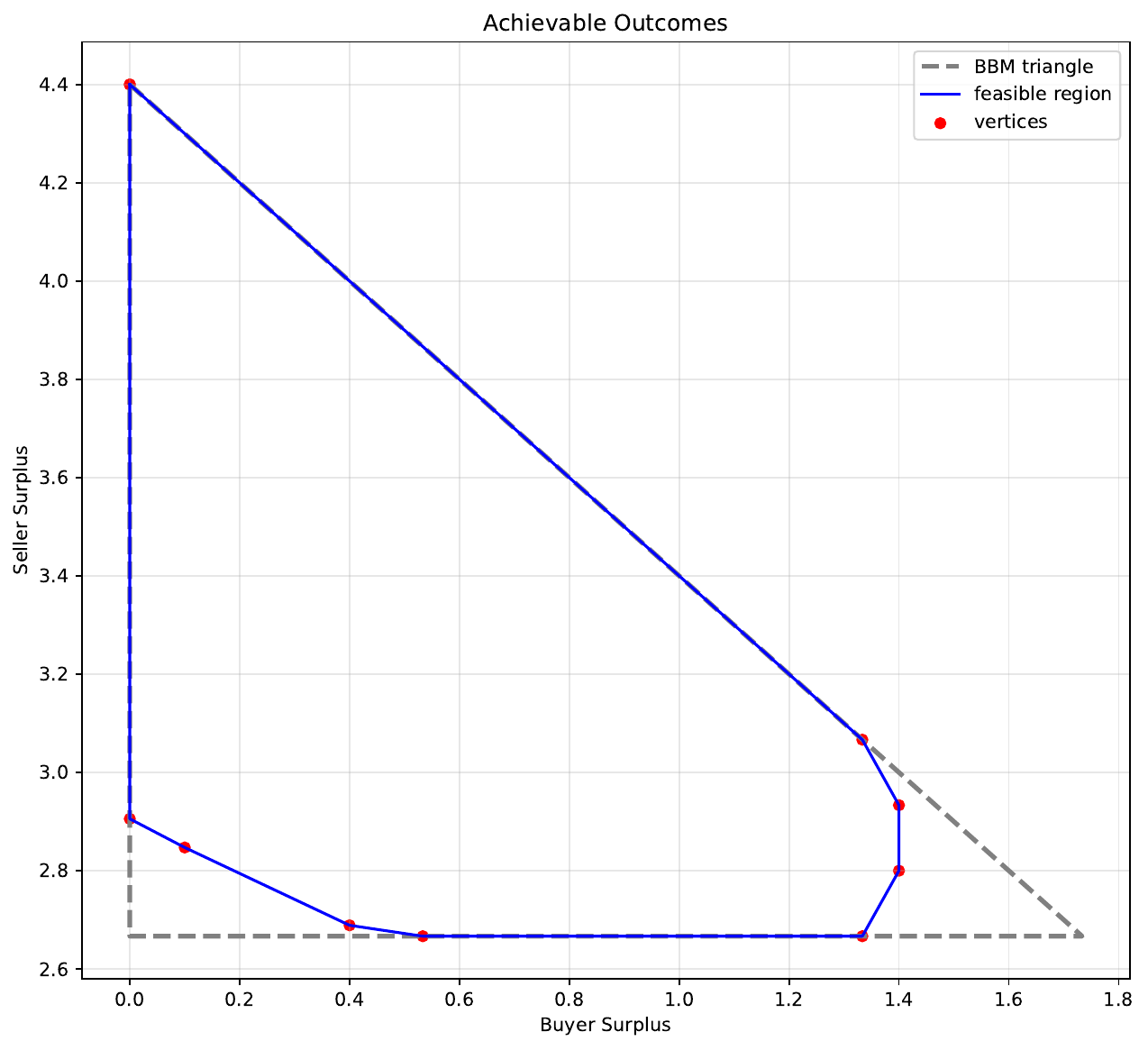}
    \caption{Landscape of achievable welfare outcomes with a Bayesian seller ($5$ possible values, $3$ possible costs).}
    \label{fig:landscape}
\end{figure}

\paragraph{The BBM characterization no longer applies.}
We first seek to answer the following natural question: {\em Does the BBM characterization generalize to the setting with a Bayesian seller?}  The answer turns out to be negative: Among the three extremal points in the BBM characterization, the two nontrivial ones are not always achievable with a Bayesian seller, and the corresponding ``blunt tips'' of the achievable region can exhibit complex shapes (Figure~\ref{fig:landscape} shows such an example).  This complex landscape motivates, and in some sense demands us to take a more sophisticated and {\em algorithmic} approach to the problem.  In particular, we will refrain from trying to explicitly characterize all extremal points in the achievable region (which appears futile given how complex the region can be).  Instead, our full characterization is implicit through an efficient algorithm that computes an optimal segmentation in terms of any linear combination of the seller surplus and the buyer surplus.  En route, we also establish and utilize strong structural properties, which we discuss next.

\paragraph{Extremal markets, generalized.}
Our first structural result, which lays the foundation of all subsequent results in the paper, is the full generality of {\em generalized} extremal markets.  We show that any market can be written as a convex combination of certain highly structured markets, which we also term extremal markets, following BBM --- in fact, BBM's extremal markets are a special case of ours when the distribution of the production cost degenerates into a single point mass.  The decomposition of a market into extremal markets can be done algorithmically in polynomial time, in a way preserving both the seller surplus and the buyer surplus.  As a result:

\begin{theorem}[Informal version of \Cref{corollary: full generality of extremal market}]
Any achievable outcome through segmentation is also achievable through segmentation {\em into extremal markets}, and there is an efficient algorithm that transforms the former into the latter.
\end{theorem}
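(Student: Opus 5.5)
The plan is to argue segment by segment. Since seller surplus and buyer surplus are linear in the segment weights, it suffices to handle a single segment. Take a market $\ValDist$ (one segment of an arbitrary segmentation) and fix the seller's pricing rule on it: for each possible cost $\cost \in \CostSet$, an optimal price $p_\cost$ chosen with the tiebreaking that realizes the given outcome. The prices $p_\cost$ can be taken weakly decreasing in $\cost$ by a standard single-crossing argument, since $(p-\cost)\ValDist(p)$ has increasing differences in $(p,\cost)$. It then suffices to write $\ValDist$ as a convex combination of extremal markets $\MarketSegment$ with two properties:
\begin{enumerate}
\item in each $\MarketSegment$, for every cost $\cost$, the price $p_\cost$ is still optimal;
\item in each $\MarketSegment$, the seller is indifferent, for each cost, between $p_\cost$ and the other prices in the corresponding piece of the support (the ``piecewise equal-surplus'' structure).
\end{enumerate}
If we keep the same prices $p_\cost$ in every $\MarketSegment$, then by linearity both $\SellerSurplus$ and $\BuyerSurplus$ are preserved, because for fixed prices the seller surplus and buyer surplus are linear in the value distribution. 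Replacing each original segment by its decomposition then yields a segmentation into extremal markets with the same outcome.

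The decomposition itself will be a greedy peeling procedure that generalizes BBM's. Given the current residual market, with support $\SuppSet$, we build the extremal market $\Frag_\SuppSet$ supported on $\SuppSet$. The prices $p_\cost$ partition $\SuppSet$ into intervals between consecutive distinct prices. On each interval we require equal surplus for the cost whose optimal price anchors that interval, which pins down $\Frag_\SuppSet$ up to scale by backward recursion on the tail probabilities. We then subtract the largest multiple $\lambda$ of $\Frag_\SuppSet$ that keeps the residual nonnegative. This zeroes out at least one value, so the support shrinks and the process ends after at most $\valnum$ rounds, which gives polynomial time.

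The invariant to maintain is that in the residual market each $p_\cost$ remains optimal for cost $\cost$. This holds because the original market satisfies it and the peeled market satisfies it with equality on the relevant comparisons, so the difference of surplus gaps stays nonnegative. The same linearity argument is what shows optimality in each peeled piece as well.

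\textbf{Main obstacle.} The hardest step will be showing that the piecewise equal-surplus market $\Frag_\SuppSet$ is genuinely a valid distribution (nonnegative masses) and that in it each $p_\cost$ is optimal against all prices, including those outside its own piece and prices held by other costs. This is where the Bayesian cost breaks BBM's single-indifference argument. I would first try to prove nonnegativity and optimality directly from the monotonicity of $p_\cost$ in $\cost$, comparing surplus at prices in adjacent pieces via increasing differences. If that fails, I would enlarge the family of extremal markets, allowing the peeling to also remove mass at anchor prices so as to restore feasibility, and then verify optimality piece by piece.
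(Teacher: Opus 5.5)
Your reduction to a single segment and the ``reuse the same prices, then apply linearity'' step are fine; the paper argues the same way. The peeling step, however, fails at the invariant you assert.

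For cost $c_j$, the peeled market $E$ is indifferent only among support points in $[p_{c_j},p_{c_{j+1}}]$. At support points $s<p_{c_j}$, which lie in lower pieces, it \emph{strictly} prefers $p_{c_j}$. So in the residual the gap $\bigl[(p_{c_j}-c_j)F(p_{c_j})-(s-c_j)F(s)\bigr]-\lambda\bigl[(p_{c_j}-c_j)E(p_{c_j})-(s-c_j)E(s)\bigr]$ decreases linearly in $\lambda$. It can turn negative before any value runs out.

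Concretely, take values $\{1,2,3\}$, costs $c_1=0$, $c_2=3/2$, and a segment with masses $(0.6,0.2,0.2)$. Price $1$ is uniquely optimal for $c_1$ (surpluses $1,0.8,0.6$), and price $3$ is uniquely optimal for $c_2$ ($0.3$ versus $0.2$ at price $2$). Your $E$ is the cost-$0$ equal-revenue market with masses $(1/2,1/6,1/3)$, and the largest feasible multiple is $\lambda=0.6$. The residual $(0.3,0.1,0)$ gives cost $c_2$ surplus $0.05$ at price $2$ but $0$ at price $3$. Hence $p_{c_2}$ is no longer optimal, and any segmentation refining $F=0.6E+(0.3,0.1,0)$ gives cost $c_2$ at least $0.6\cdot 0.5+0.05=0.35>0.3$. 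The seller surplus is not preserved; the $c_2$-gap between prices $3$ and $2$ is $0.1-\lambda/4$, which already vanishes at $\lambda=0.4$.

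The same mechanism also defeats anchoring at the \emph{realized} prices. If the outcome uses a non-minimal optimal price $p_{c_j}$ (e.g.\ to realize the minimum buyer surplus), a lower price tied with it for $c_j$ lies in a lower piece of $E$. Then every $\lambda>0$ makes that lower price strictly better in the residual. Comparisons with higher prices are harmless, since you can chain within-piece indifference with single crossing. So the difficulty is one-sided, but it is real.

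The missing idea is the paper's stopping rule together with a separate termination count.
\begin{itemize}
\item The paper anchors each round at the \emph{minimum} optimal prices $q(F)$ of the current residual. Since $\max Q_j(F)\le\min Q_{j+1}(F)$, all of $Q_j(F)$ lies in piece $j$, and every tie-breaking survives.
\item It peels only $\min\{\alpha_{\text{runout}},\alpha_{\text{shift}}\}$, where $\alpha_{\text{shift}}$ is the first multiple at which some minimum optimal price changes. At that moment a lower price has only just become tied, which yields the key lemma $Q_j(F)\subseteq Q_j(\alpha E)\cap Q_j(F-\alpha E)$. Optimal price sets only grow, so the originally chosen prices remain optimal in every extremal market produced.
\item A shift-stop zeroes out no value, so your ``support shrinks, at most $n$ rounds'' argument no longer applies. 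The paper bounds the shift-stops separately by $n$: each $q_j$ only moves down and never below $\max Q_{j-1}(F)$. This gives $2n$ rounds in total.
\end{itemize}

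Two smaller points. With increasing differences in $(p,c)$, the optimal prices are weakly \emph{increasing} in the cost, not decreasing. And the obstacle you single out (nonnegativity of $E$ and optimality of the anchors within $E$) is the easy part: it is a direct chain argument, given by the paper's lemma on optimal prices in extremal markets.
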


Technically, to establish this result, we need to overcome a number of obstacles that are unique to the more general setting with a Bayesian seller, compared to the classic one considered by BBM.  To begin with, the right definition of extremal markets is no longer straightforward, since there are multiple possible costs, and it is impossible to equalize the seller surplus generated by different prices with respect to all costs simultaneously.  Regardless of the form of extremal markets, the iterative decomposition must preserve, in each step, the seller-optimal prices with respect to {\em all possible costs simultaneously}.  At the same time, throughout the decomposition procedure, we must always make sure that the ``remainder'' of the market is ``essentially similar'' to the aggregate market, again, with respect to all possible costs simultaneously.  On top of all this, we must be able to bound the number of iterations after which the procedure terminates, which, as we will see, can no longer be done simply by arguing about the size of the support of the remainder market.  As it turns out, the right notion of extremal markets involves dividing the value space into pieces according to the seller-optimal prices with respect to all possible costs (so the set of all legitimate extremal markets depends on both the value space and the cost space).  Each extremal market is piecewise equal-surplus, with respect to different costs on different pieces, respectively.  Our decomposition procedure in each iteration constructs a new extremal market to be split off from higher values to lower ones, where it enters a new phase every time it crosses a seller-optimal price with respect to some possible cost.  We argue that the sets of seller-optimal prices with respect to all possible costs can only expand throughout the procedure, which implies all the properties discussed above.

\paragraph{(Reduced) fragment decompositions.}
One particular implication of the above structural result is that in order to characterize all achievable welfare outcomes, we only need to characterize those achievable by segmentation into extremal markets.  We call such a segmentation an {\em extremal market decomposition}.  Note that there is a straightforward LP that allows us to optimize an extremal market decomposition, where the main decision variables are the weights of all possible extremal markets in the decomposition.  The issue of this LP is that it has exponentially many variables (corresponding to exponentially many possible extremal markets), and is therefore algorithmically infeasible.  This gives us the following intuition that drives our subsequent results: We need to find a more succinct representation to capture the essence of an extremal market decomposition, which still allows us to determine the seller surplus and the buyer surplus induced by this extremal market decomposition.  This motivates the idea of (reduced) fragment decompositions.

Recall that each extremal market consists of multiple pieces, on each of which the market is equal-surplus.  The {\em fragment decomposition} of an extremal market is simply the unique way of writing an extremal market as a weighted sum of these pieces, each normalized so the ``running tail probability'' is $1$ (the normalized version of each possible piece is called an (indifference) fragment).  Then, one can naturally generalize this definition to extremal market decompositions: To obtain the fragment decomposition of an extremal market decomposition, we simply sum up the fragment decomposition of every extremal market, scaled by its weight in the extremal market decomposition.  The rationale behind fragment decompositions is that if multiple extremal markets in an extremal market decomposition share the same fragment, then we only care about how much probability mass {\em in total} is put on this fragment --- in fact, through a charging argument, we show that the seller surplus and the buyer surplus induced by an extremal market decomposition are both linear functions of the weights in its fragment decomposition.

The above result suggests that the welfare outcome induced by an extremal market decomposition is fully captured by its fragment decomposition, which is much more succinct.  Nonetheless, there are still exponentially many possible fragments, in particular because each fragment is determined by its entire support, which in general can be any subset of the value space.  To this end, we further consider {\em undominated} fragments, whose supports are contiguous subsets of the value space.  Each undominated fragment is determined by the leftmost and rightmost element in its support, which means there are only polynomially many undominated fragments.  We then group all fragments that are dominated by the same undominated fragment together, where we focus on the total probability mass contained in these fragments, as well as how the probability is distributed to the elements in the support of the undominated fragment.  We show that these two quantities are enough to capture the contribution of the entire group to the seller surplus and the buyer surplus.  Such a grouped fragment decomposition is called a {\em reduced} fragment decomposition.  As a result, we finally obtain a polynomial-sized representation of extremal market decompositions.

\begin{lemma}[Informal version of \Cref{lem: surplus of reduced fragment decompositions}]
    The seller surplus and the buyer surplus of an extremal market decomposition are fully captured by, and in fact, linear in, its reduced fragment decomposition, which consists of a polynomial number of parameters.
\end{lemma}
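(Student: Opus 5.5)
We prove the lemma by going through the two-stage reduction described above: extremal market decomposition $\to$ fragment decomposition $\to$ reduced fragment decomposition. We show linearity of the seller surplus and the buyer surplus at each stage, and check that the final stage has polynomially many parameters.

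\textbf{Step 1: surplus is linear in the fragment decomposition.} Fix an extremal market decomposition $\ExtMarketDecomp$ with weights $\alpha_d$ on extremal markets $\MarketSegment$. By definition, the seller surplus and buyer surplus are $\sum_d \alpha_d \cdot \SellerSurplus(\MarketSegment)$ and $\sum_d \alpha_d \cdot \BuyerSurplus(\MarketSegment)$. In each extremal market and for each cost $c\in\CostSet$, the seller is indifferent among the prices in the piece associated with $c$. So the seller-optimal price for cost $c$ can be taken to be, say, the lowest price in the support of that piece, with the tie-breaking convention fixed in the paper.

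We then charge surplus piece by piece. Write $\MarketSegment$ as $\sum_j \beta_{d,j}\Frag_j$, where the $\Frag_j$ are fragments normalized so the running tail probability is $1$. For cost $c$, take the price $p$ at the left end of the piece for $c$. The seller revenue is $(p-c)\cdot \MarketSegment(p)$. Because of the tail normalization, $\MarketSegment(p)$ equals the coefficient of the fragment for $c$ times a fixed factor determined only by that fragment. The buyer surplus $\sum_{v\ge p}(v-p)\valpmf_d(v)$ splits into the contributions of the pieces at or above $p$. Each such contribution is the fragment's coefficient times a constant depending only on the fragment and on $c$.

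Averaging over $c\sim\CostDist$ expresses both surpluses as $\sum_j \beta_{d,j}\, a_j$ and $\sum_j\beta_{d,j}\, b_j$, with constants $a_j,b_j$ depending only on $\Frag_j$. Summing over $d$ with weights $\alpha_d$ gives linearity in the aggregated weights $\sum_d\alpha_d\beta_{d,j}$.

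\textbf{Step 2: reduce to undominated fragments.} An undominated fragment is determined by the endpoints $(\ell,r)$ of a contiguous support, together with the cost regime. So there are $O(\valnum^2\costnum)$ of them. Group every fragment with the undominated fragment that dominates it. For each group, record two things: the total mass in the group, and the vector of how that mass is distributed over the support points of the undominated fragment. This gives polynomially many parameters.

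We must then show that $a_j$ and $b_j$ for a dominated fragment are linear in its mass and its pointwise distribution. The seller part follows from equal-surplus: revenue at the left endpoint is determined by the total mass, since the tail is normalized. For the buyer part, $\sum_v (v-p)\frag_j(v)$ is linear in the pointwise probabilities $\frag_j(v)$. Summing within the group therefore gives a linear functional of the aggregated pointwise distribution. The only subtlety is that the price $p$ used in a dominated fragment must coincide with the left endpoint $\ell$ of the undominated one, or at least be determined by the group. This should follow because the left endpoint is a seller-optimal price shared by the whole group.

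\textbf{Main obstacle.} The crux is Step 1's charging argument. Each piece's contribution must depend only on that piece and not on the rest of the extremal market. Otherwise the constants $a_j,b_j$ are not well defined, and grouping in Step 2 breaks.

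The danger lies in cross-piece terms in the buyer surplus: a low price $p$ for cost $c$ also yields surplus from higher pieces. First I would try to exploit the fact that each piece is equal-surplus with the tail normalized to $1$ at its own left boundary. A higher piece then contributes to the tail at $p$ an amount that is a fixed multiple of its coefficient. That multiple is determined by the piece's own shape together with the price $p$, where $p$ is a boundary determined by the cost structure, not by the market.

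If this fails, I would refine the indexing of fragments to include the set of lower costs whose optimal prices lie below them. That set is determined by the position of the fragment in the value space, so linearity is restored at the cost of only a polynomial blowup.
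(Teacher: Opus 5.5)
You correctly identify the crux, but the argument fails at exactly that point. In Step~1 you assert that a higher piece contributes to the buyer surplus of a lower cost an amount equal to its coefficient times a constant depending only on the fragment and on $c$. This is false. Under min-price tie-breaking, cost $c_{j'}$ posts $q_{j'}$. A piece $j>j'$ with fragment $(j,\ell,r,T)$ and weight $w$ then contributes $w\sum_{v\in T}(v-q_{j'})\,\bar f_{j,\ell,r,T}(v)$. This depends on $q_{j'}$, the left endpoint of a \emph{different} piece.

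Contrary to your remark, $q_{j'}$ is a parameter of the extremal market, not of the cost structure. For instance, take two costs and the extremal markets with $q=(v_1,v_3)$ and $q=(v_2,v_3)$, with the same support above $v_3$. They share their second fragment, yet that fragment contributes differently to cost $c_1$'s buyer surplus. So your constants $a_j,b_j$ are not well defined, and aggregating over $d$ does not produce a function of the fragment weights. Step~2 inherits the same defect, since the group fixes the price only for the fragment's own cost $c_j$.

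The fallback does not help either. The set of lower costs below fragment $j$ is just $\{1,\dots,j-1\}$, already determined by $j$, whereas what matters is the values $q_{j'}$. Indexing fragments by those values would blow up exponentially in $m$ and break the grouping into undominated fragments.

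The missing idea, which is the paper's route, is to charge social welfare instead of buyer surplus. A buyer of value $v$ served by a seller of cost $c_{j'}$ generates welfare $v-c_{j'}$, independent of the price. Under min-price tie-breaking, a buyer in piece $j$ is served exactly by the costs $j'\le j$; under max-price tie-breaking, exactly by $j'<j$. Hence:
\begin{itemize}
\item the maximum social welfare is $\sum_{j,\ell,r}\sum_{j'\le j}\sum_{i}g(c_{j'})(v_i-c_{j'})\,z_{j,\ell,r,i}$, linear in $z$;
\item the seller surplus is $\sum_{j,\ell,r}g(c_j)(v_\ell-c_j)\,x_{j,\ell,r}$, linear in $x$;
\item the buyer surplus is their difference.
\end{itemize}

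Equivalently, within your own accounting, split $v-q_{j'}$ into $v$ and $-q_{j'}$. The $v$-terms are local to each piece. The $q_{j'}$-terms, summed over all pieces at or above $q_{j'}$, give $q_{j'}F(q_{j'})=q_{j'}w_{j'}$, which can be charged to the fragment whose left endpoint is $q_{j'}$.

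Note also that you only treat one tie-breaking rule. The minimum buyer surplus, where cost $c_j$ posts $q_{j+1}$, needs the same fix with $j'<j$.
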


\paragraph{Discounted flow networks.}
Ideally, we would like to optimize an extremal market decomposition based on its reduced fragment decomposition.  However, there is yet another key issue to be handled: The above chain of arguments establish that every extremal market decomposition can be summarized by a reduced fragment decomposition, but the converse may not hold.  That is, not every ``formal'' reduced fragment decomposition (as defined by probabilities on reduced fragments and how the probability on each reduced fragment is further distributed) corresponds to an extremal market decomposition.  In order to optimize {\em only} on extremal market decompositions, we need to tell which formal reduced fragment decompositions are ``feasible''.  Below we identify key constraints that are each necessary for feasibility, and then we argue constructively that these constraints together are also sufficient.

First, recall that a reduced fragment decomposition consists of two components: the probability on each reduced fragment, and how it is distributed into each element in the support of the reduced fragment.  Because of the way in which fragments are grouped, the latter ``distribution'' should also be stochastically dominated by the undominated fragment (properly scaled), because it is the weighted average of markets each dominated by the same undominated fragment.  This is a necessary condition of feasibility primarily concerning the second component of reduced fragment decompositions, which we call {\em proper domination}.  Next, we derive another condition concerning the first component of reduced fragment decompositions.

We observe that neighboring pieces in any extremal market are closely related in terms of the ``running tail probability'', i.e., the total probability on values larger than or equal to the smallest element in the support of a piece.  Intuitively, one could view an extremal market as a flow of probability from lower values / pieces to higher ones.  Whenever the flow passes through a piece, it leaves some probability within the piece, and carries the remaining probability onward.  We call the fraction of the probability carried onward the {\em discount factor}.  We show that the discount factor of a piece is fully determined by the undominated fragment that captures the piece in the reduced fragment decomposition.  Accordingly, we must be able to fit any feasible reduced fragment decomposition into a {\em discounted flow network}, where two reduced fragments are adjacent iff there exists an extremal markets in which the two reduced fragments are neighboring.  Moreover, each edge going out of a reduced fragment is associated with the discount factor determined by this fragment.  The fact that a feasible reduced fragment decomposition comes from an extremal market decomposition now translates into {\em flow conservation}, i.e., the total flow into a reduced fragment after discounting must be equal to the total flow out of the same fragment, before discounting.  In other words, a formal reduced fragment decomposition is feasible, only if it can be written as a flow satisfying proper domination, flow conservation, and other regularity constraints.

We then prove constructively that these condition are sufficient for feasibility.  We provide an iterative depth-first search procedure (similar to the Ford-Fulkerson algorithm for max-flow) that decomposes any flow satisfying these conditions into a segmentation (not necessarily an extremal market decomposition), which achieves precisely the same seller surplus and buyer surplus.  Crucially, proper domination ensures that each segment that we construct gives the right seller surplus and the right buyer surplus, and flow conservation ensures that when the algorithm terminates, we must have already exhausted the probability on every reduced fragment.

\begin{lemma}[Informal version of \Cref{lem: the existance of y} and \Cref{lem: market segmentation - x y z feasible}]
    A formal reduced fragment decomposition is feasible (i.e., corresponds to some segmentation inducing the same seller surplus and the buyer surplus) iff it fits into a reduced flow network constrained by proper domination and flow conservation.
\end{lemma}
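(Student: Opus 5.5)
The plan is to prove the two directions separately. Necessity comes from reading the constraints off an extremal market decomposition. Sufficiency comes from a constructive flow-decomposition argument.

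By the informal version of \Cref{corollary: full generality of extremal market}, we may assume the segmentation is an extremal market decomposition. For necessity, fix such a decomposition and form its fragment decomposition by scaling each extremal market's pieces by the market's weight. Group the fragments by the undominated fragment (contiguous support) that dominates them; this gives the reduced fragment decomposition. For every pair of reduced fragments that are adjacent in some extremal market, define the flow on the corresponding edge as the total weighted running tail probability that passes between the two adjacent pieces.

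Three properties then need to be checked.
\begin{itemize}
\item \emph{Proper domination.} The aggregated probability vector of a group is a nonnegative combination of markets, each stochastically dominated by the scaled undominated fragment. Stochastic domination is preserved under nonnegative combinations, so the aggregate is dominated too.
\item \emph{Flow conservation.} Within one extremal market, a piece whose undominated fragment is $\Frag$ receives some running tail mass. It keeps a fraction of that mass and passes on the rest, and the passed-on fraction is the discount factor determined by $\Frag$ alone. Hence, piece by piece, discounted inflow equals outflow. Summing over extremal markets gives conservation at every node.
\item \emph{Regularity.} The boundary constraints at the lowest and highest pieces, and the requirement that the total mass per value equals the aggregate market, follow because the segments sum to the aggregate market.
\end{itemize}
By the informal version of \Cref{lem: surplus of reduced fragment decompositions}, the surpluses are the same linear functions of these variables, which completes necessity.

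For sufficiency, start from a flow satisfying the constraints and peel off segments iteratively, in the spirit of Ford--Fulkerson.
\begin{enumerate}
\item Start at a source node carrying positive residual flow and follow positive-flow edges by depth-first search from lower pieces to higher ones. The discount structure fixes how much of the path's starting mass survives at each node.
\item Choose the path's bottleneck weight as the largest amount that keeps every edge and node residual nonnegative.
\item At each visited node, the retained mass must be assigned to concrete values in the support of the reduced fragment. Allocate it proportionally to that node's distribution vector, normalized. This assembles one segment.
\item Subtract the segment from the residual flow.
\end{enumerate}
Each step either zeros out an edge or exhausts a node, so the procedure terminates after polynomially many steps. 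Flow conservation guarantees that whenever a node still carries residual inflow it also has outflow, or is a sink able to absorb it. So the depth-first search never gets stuck, and at termination every reduced fragment's mass has been used up. It follows that the segments sum to the aggregate market.

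The main obstacle is showing that each constructed segment induces exactly the seller-optimal prices, and the tie-breaking, that the linear surplus formulas assume. The allocated mass is not an indifference fragment but only a distribution dominated by one. My first attempt would use proper domination: since the piece is stochastically dominated by the scaled undominated fragment, revenue at each price inside the piece is at most its revenue at the piece's left endpoint, with equality there. With respect to the cost assigned to that piece, the intended price therefore remains optimal, and the endpoint prices of neighboring phases remain optimal for their costs. I would verify this cost by cost across phases, using the ordering of the seller-optimal price sets from the extremal-market construction. Once optimality holds, the charging argument behind \Cref{lem: surplus of reduced fragment decompositions} gives equality of the seller surplus and the buyer surplus.
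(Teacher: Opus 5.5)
Your two directions are the paper's. Necessity is \Cref{lem: the existance of y} together with \Cref{lem: reduced fragment decompositions - proper domination}. The paper takes $\rsdf_{j,\ell,k,r}$ to be $\sum\alpha_{\op,\SuppSet}\ValDist_{\op,\SuppSet}(\val_\ell)$ over extremal markets with $(\op_j,\op_{j+1},\op_{j+2})=(\val_\ell,\val_k,\val_r)$; this is your edge flow, measured before rather than after discounting. Both conservation laws then follow from \Cref{lem: discount factor}. Sufficiency is \Cref{alg: market segmentation}, using exactly your proportional allocation $\rsdz_{j,i_j,i_{j+1},i}\,\mscdf_j/\rsdw_{j,i_j,i_{j+1}}$.

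You should state explicitly the total-mass half of proper domination, $\sum_i\rsdz_{j,\ell,r,i}=\frac{\val_r-\val_\ell}{\val_r-\cost_j}\rsdw_{j,\ell,r}$. It must be checked for necessity. In sufficiency it is what does the work:
\begin{itemize}
\item it gives $\MarketSegment^{(t)}(\val_{i_j})=\mscdf_j/\mscdf_1$ at every breakpoint (\Cref{lem: market segmentation - tail probability p_j});
\item hence it gives the tie between $\val_{i_j}$ and $\val_{i_{j+1}}$ for $\cost_j$ that the minimum-welfare accounting uses;
\item it forces $\bz=\mathbf 0$ once $\bx=\mathbf 0$.
\end{itemize}
With it, your cost-by-cost check goes through for every cost whose piece has two finite endpoints. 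The paper packages that check as tail domination of the segment by a scaled full-support extremal market, with equality at the breakpoints.

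The step that does not go through is the terminal piece of a path, whose right endpoint is $\val_{\valnum+1}=\infty$. This gap is in your sketch and equally in the paper's proof of \Cref{lem: market segment - exists segmentation preserves surplus}. For a counterexample, take $\ValSet=(1,2)$, $\CostSet=(0,1.5)$, $\costpmf(\cost_1)=\costpmf(\cost_2)=\tfrac12$, and $\AggMarket$ uniform. Set $\rsdw_{1,1,3}=1$ and $\rsdz_{1,1,3,\cdot}=(\tfrac12,\tfrac12)$, with all other variables zero. Every listed constraint holds: there is no outgoing constraint at $k=\valnum+1$, and proper domination is tight. Yet two things go wrong.
\begin{itemize}
\item The induced seller surplus is $\tfrac12$. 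This is strictly below the unsegmented value $\tfrac12+\tfrac18$, which lower-bounds every segmentation because optimal revenue is convex in the market.
\item The induced minimum buyer surplus is $-\tfrac12$.
\end{itemize}

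Two separate things break.
\begin{itemize}
\item Nothing keeps the terminal mass below $\cost_{j^{(t)}+1}$. So the constructed segment need not have $\op_j=\infty$ for $j>j^{(t)}$; here cost $1.5$ earns $0.25$ at price $2$.
\item The terminal cost's largest optimal price is a finite support point, not $\val_{\valnum+1}$. So the minimum-welfare charge, which gives that cost nothing on its own piece, is wrong even with a single cost.
\end{itemize}

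A complete argument needs two repairs. First, add a constraint such as $\rsdz_{j,\ell,\valnum+1,i}=0$ for $\val_i>\cost_{j+1}$ when $j<\costnum$. Necessity supplies this, since a market with $\op_{j+1}=\infty$ has no mass above $\cost_{j+1}$. Second, correct the terminal term in the minimum-welfare formula. The constructed segment realizes the corrected term only after its terminal piece is split further, e.g.\ by \Cref{alg:extremal market decomposition}.
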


Now we can close the entire loop: If a welfare outcome is achievable, then there is an extremal market decomposition that induces it, which means the corresponding reduced fragment decomposition fits into the discounted flow network corresponding to the market instance.  In other words, there is a feasible flow that induces this welfare outcome.  Conversely, if there is a feasible flow that induces a certain welfare outcome, then it must be achievable through segmentation, since we can construct algorithmically a segmentation that induces precisely this welfare outcome.  This gives the main result of the paper.

\begin{theorem}[Informal version of \Cref{thm: the equvilence of market segmentation and reduced fragment composition} and \Cref{corollary: polynomial-time algorithm for achievable seller buyer surplus}]
    The set of achievable outcomes of price discrimination with a Bayesian seller is precisely characterized by (a proper projection of) the polynomial-sized polytope induced by the discounted flow network constrained by proper domination and flow conservation.  As a corollary, there is a polynomial-time algorithm that computes an optimal segmentation in terms of any linear combination of the seller surplus and the buyer surplus.
\end{theorem}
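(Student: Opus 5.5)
The plan is to prove the two inclusions between the set of achievable welfare outcomes and the projection of the discounted-flow polytope, and then read off the algorithm from linear programming. Throughout I take as given the three informal results stated just before: the full generality of extremal markets (\Cref{corollary: full generality of extremal market}), linearity of surplus in the reduced fragment decomposition (\Cref{lem: surplus of reduced fragment decompositions}), and the feasibility lemmas (\Cref{lem: the existance of y}, \Cref{lem: market segmentation - x y z feasible}).

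\emph{Achievable $\subseteq$ projection.} Fix a segmentation inducing $(\SellerSurplus,\BuyerSurplus)$.
\begin{enumerate}
\item By \Cref{corollary: full generality of extremal market}, replace it by an extremal market decomposition $\ExtMarketDecomp$ with the same $(\SellerSurplus,\BuyerSurplus)$.
\item Form the fragment decomposition of $\ExtMarketDecomp$: scale the fragment decomposition of each extremal market by its weight and sum. Then group the fragments by the undominated fragment dominating them, giving a reduced fragment decomposition $(\bx,\by,\bz)$.
\item Verify the constraints.
\begin{itemize}
\item \emph{Proper domination} holds because each group's distribution is a weighted average of markets each dominated by the same undominated fragment, and dominance is preserved under mixtures.
\item \emph{Flow conservation} holds market by market. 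In a single extremal market, the running tail mass entering a piece, multiplied by that piece's discount factor, equals the mass passed to the next piece. This factor depends only on the undominated fragment, so summing over markets with weights gives conservation on the aggregated edge variables.
\item \emph{Regularity constraints} (total mass equals the aggregate market $\AggMarket$) follow because the segments sum to $\AggMarket$.
\end{itemize}
\item By \Cref{lem: surplus of reduced fragment decompositions}, the linear surplus functionals evaluated on this point return exactly $(\SellerSurplus,\BuyerSurplus)$.
\end{enumerate}

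\emph{Projection $\subseteq$ achievable.} Given any point of the polytope, invoke \Cref{lem: market segmentation - x y z feasible}. Its Ford--Fulkerson-style DFS repeatedly traces a path through the flow network, peels off a segment, and subtracts the corresponding flow. This yields a segmentation of $\AggMarket$ whose outcome equals the linear functionals at that point.

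The step I expect to need the most care is here: showing the peeled segments reproduce the prescribed surpluses. The risk is that the seller's optimal price for each cost in a peeled segment could differ from the one encoded by the fragment. This is exactly what proper domination guarantees. A market dominated by an indifference fragment keeps the fragment's left endpoint among the optimal prices, with the required tie-breaking. I would handle this by checking, piece by piece, that the peeled segment's tail at each price is at most the scaled fragment's tail, with equality at the left endpoint. The surplus charging then matches the linear formula. Termination and exhaustion of all mass come from flow conservation, since each DFS iteration saturates some edge or node.

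\emph{Algorithm.} The polytope has polynomially many variables (one per pair of endpoints and value, plus edge flows) and linear constraints. Maximizing $\lambda\SellerSurplus+\mu\BuyerSurplus$ is therefore an LP solvable in polynomial time. Running the constructive decomposition on the optimal solution outputs the optimal segmentation in polynomial time, since the DFS takes polynomially many iterations.
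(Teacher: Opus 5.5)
Your proposal is correct and follows the paper's own proof essentially step for step. In the forward direction you go through the extremal decomposition, then the fragment and reduced fragment decompositions, with $\by$ obtained by aggregating consecutive-piece weights through the discount factors $\df_{j,\ell,k}$. In the converse you use the DFS of \Cref{alg: market segmentation}, where each peeled segment is dominated by the full-support extremal market with equality at the path breakpoints, so $\val_{i_j}$ and $\val_{i_{j+1}}$ are the minimum and maximum optimal prices for $\cost_j$.

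One imprecision: buyer surplus is matched only within the interval $[\MinBuyerSurplus,\MaxBuyerSurplus]$, not exactly. So the projection must sandwich $\BuyerSurplus$ between the two linear functionals, and the LP should use $\MaxBuyerSurplus$ or $\MinBuyerSurplus$ according to the sign of its weight. Your observation that each DFS round zeroes some $\rsdw$ or $\rsdf$ variable supplies the polynomial iteration bound that the paper leaves implicit.
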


\subsection{Further Related Work}
Along the line of work initiated by \citet{BBM-15}, several results are related to ours to different extents.
Most closely related is the recent result on robust price discrimination by \citet{arieli2025robust}.
They generalize the work by \citet{BBM-15} in another direction: The seller's cost is uncertain in a worst-case sense, and the designer's goal is to minimize the regret, i.e., the suboptimality against the ideal objective value when the cost is known, in terms of the buyer surplus.
They present an upper bound where the regret is $1/e$ of the optimal buyer surplus when the cost is $0$, and show that this is tight for a binary buyer.
While conceptually related, their result is not directly comparable to ours, because they consider a worst-case problem, while we consider a Bayesian one.\footnote{
    In fact, \citet{arieli2025robust} also acknowledge that ``another natural model that might be considered is the Bayesian one.''
}
\citet{cummings2020algorithmic} study a model where the designer has only a noisy signal about the buyer's realized value, which generalizes the work by \citet{BBM-15} in yet another, mostly orthogonal direction.
Several other results also build on the work of \citet{BBM-15} and concern optimal segmentation in various settings, including: \citep{shen2018closed,cai2020third,mao2022interactive,bergemann2022calibrated,alijani2022limits,FJMM-24,ko2024optimal,bergemann2024unified,haghpanah2022limits,MSX-24,SY-25}.
The fairness aspect has also received considerable attention~\citep{kallus2021fairness,banerjee2024fair}.
These results are further away from ours, both conceptually and technically.

Market segmentation, when induced exogenously, can be viewed as a form of information design or Bayesian persuasion~\citep{BM-19, kamenica2011bayesian,kamenica2019bayesian}.
Most relevant to our study is the algorithmic aspect thereof~\citep{dughmi2016algorithmic}, which in particular concerns how (and whether it is possible) to efficiently compute optimal information structures in various settings.
Subsequent work has investigated variants of the problem, including settings with multiple agents~\citep{babichenko2017algorithmic,castiglioni2023public,haghtalab2025leakage}, approximate best response~\citep{feng2024rationality,lin_generalized_2025,yang2024computational}, multiple channels~\citep{babichenko2022multi}, combinatorial actions~\citep{fujii2022algorithmic}, etc.
Our results differ from the above in that we focus on a highly structured problem, where certain strong structural properties play a vital role and drastically simplify the algorithmic task.
Several technical ingredients of ours, e.g., the notion of (reduced) fragment decompositions and the use of discounted flow networks, are also novel in the context of Bayesian persuasion, to the best of our knowledge.

\section{Preliminary}
\label{sec:prelim}

\newcommand{\price}{p}
\newcommand{\optprice}{\price^*}

There is a monopoly seller selling a product to a continuum population of consumers, or buyers.
Buyers have $\valnum$ possible values $\ValSet \triangleq (\val_i)_{i\in[\valnum]}$ with each $\val_i \in \reals_+$, where $[\valnum] = \{1,2,\ldots,\valnum\}$ indexes the
values, and
the seller has $\costnum$ possible 
costs $\CostSet = (\cost_j)_{j\in[\costnum]}$ with $\cost_j \in \reals_+$ for the product. 
Without loss of generality, we assume that both costs and values are strictly increasing in their index: $0 < \val_1 < \cdots < \val_i < \cdots < \val_\valnum$ and $0 \le \cost_1 < \cdots < \cost_j < \cdots < \cost_{\costnum}$.
We consider a setting with a Bayesian seller in the sense that the seller's production cost follows a publicly known distribution, denoted by $\CostDist\in\Delta(\CostSet)$.
We use $\costpmf(\cost)\in [0, 1]$ to denote the probability for realizing a cost $\cost\in\CostSet$.

A market $\ValDist$ is a subdistribution over the $\valnum$ possible values $\ValSet$, with the set of all market being:\footnote{We use ``subdistribution'' here as we allow the total masses of the market $\ValDist$ to be less than $1$.} 
\[
    \Delta(\ValSet) = \left\{\valpmf \in\reals_+^{\ValSet} 
    \;\middle|\; 
    \sum\nolimits_{i\in[\valnum]} \valpmf(\val_i) \le 1 \right\}~.
\]
Thus, a market $\ValDist \in \Delta(\ValSet)$ corresponds to a demand function where the tail mass, denoted $\ValDist(\val) = \sum\nolimits_{\val_i \ge \val} \valpmf(\val_{i})$ is the demand for the product at the price $\val$.
\wtdelete{Define a cost distribution $\CostDist$ over $\costnum$ costs, with the set of all distribution being:
\[
    \Delta(\CostSet) = \left\{\costpmf \in\reals_+^{\CostSet} 
    \;\middle|\; 
    \sum\nolimits_{j\in[\costnum]} \costpmf(\cost_j)= 1 \right\}~.
\]}

\paragraph{Optimal price sets.}
Fix a market $\ValDist$.
Suppose the seller with the cost $\cost_j$, $j\in [\costnum]$ determines a price $p \ge \cost_j$ for the market $\ValDist$.
Then the seller's revenue induced by the price $p$ is $(p - \cost_j) \ValDist(p)$ and the buyer surplus induced by $p$ is $\sum\nolimits_{\val \ge p}(\val - p) \valpmf(\val)$.  
\wtedit{We say a price $\price$ is optimal for cost $\cost_j$ if the seller's revenue satisfies $(\price - \cost_j) \ValDist(\price) \ge  (\val - \cost_j) \ValDist(\val)$ for any $\val\in\ValSet$.}
\wtdelete{We assume $p^{*} \in \ValSet$; otherwise, we can always find a price in the value set $\ValSet$ such that the seller surplus is at least $(p^* - \cost_j) \ValDist(p^*)$.\wtcomment{I feel we don't need to mention this?}}
Under the market $\ValDist$, since the optimal price might not be unique, we further define the optimal price set $\Op_j(\ValDist)$ for each possible cost $\cost_j$:
\[
    \Op_j(\ValDist) 
    \triangleq 
    \argmax\nolimits_{\val\in \ValSet} \; (\val - \cost_j) \cdot \ValDist(\val)~,
\]
and the minimum optimal price for $\cost_j$
\[
    \op_j(\ValDist)
    \triangleq 
    \begin{cases}
        \min \Op_j(\ValDist) & \text{if}~ \max_\val \; (v - \cost_j) \cdot \ValDist(\val) > 0 ~;\\
        \infty & \text{otherwise}~.
    \end{cases}
\]
Let $\op(\ValDist) = (\op_1(\ValDist), \dots, \op_\costnum(\ValDist))$ denote the vector of minimum optimal prices for all seller types in $\ValDist$.\footnote{When $\max_{\val}(\val-\cost_j)\ValDist(\val)=0$, every optimal price yields zero seller surplus and buyer surplus. Then we set $\op_j(\ValDist)=\infty$.}
We here collect two properties of optimal price set that will be useful for our subsequent analysis.
\begin{lemma}[Monotonicity of Optimal Price Sets]
\label{lem:monotonicity of optimal price set -v2}
Fix a market $\ValDist$ and a cost $\cost_j$. When $\max_\val (\val-\cost_j)\cdot \ValDist(\val) > 0$, i.e., the maximum seller surplus for $\cost_j$ is strictly positive, then $\max \Op_j(\ValDist) \le \min \Op_{j+1}(\ValDist)$.  
\end{lemma}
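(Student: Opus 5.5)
We argue by contradiction, using the standard single-crossing (exchange) argument from monotone comparative statics. Let $p\in \Op_j(\ValDist)$ and $p'\in \Op_{j+1}(\ValDist)$, and suppose $p > p'$. Optimality of $p$ for cost $\cost_j$ and of $p'$ for cost $\cost_{j+1}$ gives
\begin{align*}
(p-\cost_j)\ValDist(p) &\ge (p'-\cost_j)\ValDist(p'),\\
(p'-\cost_{j+1})\ValDist(p') &\ge (p-\cost_{j+1})\ValDist(p).
\end{align*}
Adding the two inequalities cancels the terms $p\ValDist(p)$ and $p'\ValDist(p')$. What remains is
\[
(\cost_{j+1}-\cost_j)\bigl(\ValDist(p)-\ValDist(p')\bigr)\ge 0 .
\]
Since $\cost_{j+1}>\cost_j$, this forces $\ValDist(p)\ge \ValDist(p')$. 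Because tail probabilities are non-increasing and $p>p'$, we also have $\ValDist(p)\le\ValDist(p')$. Hence $\ValDist(p)=\ValDist(p')$.

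The main step is to rule out this equality case, and this is where the positivity hypothesis enters. Plug $\ValDist(p)=\ValDist(p')$ back into the first inequality. It becomes $(p-\cost_j)\ValDist(p)\ge (p'-\cost_j)\ValDist(p)$, which holds automatically, so we must instead use the reverse comparison. Since $p$ is optimal for $\cost_j$, the value $(p-\cost_j)\ValDist(p)$ equals the maximum seller surplus for $\cost_j$, which is strictly positive by assumption. Hence $\ValDist(p)>0$. Now compare $p'$ with $p$ at cost $\cost_{j+1}$:
\[
(p-\cost_{j+1})\ValDist(p) = (p'-\cost_{j+1})\ValDist(p') + (p-p')\ValDist(p) > (p'-\cost_{j+1})\ValDist(p'),
\]
using $p>p'$ and $\ValDist(p)>0$. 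This contradicts the optimality of $p'$ for $\cost_{j+1}$.

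Therefore $p\le p'$ for every $p\in\Op_j(\ValDist)$ and every $p'\in\Op_{j+1}(\ValDist)$, which is exactly $\max \Op_j(\ValDist)\le \min\Op_{j+1}(\ValDist)$. The hypothesis is genuinely needed only in the tie case. If the maximum surplus for $\cost_j$ were $0$, then $\Op_j$ could contain high prices with zero demand, and the inequality could fail. The only technical care required is to check that the maximizers range over $\ValSet$, which is finite, so both argmax sets are nonempty.
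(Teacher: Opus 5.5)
Your proof is correct and is essentially the paper's argument: both add the two optimality inequalities to get $(\cost_{j+1}-\cost_j)$ times a difference of tail probabilities, deduce the ordering of the demands, and use strict positivity of the maximal $\cost_j$-surplus only to rule out the equal-demand case. The paper works directly with $r_j=\max\Op_j(\ValDist)$ and $l_{j+1}=\min\Op_{j+1}(\ValDist)$ and splits into two cases instead of arguing by contradiction on an arbitrary pair, but the content is the same.
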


\begin{lemma}\label{lem: optimal price set - scaled F}
For any $\alpha \in [0, 1]$,
define $\alpha \ValDist$ as  $(\alpha \ValDist)(\val) = \alpha \ValDist(\val)$ for all $\val \in  \ValSet$. 
For any market $\ValDist \in \Delta(\ValSet)$ and its optimal price set $\Op_j(\ValDist)$, it holds that $\Op_j(\ValDist) = \Op_j(\alpha \ValDist)$.    
\end{lemma}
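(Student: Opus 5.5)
The statement to prove is the last one above, \Cref{lem: optimal price set - scaled F}: scaling a market by $\alpha$ does not change any optimal price set $\Op_j$. The plan is to use linearity of the tail mass in the market, so that the seller's objective for every price is multiplied by the same factor. The only subtlety is the degenerate factor $\alpha = 0$.

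\textbf{Step 1 (tails scale).} I will first check that scaling the probability mass function scales the tail function. Define $\alpha\ValDist$ by pmf $\alpha\valpmf$. Then for every $\val\in\ValSet$,
\[
(\alpha\ValDist)(\val)=\sum\nolimits_{\val_i\ge \val}\alpha\valpmf(\val_i)=\alpha\ValDist(\val).
\]
This matches the definition in the statement. Moreover, $\alpha\ValDist\in\Delta(\ValSet)$, since its total mass is $\alpha\sum_i\valpmf(\val_i)\le 1$.

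\textbf{Step 2 (objective scales).} For each cost $\cost_j$ and each candidate price $\val\in\ValSet$,
\[
(\val-\cost_j)\cdot(\alpha\ValDist)(\val)=\alpha\,(\val-\cost_j)\ValDist(\val).
\]

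\textbf{Step 3 (case $\alpha>0$).} Multiplying an objective by a positive constant preserves the order between any two prices. Hence $\val$ maximizes $(\val-\cost_j)\ValDist(\val)$ over $\ValSet$ exactly when it maximizes $\alpha(\val-\cost_j)\ValDist(\val)$ over $\ValSet$. This gives $\Op_j(\alpha\ValDist)=\Op_j(\ValDist)$.

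\textbf{Step 4 (case $\alpha=0$, the only real obstacle).} Here every price yields objective value $0$, so $\Op_j(0\cdot\ValDist)=\ValSet$. This need not equal $\Op_j(\ValDist)$, so the literal claim fails at $\alpha=0$. I would handle this by a convention: either restrict the lemma to $\alpha\in(0,1]$, or note that the null market carries zero weight in any segmentation and its optimal prices are irrelevant. The same convention is consistent with setting $\op_j=\infty$ when the maximum surplus is $0$.

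Steps 1–3 are immediate. The proof therefore reduces to the positive-scaling argument together with the remark on $\alpha=0$.
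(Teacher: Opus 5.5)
Your proof is correct and follows the same route as the paper, whose entire argument is the observation that $\argmax_{v}(v-c_j)\,\alpha F(v)=\argmax_{v}(v-c_j)F(v)$. Your caveat at $\alpha=0$ is also correct, and the paper's one-line proof glosses over it: $Q_j(0\cdot F)=\mathcal{V}$, so the equality genuinely requires $\alpha>0$ (or $F=0$), though this does no harm where the lemma is used, since only segments of positive weight matter there.
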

 

\paragraph{Seller surplus, buyer surplus, and social welfare.}
For a specific cost $\cost_j$ and 
its optimal price set $\Op_j(\ValDist)$, the seller is indifferent between all prices in $\Op_j(\ValDist)$. Then, if an arbitrary optimal price $p_j \in \Op_j(\ValDist)$ is used for each $\cost_j$, the corresponding (optimal) overall seller surplus is
\[
    \SellerSurplus(\ValDist) =  \sum\nolimits_{j\in[\costnum]} \costpmf(\cost_j) \cdot (p_j - \cost_j) \cdot \ValDist(p_j)~.
\]
As for the buyer surplus, different optimal prices for each possible cost may lead to different buyer surplus. 
Therefore, the maximum (resp.\ minimum) buyer surplus, denoted by $\MaxBuyerSurplus(\ValDist)$ (resp.\ $\MinBuyerSurplus(\ValDist)$), is derived when we choose the minimum optimal price $\op_j$ (resp.\ maximum optimal price $\opm_j \triangleq \max \Op_j(\ValDist)$) for each cost $\cost_j$:
\begin{align*}
    \text{Maximum buyer surplus:} \quad 
    \MaxBuyerSurplus(\ValDist) 
    & = \sum\nolimits_{j\in[\costnum]} \costpmf(\cost_j) \sum\nolimits_{\val \ge \op_j} (\val - \op_j) \cdot \valpmf(\val) \\
    \text{Minimum buyer surplus:} \quad 
    \MinBuyerSurplus(\ValDist) 
    & = \sum\nolimits_{j\in[\costnum]} \costpmf(\cost_j) \sum\nolimits_{\val \ge \opm_j} (\val - \opm_j) \cdot \valpmf(\val)~.
\end{align*}
Since the social welfare is the sum of the seller surplus and the buyer surplus, we can also obtain the maximum/minimum social welfare of $\ValDist$ as follows:
\begin{align*}
    \text{Maximum social welfare:} \quad 
    \MaxSocialWelfare(\ValDist) 
    & = \SellerSurplus(\ValDist) + \MaxBuyerSurplus(\ValDist)\\ 
    \text{Minimum social welfare:} \quad 
    \MinSocialWelfare(\ValDist) 
    & = \SellerSurplus(\ValDist) + \MinBuyerSurplus(\ValDist)~.
\end{align*}
\paragraph{Aggregate market and market segmentation.}

Given a market $\ValDist$, a market segmentation, denoted by $(\segWeight, \MarketSegment)_{\segi \in [\segI]}$, is a way of expressing this market as a convex combination of different market segments, where $[\segI]$ indexes the
segments, $(\segWeight)_{\segi\in[\segI]}\in\reals_+^{\segI}$ are segment weights with $\sum\nolimits_{\segi \in [\segI]} \segWeight = 1$, and each segment $\MarketSegment \in\Delta(\ValSet)$ is a market over $\ValSet$. The probability mass functions together satisfy
\begin{align*}
    \valpmf(\val) = \sum\nolimits_{\segi \in [\segI]} \segWeight \cdot \marketSegmentPmf (\val)~, \quad 
    \val\in\ValSet~.
\end{align*}
We write $\ValDist \to (\segWeight, \MarketSegment)_{\segi \in [\segI]}$ to denote such market segmentation.

Throughout the analysis, we hold a given aggregate market as fixed and identify it by $\ValDist^* \in \Delta(\ValSet)$ with the total mass normalized, i.e., $\sum\nolimits_{i\in [\valnum]} \valpmf^{*}(\val_i) = 1$.

Given a segmentation $(\segWeight, \MarketSegment)_{\segi \in [\segI]}$, any market-level functional extends to segmentations via the weighted average. 
For example, the seller's revenue under this segmentation $(\segWeight, \MarketSegment)_{\segi \in [\segI]}$ is 
\begin{align*}
    \SellerSurplus \left((\segWeight, \MarketSegment)_{\segi \in [\segI]} \right) 
    =  \sum\nolimits_{\segi \in [\segI]} \segWeight \cdot \SellerSurplus ( \MarketSegment)~.
\end{align*}
The definitions of maximum/minimum buyer surplus and the corresponding welfare objectives follow similarly.

\section{Full Generality of Extremal Market}
\label{sec:extremal}
In this section, we present our first structural result: the full generality of ``extremal markets'', which lays the foundation of all subsequent results in the paper. The extremal market here is a generalized version of the extremal market defined in \cite{BBM-15}, where the distribution of the cost consists of a single point mass. We formally define the extremal market as follows.

\begin{definition}[Extremal Markets]
\label{def: extremal markets}
    Suppose $\op$ and $\SuppSet$ satisfy: \yledit{(1) $\SuppSet = \{s_1, s_2, \dots, s_k\}$, where $s_1 < s_2 < ... < s_k$,} (2) $\{\op_j\}_j \subseteq \SuppSet \cup \{\infty\} \subseteq \{v \in \ValSet \mid v \ge \op_1\} \cup \{\infty\}$, and (3) for each $j \in [m]$, $\cost_j < \op_j \le \op_{j + 1}$.\footnote{\label{ft:q_m+1}We assume $\op_{m + 1} = \infty$ for brevity.}
    The extremal market $\ValDist_{\op, \SuppSet}$ induced by $\op$ and $S$ is the unique market satisfying the following conditions:\footnote{Uppercase letters generally refer to CDFs, defined as the {\em tail} probability at every $\val_i$, i.e., $\ValDist_{\op, \SuppSet}(\val_i) = \Pr_{v \sim \ValDist_{\op, \SuppSet}}[v \ge \val_i]$.}
    \begin{itemize}
        \item For each $i \in [k]$ and $j \in [m]$ where $\op_j \le s_i < s_{i+1} \le \op_{j + 1}$, $\ValDist_{\op, \SuppSet}(s_i) \cdot (s_i - \cost_j) = \ValDist_{\op, \SuppSet}(s_{i + 1}) \cdot (s_{i + 1} - \cost_j)$;
        \item For each $i \in [n]$ where $\val_i \notin S$, $\ValDist_{\op, \SuppSet}(\val_i) = \ValDist_{\op, \SuppSet}(v_{i + 1})$.\footnote{We assume $\ValDist_{\op, \SuppSet}(v_{n + 1}) = 0$ for brevity.}
    \end{itemize}
\end{definition}
Given the definition of extremal markets, the natural question is how we construct it given fixed a pair of $\op$ and $\SuppSet$. 
We observe that $\ValDist_{\op, \SuppSet}$ is the unique market such that, for each $j \in [m]$, all prices in the support $S$ between $\op_j$ and $\op_{j + 1}$ are equally good for cost $\cost_j$.
Then the extremal market can be constructed naturally through the following procedure:
\begin{itemize}
    \item First, tentatively put probability $1$ at the rightmost position $\supp_k$ of the market.
    \item For $i = k - 1$ to $1$, determine the probability at $\supp_i$ by letting $\supp_i$ and $\supp_{i + 1}$ give the same seller surplus for cost $\cost_j$, where $j$ satisfies $\op_j \le \supp_i < \op_{j + 1}$.
    \item Normalize the entire market so the total probability is $1$.
\end{itemize}
The definition and construction procedure of the extremal market $\ValDist_{\op, \SuppSet}$ implies that the optimal price set $\Op_j(\ValDist_{\op, \SuppSet})$ and the minimum optimal price $\op_j(\ValDist_{\op, \SuppSet})$ with $j \in [m]$ are formed as follows.

\begin{lemma}[Optimal Prices in Extremal Markets]
\label{lem: optimal price set of extremal market} 
\label{lem:optimal_price_for_extremal_market}
    Fix $\op$, $\SuppSet$ and the corresponding extremal market $\ValDist_{\op, \SuppSet}$. For each $j \in [m]$, (1) the minimum optimal price $\op_j(\ValDist_{\op, \SuppSet}) = \op_j$, and (2) when $\op_j \ne \infty$ (i.e., $\max_\val (\val-\cost_j)\cdot \ValDist_{\op, \SuppSet}(\val) \ne 0$), $\Op_j(\ValDist_{\op, \SuppSet}) = [\op_j, \op_{j+1}] \cap \SuppSet$. 
\end{lemma}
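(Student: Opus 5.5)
The plan is to compute the revenue function $R_j(\val) \triangleq (\val - \cost_j)\,\ValDist_{\op,\SuppSet}(\val)$ for a fixed $j$ with $\op_j \neq \infty$ directly from the defining conditions, and to show it behaves as claimed on each of three regions. Write $\SuppSet = \{\supp_1 < \dots < \supp_k\}$.

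\textbf{Step 1: reduction to support points.} The tail $\ValDist_{\op,\SuppSet}$ is constant between consecutive support points: it equals $\ValDist_{\op,\SuppSet}(\supp_{i+1})$ on $(\supp_i, \supp_{i+1}]$ and $0$ above $\supp_k$. Hence any $\val \notin \SuppSet$ with $\supp_i < \val < \supp_{i+1}$ satisfies $R_j(\val) < R_j(\supp_{i+1})$, since $\val - \cost_j < \supp_{i+1} - \cost_j$. This holds whenever the tail is positive and $\supp_{i+1} > \cost_j$.
\begin{itemize}
\item For $\val < \supp_1 = \op_1$, the tail is $1$, so $R_j(\val) < R_j(\supp_1)$.
\item For $\val > \supp_k$, $R_j(\val) = 0$.
\end{itemize}
So optimal prices lie in $\SuppSet$ provided the optimum is positive.

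\textbf{Step 2: the support points of cost $j$.} Let $T_\ell = \{\supp \in \SuppSet : \op_\ell \le \supp \le \op_{\ell+1}\}$. By the first defining condition, the product $\ValDist(\supp_i)(\supp_i - \cost_j)$ is constant on the consecutive points of $T_j$. Its value is positive because $\op_j > \cost_j$ and all tails on $\SuppSet$ are positive. This gives the value $R^* > 0$ attained on $[\op_j, \op_{j+1}] \cap \SuppSet$. It remains to show every support point outside this interval gives strictly less.

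\textbf{Step 3: single-crossing comparison (the main step).} The key fact is this. If consecutive support points $s < s'$ satisfy $\ValDist(s)(s - \cost_\ell) = \ValDist(s')(s' - \cost_\ell)$ with both factors positive, then the ratio $\ValDist(s)/\ValDist(s') = (s' - \cost_\ell)/(s - \cost_\ell)$ is the same as for cost $\ell$, and:
\begin{itemize}
\item for a cost $\cost_j < \cost_\ell$, we have $(s' - \cost_j)/(s - \cost_j) < (s' - \cost_\ell)/(s - \cost_\ell)$, so $R_j(s) > R_j(s')$;
\item for a cost $\cost_j > \cost_\ell$ (with $s > \cost_j$), the inequality reverses, so $R_j(s) < R_j(s')$.
\end{itemize}

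\emph{Above $\op_{j+1}$.} Each consecutive pair lies in some $T_\ell$ with $\ell > j$, so $R_j$ strictly decreases step by step. Thus every support point above $\op_{j+1}$ gives less than $R_j(\op_{j+1}) = R^*$.

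\emph{Below $\op_j$.} Consecutive pairs lie in $T_\ell$ with $\ell < j$, so $R_j$ strictly increases toward $\op_j$ as long as $s > \cost_j$. Points with $s \le \cost_j$ give $R_j \le 0 < R^*$. At the pair crossing $\cost_j$ itself, the lower point has nonpositive revenue, so it is handled by that case.

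\emph{Equal minimum optimal prices.} When several $\op_\ell$ coincide, the sets $T_\ell$ degenerate to single points and impose no constraint, so the argument is unchanged. If $\op_{j+1} = \infty$, the region above is empty apart from the zero-revenue prices beyond $\supp_k$.

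\textbf{Step 4: conclusion.} These steps give $\Op_j = [\op_j, \op_{j+1}] \cap \SuppSet$ with minimum element $\op_j$, proving (1) and (2) when $\op_j \neq \infty$. When $\op_j = \infty$, condition (3) forces $\op_\ell = \infty$ for all $\ell \ge j$. Then the top support pairs are governed by costs $\ell < j$, and Step 3 plus Step 1 show $R_j \le 0$ everywhere: at $\supp_k$ we need $\supp_k \le \cost_j$. I expect this to be the delicate point, and I would verify it from the definitional constraint that each finite $\op_\ell$ lies in $\SuppSet$ and the convention for $\op_j$. The maximum is then $0$, so by definition $\op_j(\ValDist_{\op,\SuppSet}) = \infty$.

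The main obstacle I anticipate is the bookkeeping in Step 3, specifically the sign issues when $s \le \cost_j$ and the degenerate coincident $\op$'s. The monotonicity lemma (Lemma~\ref{lem:monotonicity of optimal price set -v2}) could be used as a sanity check on the ordering.
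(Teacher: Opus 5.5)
\textbf{The case $q_j \neq \infty$.} Your argument here is correct and is essentially the paper's. The paper also reduces to support points. It then handles prices above $q_{j+1}$ by forward induction and prices below $q_j$ by backward induction, each time using that $(v-c)/(q-c)$ is increasing in $c$. You make the same single-crossing comparison pair by pair instead of piece by piece.

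\textbf{The gap: the case $q_j = \infty$ (Step 4).} This is the step you flagged. You need $s_k \le c_j$ and plan to derive it from the definitional constraints (finite $q_\ell \in S$, $c_\ell < q_\ell \le q_{\ell+1}$). That cannot work, because it does not follow from them. Take $\mathcal{V} = \{1, 10\}$, $c_1 = 0$, $c_2 = 5$, $q = (1, \infty)$ and $S = \{1, 10\}$. All three conditions of Definition~\ref{def: extremal markets} hold. The single defining equation $F(1)\cdot 1 = F(10)\cdot 10$ gives $F(1) = 1$ and $F(10) = 1/10$. Cost $c_2$ then earns $(10-5)\cdot\tfrac{1}{10} = \tfrac12 > 0$ at price $10$, so $q_2(F_{q,S}) = 10 \neq \infty = q_2$. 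Hence claim (1), and the parenthetical ``i.e.'' in (2), is false for indices with $q_j = \infty$ unless you add a hypothesis: $S$ contains no value above $c_j$ whenever $q_j = \infty$. Equivalently, $s_k \le c_{j_0}$ for the smallest $j_0$ with $q_{j_0} = \infty$.

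\textbf{How the paper stands, and the fix.} The paper's proof has the same hole. In this case it simply asserts that $v_i \notin S$ for all $v_i > c_j$. The missing hypothesis does hold for every extremal market the paper actually uses, namely those produced by \textsc{DecomposeIntoExtremalMarkets}. There $q = q(F)$ and $S \subseteq \operatorname{supp}(F)$, and $q_j(F) = \infty$ means $\max_v (v - c_j)F(v) \le 0$, which forces $F(v) = 0$ for every $v > c_j$. Once you state this hypothesis explicitly, your Step 4 is immediate: every price above $c_j$ has zero tail and every other price has nonpositive margin. The rest of your proof stands as written.
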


After defining and characterizing our general version of extremal market, we are ready to present our main result in this section, which connects the extremal markets with general markets: Any market $\ValDist$ supported on $\ValSet$ can be written as a convex combination of $O(n)$ extremal markets with both the seller surplus and the buyer surplus preserved,\footnote{In this paper, the buyer surplus of a market preserved by a market segmentation indicates that the buyer surplus of the market $\ValDist$ is between the minimum buyer surplus and the maximum buyer surplus of the segmentation. In particular, the buyer surplus in $\ValDist$ can be derived by a convex combination of different tiebreaking choices in the segmentation.} where $n$ is the cardinality of value set $\ValSet$. 

\begin{theorem}\label{thm:convex_combination_of_extremal_market}
    Fix a market $\ValDist$ with seller surplus $\SellerSurplus$ and buyer surplus $\BuyerSurplus$. Then there exists a segmentation of $\ValDist$ into $O(n)$ extremal markets, denoted $\ValDist = \sum_{\op,\SuppSet} \alpha_{\op,\SuppSet}\,\ValDist_{\op,\SuppSet}$, such that the seller surplus 
    \wtedit{from this market segmentation}
    is $\SellerSurplus$, the minimum buyer surplus is at most $\BuyerSurplus$ and the maximum buyer surplus is at least $\BuyerSurplus$. In addition, there is a polynomial-time algorithm that computes such a segmentation of $\ValDist$.
\end{theorem}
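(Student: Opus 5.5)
We construct the segmentation greedily, generalizing the BBM peeling argument. The invariant is that the current remainder market $\ValDist'$, initially $\ValDist$, satisfies $\Op_j(\ValDist') \supseteq \Op_j(\ValDist)$ for every $j$ with $\op_j(\ValDist)<\infty$. The remainder may gain optimal prices but never loses any. If every extremal market we split off has $\op_j$ equal to $\op_j(\ValDist)$ and contains $\Op_j(\ValDist)$ inside $\Op_j$ of the piece, then for each cost $\cost_j$ every price in $\Op_j(\ValDist)$ is optimal in every segment simultaneously. This gives three conclusions:
\begin{itemize}
\item The seller surplus, being linear in the segment for a fixed price, equals $\SellerSurplus(\ValDist)$.
\item Choosing $\min \Op_j(\ValDist)$ in every segment reproduces $\MaxBuyerSurplus(\ValDist)$, and choosing $\max \Op_j(\ValDist)$ reproduces $\MinBuyerSurplus(\ValDist)$.
\item The segmentation's maximum buyer surplus is at least $\MaxBuyerSurplus(\ValDist)\ge\BuyerSurplus$, and its minimum buyer surplus is at most $\MinBuyerSurplus(\ValDist)\le\BuyerSurplus$.
\end{itemize}
We can prepare the argument by dropping, for costs with $\op_j(\ValDist)=\infty$, the requirement on prices, since those costs contribute zero surplus. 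By \Cref{lem:monotonicity of optimal price set -v2} the nonempty optimal sets are ordered, so the vector $\op$ has the shape required by \Cref{def: extremal markets}.

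In each iteration, take $\SuppSet=\operatorname{supp}(\ValDist')$ and let $\op$ be the vector of minimum optimal prices $\op(\ValDist')$. Values below $\op_1(\ValDist')$ need separate handling: for such a value $v$, $(v-\cost_1)\ValDist'(v)$ is dominated. We treat them by first splitting off point masses or by letting $\SuppSet$ include them under cost $\cost_1$'s indifference. I would try the former: a point mass at $v<\op_1$ gives every cost surplus at most that at $\op_1$ only if prices are compared within the segment. So instead we merge them into the lowest piece and verify the resulting extremal market's minimum price. This is a detail to settle.

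Build the extremal market $\ValDist_{\op,\SuppSet}$ from high values to low as in the construction procedure. Then take the largest $\alpha$ with $\ValDist'-\alpha\ValDist_{\op,\SuppSet}\ge 0$ pointwise on masses. By \Cref{lem: optimal price set of extremal market}, in $\ValDist_{\op,\SuppSet}$ all support prices in $[\op_j,\op_{j+1}]$ are optimal for $\cost_j$. Subtracting it therefore changes all revenues $(v-\cost_j)\ValDist(v)$ for $v\in[\op_j,\op_{j+1}]\cap\SuppSet$ by the same amount, and changes other prices by no more in the relevant direction. Hence optimal prices of $\ValDist'$ stay optimal, and by \Cref{lem: optimal price set - scaled F} scaling does not matter. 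Checking this monotonicity requires showing that for $v$ outside $[\op_j,\op_{j+1}]$ the extremal market's $\cost_j$-revenue at $v$ is at most its optimum. That holds by the optimal set characterization in \Cref{lem: optimal price set of extremal market}.

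Termination is the main obstacle. Each step zeroes out either some support point of the remainder or, when the limiting constraint comes from optimality, makes some new price optimal for some cost. The support shrinks monotonically, at most $n$ times. The sets $\Op_j$ only expand, and by monotonicity their union structure is a chain of at most $n+m$ intervals of the support, so the total number of expansion events is $O(n)$ once each value joins at most a bounded number of sets: at most two boundary sharings. Making this counting rigorous, together with the fact that the greedy $\alpha$ is always positive, is where I expect the real work. Each iteration solves linear equations and a ratio test, so the algorithm is polynomial.
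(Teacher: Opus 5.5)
Your invariant, $\Op_j(\ValDist)\subseteq\Op_j(\text{piece})\cap\Op_j(\text{remainder})$, is the same as the paper's, and so is the way you read off the seller surplus and the two tie-broken buyer surpluses from it. The gap is that the peeling step you actually specify does not maintain this invariant, and your justification points the wrong way.

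By \Cref{lem: optimal price set of extremal market}, for $v\notin[\op_j,\op_{j+1}]$ the $\cost_j$-revenue of $\ValDist_{\op,\SuppSet}$ at $v$ is strictly below its optimum. So subtracting $\alpha\ValDist_{\op,\SuppSet}$ lowers the revenue at $v$ by strictly \emph{less} than at the old optimal prices. Non-optimal prices therefore gain ground, and when $\alpha$ is fixed only by nonnegativity they can overtake. This cannot happen in BBM, where every support price is optimal in the extremal market.

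A concrete failure: take $\cost_1=0$, $\cost_2=3/2$, and masses $(0.56,0.29,0.15)$ on values $(1,2,3)$.
\begin{itemize}
\item Then $\op=(1,3)$, and $\ValDist_{\op,\SuppSet}$ has masses $(1/2,1/6,1/3)$.
\item The runout weight is $0.45$, leaving the remainder $(0.335,0.215,0)$.
\item In the remainder, cost $\cost_2$ earns $0$ at price $3$ but $0.1075$ at price $2$.
\item The $\cost_2$ part of the seller surplus becomes at least $0.45\cdot 0.5+0.1075=0.3325>0.225$, and further splitting of the remainder can only increase it.
\end{itemize}
Your termination paragraph mentions a limiting constraint ``from optimality'', but no such constraint appears in your construction. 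Under the runout-only rule, the vector $\op$ would never change at all.

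The missing ingredient is the paper's second stopping rule. Take $\alpha=\min\{\alpha_{\text{runout}},\alpha_{\text{shift}}\}$, where $\alpha_{\text{shift}}$ is the first $\alpha$ at which some minimum optimal price of $\ValDist'-\alpha\ValDist_{\op,\SuppSet}$ changes. It is positive because every price below $\op_j$ starts strictly behind. \Cref{lem: extremal market decomposition - optimal price set expansion} then shows that the old optimal sets survive, as follows.
\begin{itemize}
\item Prices in $[\op_j,\op_{j+1}]\cap\SuppSet$ lose revenue equally.
\item A price $s>\op_{j+1}$ can never overtake $\op_j$ for $\cost_j$. Both $\op_{j+1}$ and $s$ are $\cost_{j+1}$-optimal in the extremal market, so their $\cost_{j+1}$-revenues drop equally and $\op_{j+1}$ stays weakly ahead. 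Single crossing between $\cost_{j+1}$ and $\cost_j$ turns this into a strict win for $\cost_j$, and the argument chains upward through higher costs.
\item Hence the only threat comes from prices below $\op_j$, which is exactly what $\alpha_{\text{shift}}$ detects. Since revenue gaps are affine in $\alpha$, the old optima are still optimal at the tie.
\end{itemize}

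This directional fact also supplies the count your termination sketch leaves unproven. Each shift moves $\op_j$ strictly down. By expansion of $\Op_{j-1}$ and \Cref{lem:monotonicity of optimal price set -v2}, $\op_j$ never falls below $\max\Op_{j-1}(\ValDist)$. So there are at most $n$ shifts in total, besides at most $n$ runouts.

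It also settles your open point about values below $\op_1$. The paper restricts $\SuppSet$ to $[\op_1,\infty)$, leaves those values in the remainder, and lets the shift rule lower $\op_1$ when they become competitive. Splitting them off as point masses, as you first suggest, would in general raise the seller surplus.
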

We design a decomposition procedure (see \Cref{alg:extremal market decomposition}) that can preserve the seller surplus and buyer surplus given a market $\ValDist$ to show the existence of such a segmentation. The key idea is that, through the procedure, we need to guarantee that the monopoly price of each cost $\cost_j$ in the target market $\ValDist$ is still optimal in the ``remainder'' of the market after each iteration to preserve the seller surplus. By ensuring that the optimal price set weakly expands after each iteration, we can further preserve the buyer surplus. 
\begin{lemma}
\label{lem: extremal market decomposition - optimal price set expansion}
    After each iteration of the while-loop in \Cref{alg:extremal market decomposition}, the residual market satisfies that for each $j \in [m]$
    \[
        \Op_j(\ValDist) \subseteq \Op_j(\alpha_{\op, \SuppSet} \cdot \ValDist_{\op, \SuppSet}) \cap \Op_j(\ValDist - \alpha_{\op, \SuppSet} \cdot \ValDist_{\op, \SuppSet})~.
    \]
\end{lemma}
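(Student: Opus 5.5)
The plan is to use the linearity of the seller's revenue in the market. For a cost $\cost_j$ and a (sub)market $H$, write $R_j(H, \val) \triangleq (\val - \cost_j) H(\val)$. This is linear in $H$, so for the extremal market $E \triangleq \ValDist_{\op,\SuppSet}$ split off in the current iteration, any $\price \in \Op_j(\ValDist)$ and any $\val \in \ValSet$,
\[
R_j(\ValDist - \alpha_{\op,\SuppSet} E, \price) - R_j(\ValDist - \alpha_{\op,\SuppSet} E, \val) = \bigl[R_j(\ValDist,\price) - R_j(\ValDist,\val)\bigr] - \alpha_{\op,\SuppSet}\bigl[R_j(E,\price) - R_j(E,\val)\bigr].
\]
The inclusion $\Op_j(\ValDist) \subseteq \Op_j(\alpha_{\op,\SuppSet} E)$ should reduce to $\Op_j(\ValDist) \subseteq \Op_j(E)$ by \Cref{lem: optimal price set - scaled F}. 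The inclusion into $\Op_j$ of the residual amounts to showing that the right-hand side above is nonnegative for every $\val$.

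For the first inclusion I will use how the while-loop chooses $(\op,\SuppSet)$. As I read the algorithm, $\op$ is the vector of minimum optimal prices $\op(\ValDist)$ of the current residual market, and $\SuppSet$ is a subset of the support of $\ValDist$ above $\op_1$ that contains every $\Op_j(\ValDist)$. \Cref{lem:monotonicity of optimal price set -v2} gives $\max \Op_j(\ValDist) \le \min \Op_{j+1}(\ValDist) = \op_{j+1}$. Hence $\Op_j(\ValDist) \subseteq [\op_j, \op_{j+1}] \cap \SuppSet$, and \Cref{lem: optimal price set of extremal market} says this set is exactly $\Op_j(E)$. The case $\op_j = \infty$ is degenerate: every price gives zero surplus, and I will check it separately.

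For the residual inclusion, the second bracket is $\ge 0$ by the first inclusion, so nonnegativity depends on $\alpha_{\op,\SuppSet}$ not being too large. I split the competing prices $\val$ into three cases.

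\begin{itemize}
\item If $\val \in [\op_j,\op_{j+1}] \cap \SuppSet$, then $E$ is indifferent, so the second bracket is $0$ and the first is $\ge 0$.
\item If $\val$ lies strictly inside a gap of the support of the residual market, its tail equals that at the next support point above. A higher price with the same demand is weakly better, so it suffices to compare with support points of the residual. Note that the weight $\alpha_{\op,\SuppSet}$ is chosen maximal subject to nonnegativity, so the residual support only shrinks.
\item If $\val$ is a support point in a different piece, $[\op_{j'}, \op_{j'+1}]$ with $j' \neq j$, or below $\op_1$, then $E$ strictly prefers $\price$ and the first bracket may also be strictly positive.
\end{itemize}

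In the third case I would show that the way the algorithm caps $\alpha_{\op,\SuppSet}$ keeps $\Op_j(\ValDist)$ optimal for every $j$. I expect the cap to be the minimum over the nonnegativity constraints and over the linear inequalities ``$R_j(\cdot,\price) \ge R_j(\cdot,\val)$'' for such $\val$. Failing that, I would derive it from the piecewise construction of $E$. Going from high to low values, each piece of $E$ equalizes surplus for the cost of that piece. So $R_j(E,\cdot)$ drops on the pieces below $\op_j$ no faster than the first bracket permits, whenever $\alpha_{\op,\SuppSet}$ does not exceed the residual's mass at the boundary points $\op_{j'}$.

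The main obstacle is this cross-piece comparison. It requires that simultaneously for all costs, the rate at which $E$ removes tail mass around each boundary $\op_{j'}$ is dominated by what $\ValDist$ has there. I would first try to argue it by induction over the pieces from right to left, using the monotonicity lemma to order the boundaries and the equal-surplus identity on each piece to telescope the revenue differences.
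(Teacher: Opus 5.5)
Your first inclusion and your within-piece case are correct and match the paper; the paper's key inequality is your identity with the second bracket vanishing on $[\op_j,\op_{j+1}]\cap\SuppSet$. The gap is the third case, which carries the entire content of the lemma, and neither route you sketch closes it.

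First, $\alpha_{\op,\SuppSet}$ is not ``maximal subject to nonnegativity.'' It is $\min\{\alpha_{\text{runout}},\alpha_{\text{shift}}\}$, where $\alpha_{\text{shift}}$ is the first $\alpha$ at which the vector $\op(\cdot)$ of \emph{minimum} optimal prices changes. It is not the family of linear constraints you anticipate. This cap is indispensable. Take one cost $\cost_1=0$ with $f(1)=0.4$ and $f(2)=0.6$. Then $\Op_1(\ValDist)=\{2\}$ and $\ValDist_{\op,\SuppSet}$ is a point mass at $2$. At $\alpha_{\text{runout}}=0.6$ the residual is a point mass at $1$, whose only optimal price is $1$. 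So your fallback, which controls $\alpha$ only through the residual's mass at the boundaries $\op_{j'}$ (a nonnegativity-type bound), cannot succeed.

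The missing idea is that the two directions behave differently.
\begin{itemize}
\item Deviations above $\op_{j+1}$ need no cap. Take $s\in(\op_{j+1},\op_{j+2}]\cap\SuppSet$. Your identity at index $j+1$ shows the residual weakly prefers $\op_{j+1}$ to $s$ under $\cost_{j+1}$. Since $(s-c)/(\op_{j+1}-c)$ is increasing in $c$, the residual then strictly prefers $\op_{j+1}$ to $s$ under $\cost_j$ (whenever its tail at $s$ is positive). Under $\cost_j$ it also weakly prefers $\op_j$ to $\op_{j+1}$. Induct over the higher pieces.
\item Deviations to $\val<\op_j$ are exactly what $\alpha_{\text{shift}}$ controls. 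The gap $R_j(\cdot,\op_j)-R_j(\cdot,\val)$ is affine in $\alpha$. It is strictly positive at $\alpha=0$ because $\op_j$ is the minimum optimal price, so it must reach $0$ before turning negative. At that point $\op(\cdot)$ has changed: either $\val$ ties with an optimal $\op_j$, or $\op_j$ is no longer optimal. Hence $\alpha_{\text{shift}}$ is no larger than that point.
\end{itemize}
This is the paper's argument.

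You can also bypass the cross-piece analysis entirely. For every $\alpha'<\alpha_{\text{shift}}$ we have $\op_j(\ValDist-\alpha'\ValDist_{\op,\SuppSet})=\op_j$, so $\op_j$ is optimal in that residual. Finitely many affine inequalities in $\alpha'$ that hold on $[0,\alpha_{\text{shift}})$ also hold at $\alpha_{\op,\SuppSet}\le\alpha_{\text{shift}}$. Finally, each $\price\in\Op_j(\ValDist)$ ties with $\op_j$ in the residual, because both brackets of your identity vanish.
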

At the same time, the procedure has at most $2n$ iterations to make sure the decomposition procedure terminates properly and limit the number of extremal markets that we get.
\begin{lemma}
\label{lem: extremal market decomposition - decomposition time}
    In \Cref{alg:extremal market decomposition}, for any market $\ValDist$, it takes at most $2n$ iterations to decompose it into extremal markets. 
\end{lemma}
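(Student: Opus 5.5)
We have not seen the algorithm itself, so the plan has to rest on the description given above. In each iteration the procedure peels off $\alpha_{\op,\SuppSet}\ValDist_{\op,\SuppSet}$, where $\op$ is the current vector of minimum optimal prices of the residual market and $\SuppSet$ is the residual support. The weight $\alpha$ is taken as large as possible while keeping the residual nonnegative and keeping every current optimal price optimal for the residual.

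The plan is to use a potential function that strictly increases at every iteration and can increase only $O(n)$ times. The natural candidate is the pair (residual support, optimal price sets). By \Cref{lem: extremal market decomposition - optimal price set expansion}, each $\Op_j$ of the residual only weakly expands from one iteration to the next. Also, removing mass from the residual never enlarges its support.

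Step 1 is to examine what stops the choice of $\alpha$ when it is maximized. Either (a) some value $\val \in \SuppSet$ has its residual mass driven to zero, so $|\SuppSet|$ drops by one; or (b) an indifference constraint becomes tight, i.e.\ some price $\val \in \SuppSet$ with $\val \notin \Op_j$ becomes optimal for some cost $\cost_j$. In case (b), $\Op_j$ strictly gains an element. The justification is that the optimality of each residual price is governed by finitely many linear inequalities in $\alpha$, and these stay satisfied on an interval $[0,\alpha^*]$. At the endpoint $\alpha^*$, either a nonnegativity constraint or an optimality inequality is tight, and a tight optimality inequality means the new price joins $\Op_j$.

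Step 2 is to bound the number of events of each kind. Events of type (a) happen at most $n$ times because the support only shrinks. For type (b), I would argue that the monotonicity of \Cref{lem:monotonicity of optimal price set -v2} makes the optimal price sets essentially interlace as consecutive blocks of the support, with $\max \Op_j \le \min \Op_{j+1}$. A price added to some $\Op_j$ either fills in a block or merges a boundary between blocks. Charging each such event to a distinct value of $\ValSet$ (or to a boundary between consecutive values) should give at most $n$ additions overall. A price enters an optimal set only once, because sets never shrink except when support points vanish, and that case is already counted under (a). Together the two counts give $2n$.

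The main obstacle is the charging argument in Step 2. A single value $\val$ could in principle be counted several times: it might be added to $\Op_j$ for several different $j$, or it might leave $\Op_j$ when it leaves the support and later need re-counting. To handle this I would first try the potential
\[
\Phi = |\SuppSet| + \Bigl|\SuppSet \setminus \bigcup\nolimits_j \Op_j\Bigr| \le 2n,
\]
and show that it strictly decreases in every iteration. An event of type (a) lowers $|\SuppSet|$. An event of type (b) makes a previously non-optimal support point optimal, which lowers the second term. I would also need to check that no iteration increases either term. This holds because the sets $\Op_j$ only grow and $\SuppSet$ only shrinks, so no point can re-enter $\SuppSet \setminus \bigcup_j \Op_j$. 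If a point is optimal for several costs at once, that does not matter, since $\Phi$ only uses the union $\bigcup_j \Op_j$.

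The delicate point is then to confirm that in case (b) the newly optimal price was not already in $\bigcup_j \Op_j$ for some other $j$. If it could have been, I would refine the potential to count pairs $(\val, j)$, restricted by the interlacing structure so that the total stays $O(n)$ rather than $O(nm)$.
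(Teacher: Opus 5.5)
Your potential $\Phi=|S|+\bigl|S\setminus\bigcup_j Q_j\bigr|$ does not strictly decrease. The case you flag as delicate actually occurs: when $q_j$ shifts, the new minimum optimal price can be exactly $\max Q_{j-1}$, which is already in $\bigcup_{j'}Q_{j'}$.

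Take $\mathcal{V}=\{1,2\}$, $c_1=0$, $c_2=1/2$, and $F$ with mass $0.6$ at $1$ and $0.4$ at $2$. Then $Q_1(F)=\{1\}$ and $Q_2(F)=\{2\}$, so the second term of $\Phi$ is already $0$. The extremal market for $q=(1,2)$ puts mass $1/2$ on each value. After removing $\alpha$ copies of it, $1$ stays the unique optimum for $c_1$. Meanwhile $c_2$ becomes indifferent between $1$ and $2$ at $\alpha=0.4$, before the runout point $\alpha=0.8$. So this iteration ends by a shift with $S$ and $\bigcup_j Q_j$ unchanged, and $\Phi=2$ before and after.

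Your fallback of counting pairs $(v,j)$ is only sketched. It leaves open what happens to those pairs when a cost's surplus drops to zero, and as stated it aims at $O(n)$ rather than the claimed $2n$.

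The missing idea, which the paper uses, is to follow each minimum optimal price $q_j$ separately instead of the union of the $Q_j$.
\begin{itemize}
\item By \Cref{lem: extremal market decomposition - optimal price set expansion}, $Q_j$ only grows, so $q_j=\min Q_j$ can only move down. Hence every shift iteration strictly decreases some $q_j$.
\item At every stage $q_j\ge \max Q_{j-1}(\text{current})\ge \max Q_{j-1}(F)$, by \Cref{lem:monotonicity of optimal price set -v2} and the expansion of $Q_{j-1}$. Here $F$ is the \emph{original} market.
\item So after its first shift, $q_j$ stays in the window $[\max Q_{j-1}(F),\min Q_j(F))\cap\mathcal{V}$. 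It can therefore shift at most as many times as this window has elements.
\item These windows are pairwise disjoint across $j$, because $\min Q_j(F)\le\max Q_j(F)$. This gives at most $n$ shift iterations, plus at most $n$ runout iterations.
\end{itemize}
The point $\max Q_{j-1}(F)$ that broke your potential lies only in the window of $j$, so it is counted exactly once.

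If you prefer to keep a potential-function proof, add to $\Phi$ the number of indices $j$ with $\max Q_{j-1}<\min Q_j$. A shift then either brings a non-optimal support point into $\bigcup_j Q_j$ or closes one such gap. The augmented potential is still at most $2|S|\le 2n$.
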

\begin{algorithm}[t]
\caption{\textsc{DecomposeIntoExtremalMarkets}$(\ValDist)$}
\begin{algorithmic}[1]
\label{alg:extremal market decomposition}
\REQUIRE (Possibly unnormalized) market $\ValDist$ with its mass function $\valpmf(\cdot)$
\ENSURE A decomposition
$\ExtMarketDecomp=\{(\alpha_{\op^{(k)},\SuppSet^{(k)}},\ValDist_{\op^{(k)},\SuppSet^{(k)}})\}_{k=1}^K$,
where $(\op^{(k)},\SuppSet^{(k)})$ are computed from the current $\ValDist$ in iteration $k$.
\STATE $\ExtMarketDecomp\gets \emptyset$; $k \gets 0$
\WHILE{$\ValDist \neq 0$}
    \STATE $k \gets k+1$
    \Comment{Within iteration, we omit the subscript $(k)$ on $\SuppSet$, and $\op$ for readability.  }
    \STATE $\op \gets \op(\ValDist)$; $\SuppSet \gets$ the support of $\ValDist$; \yledit{ $\SuppSet \gets \{s \in \SuppSet: s \ge \op_1\}$.}
    \STATE Consider the extremal market $\ValDist_{\op,\SuppSet}$. 
    \hfill
    \STATE $\alpha_{\text{runout}} \gets
        \min_{\val \in \ValSet:\;\valpmf_{\op,\SuppSet}(\val)>0}
        \frac{f(\val)}{\valpmf_{\op,\SuppSet}(\val)}$.
        \hfill
        \Comment{Probability runs out at some $\val$: $\alpha_{\text{runout}} \cdot \valpmf_{\op, \SuppSet}(\val) = \valpmf(\val)$}.
    \STATE $\alpha_{\text{shift}} \gets$ the minimum $\alpha$ such that the minimum optimal price shifts for some cost $\cost_j$: $\op(\ValDist - \alpha \cdot \ValDist_{\op, \SuppSet})$ becomes different from $\op(\ValDist)$.
    \STATE $\alpha_{\op,\SuppSet} \gets \min\{\alpha_{\text{runout}}, \alpha_{\text{shift}}\}$
    \STATE Append $(\alpha_{\op,\SuppSet},\ValDist_{\op,\SuppSet})$ to $\ExtMarketDecomp$.

    \STATE $\ValDist \gets \ValDist - \alpha_{\op,\SuppSet} \cdot \ValDist_{\op,\SuppSet}$.
\ENDWHILE
\RETURN $\ExtMarketDecomp$

\end{algorithmic}
\end{algorithm}

To argue that the entire procedure preserves both the seller surplus and the buyer surplus, we only need to argue that in each iteration of the while-loop, for every way to break ties between different optimal prices in $\ValDist$, there is a way to break ties in $\alpha_{\op, \SuppSet} \cdot \ValDist_{\op, \SuppSet}$ and $F - \alpha_{\op, \SuppSet} \cdot \ValDist_{\op, \SuppSet}$, such that the total seller surplus (resp.\ buyer surplus) in the latter two markets is the same as that in $\ValDist$.
This is a direct corollary of the key properties mentioned above.

Since every general market can be written as a convex combination of extremal markets, for any achievable welfare outcome (specified by the buyer surplus and the seller surplus) through a market segmentation of the aggregate market $\AggMarket$, we can find a segmentation into extremal markets to complete the same task --- hence the full generality of extremal markets.
\begin{corollary}[Full Generality of Extremal Markets] 
\label{corollary: full generality of extremal market}
\label{lem: full generality of extremal markets} 
Given the aggregate market $\AggMarket$, if there exists a segmentation $(\segWeight, \MarketSegment)_{\segi \in [\segI]} $ which induces the overall seller surplus $\SellerSurplus$ and the overall buyer surplus $\BuyerSurplus$, then we can find segmentation into extremal markets $(\alpha_{\op, \SuppSet}, \ValDist_{\op, \SuppSet})$ such that the overall seller surplus is $\SellerSurplus$, the minimum buyer surplus is at most $\BuyerSurplus$ and the maximum buyer surplus is at least $\BuyerSurplus$. Moreover, given any segmentation, one can compute in polynomial time
a segmentation into extremal markets satisfying the above properties.
\end{corollary}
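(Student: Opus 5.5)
The plan is to apply \Cref{thm:convex_combination_of_extremal_market} to each segment $\MarketSegment$ of the given segmentation, and then combine the resulting decompositions. Concretely, for each $\segi \in [\segI]$ the segment $\MarketSegment$ has seller surplus $\SellerSurplus(\MarketSegment)$ and, under the tiebreaking used in the given segmentation, some buyer surplus $\BuyerSurplus_\segi$ with $\MinBuyerSurplus(\MarketSegment) \le \BuyerSurplus_\segi \le \MaxBuyerSurplus(\MarketSegment)$. The theorem then gives a decomposition $\MarketSegment = \sum_{\op,\SuppSet} \alpha^{(\segi)}_{\op,\SuppSet} \ValDist_{\op,\SuppSet}$ into $O(n)$ extremal markets. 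This decomposition has seller surplus $\SellerSurplus(\MarketSegment)$, minimum buyer surplus at most $\BuyerSurplus_\segi$, and maximum buyer surplus at least $\BuyerSurplus_\segi$. Since Algorithm~\ref{alg:extremal market decomposition} accepts unnormalized markets, each segment can be fed in directly.

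Next I substitute these decompositions into $\AggMarket = \sum_\segi \segWeight \MarketSegment$. This gives $\AggMarket = \sum_\segi \sum_{\op,\SuppSet} \segWeight \alpha^{(\segi)}_{\op,\SuppSet} \ValDist_{\op,\SuppSet}$, and I merge coefficients of identical $(\op,\SuppSet)$ pairs. The masses add up correctly, because each $\MarketSegment$ equals the sum of its pieces. Hence the weights $\segWeight \alpha^{(\segi)}_{\op,\SuppSet}$ are nonnegative and sum to the total mass of $\AggMarket$, which is $1$. So this is a valid segmentation of $\AggMarket$ into extremal markets. Merging duplicates changes nothing, because the seller and buyer surplus functionals of a segmentation are linear in the weights: surplus is homogeneous under scaling by \Cref{lem: optimal price set - scaled F}. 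Consequently, the total seller surplus is $\sum_\segi \segWeight \SellerSurplus(\MarketSegment) = \SellerSurplus$.

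For buyer surplus, recall that the minimum (resp.\ maximum) buyer surplus of a segmentation is the weighted sum of per-segment minima (resp.\ maxima). The min over the combined segmentation is therefore $\sum_\segi \segWeight \cdot(\text{min BS of the } \segi\text{-th decomposition}) \le \sum_\segi \segWeight \BuyerSurplus_\segi = \BuyerSurplus$, and symmetrically the max is at least $\BuyerSurplus$. One point needs care: when the same extremal market arises from two different segments, the tiebreaking choices may differ. I will check that this is harmless. The min and max are attained with a fixed tiebreak on each extremal market, namely the minimum or maximum optimal price. So after merging, the min is still the same weighted sum, and intermediate values are recovered by convex combinations of tiebreaks, consistent with the footnote's notion of preservation.

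For the algorithmic part, I run Algorithm~\ref{alg:extremal market decomposition} on each of the $\segI$ segments. By \Cref{lem: extremal market decomposition - decomposition time} and \Cref{thm:convex_combination_of_extremal_market}, each run is polynomial, giving $O(\segI n)$ extremal markets in total, which is polynomial in the input size since the segmentation itself is part of the input. I expect the main obstacle to be only the bookkeeping around mixed tiebreaks just discussed, which is essentially a linearity argument; everything else follows directly from \Cref{thm:convex_combination_of_extremal_market}.
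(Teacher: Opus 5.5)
Your proposal is correct and takes the same route as the paper, which gives only a one-line justification for this corollary. You apply \Cref{thm:convex_combination_of_extremal_market} to each segment $\MarketSegment$, scale by $\segWeight$, and aggregate via linearity of the seller surplus and of the per-segment minimum/maximum buyer surplus in the weights, with the polynomial-time claim following from running Algorithm~\ref{alg:extremal market decomposition} once per segment.
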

\paragraph{Extremal market decompositions.}
Given the full generality of extremal markets, when characterizing or optimizing achievable outcomes, we can focus on decompositions of the aggregate market into extremal markets.
We say a collection of weights $\ba = (\alpha_{\op, \SuppSet})$ is an {\em extremal market decomposition} of \yledit{the aggregate market $\AggMarket$ (denoted by $\AggMarket \to \ba$), iff $\sum_{\op, \SuppSet} \alpha_{\op, \SuppSet} \cdot \ValDist_{\op, \SuppSet} = \AggMarket$}.
We immediately have the following algorithm for optimizing achievable outcomes: Optimize over all extremal market decompositions of the aggregate market, which can be done through linear programming, since the overall seller / buyer surplus is linear in the weight $\alpha_{\op, \SuppSet}$ of each extremal market in the decomposition.
The issue with this approach is that there are {\em exponentially many} possible extremal markets, which means the LP, formulated in the straightforward way, is of exponential size.
Obtaining a polynomial-time algorithm requires much more effort and new technical ideas, which we develop step by step in the subsequent sections.

\section{Fragment Decompositions}
\label{sec:fragment}
In the previous section, we have transformed the problem from segmentation of markets in any form into segmentation based on extremal markets. While this reduces the size of the problem space significantly, there are still exponentially many extremal markets. In this section, we discuss the structural properties of extremal markets, which eventually enable our polynomial-time algorithm. We first introduce the notation of \emph{indifference fragments} and the (unique) fragment decomposition of an extremal market decomposition in \Cref{subsec:fragments}. We show that both the seller surplus and the buyer surplus induced by an extremal market decomposition are linear in the weights in its fragment decomposition. Compared with the number of extremal markets, the size of fragments is further reduced but still exponential. 
To make it more efficient, in \Cref{sec: undominated fragments and fragment decompositions}, we further introduce the undominated fragments and reduced fragment decompositions, which is unique to each extremal market decomposition. We show that the size of the reduced fragment decomposition is $O(mn^2)$, where $n =|\ValSet|$ and $m=|\CostSet|$, and the welfare outcome of the extremal market decomposition and its reduced fragment decomposition remain the linear relationship.

\subsection{Indifference Fragments and Fragment Decompositions}
\label{subsec:fragments}
We first introduce the notion of indifference fragments.
\wtedit{
Intuitively, an extremal market $\ValDist_{\op,\SuppSet}$ is constructed by stitching together ``equal-surplus'' pieces: On the $j$-th piece (values in $[\op_j, \op_{j+1})$), let $\FragSuppSet :=  [\op_j,\op_{j+1}) \cap \SuppSet $ denote the support prices in this piece.
Then, the tail $\ValDist(\cdot)$ is chosen so that every support price $\price\in \FragSuppSet$ gives the same seller surplus $(\price - \cost_j) \ValDist(\price)$, and keeps
$\ValDist_{\op,\SuppSet}(\cdot)$ flat between consecutive support points in $\FragSuppSet$.
\Cref{defn:fragment} then extracts the $j$-th piece  and renormalizes it by $\esw_{\op,\SuppSet,j} := \ValDist_{\op,\SuppSet}(\op_j)$, i.e., the ``incoming'' tail mass when the extremal market first ``reaches'' the piece. 
Equivalently, fragments are standardized local building blocks, and an extremal market is obtained by stacking these blocks with weights $\esw_{\op,\SuppSet,j}$.}

\begin{definition}[Indifference Fragments]
\label{defn:fragment}
An \emph{indifference fragment}, denoted $\Frag_{\op,\SuppSet,j}$, is a market attained by restricting an extremal market $\ValDist_{\op, \SuppSet}$ to the interval $[\op_j,\op_{j+1})$ for some $j \in [m]$ scaled down by a factor $\esw_{\op,\SuppSet,j}$, where $\esw_{\op,\SuppSet,j} := \ValDist_{\op,\SuppSet}(\op_{j}) \in[0,1]$. 
That is, $\Frag_{\op,\SuppSet,j}$ is the market whose mass function satisfies 
\[
\frag_{\op,\SuppSet,j}(\val)
:=
\begin{cases}
\displaystyle
\frac{\valpmf_{\op,\SuppSet}(\val) }{\esw_{\op,\SuppSet,j}}, & \op_j \le \val < \op_{j+1}~,\\
0, & \text{otherwise}~.
\end{cases}
\]
If $\esw_{\op,\SuppSet,j}=0$, we set $\frag_{\op,\SuppSet,j}\equiv 0$
(e.g., the empty market).
\end{definition}
For readability, we subsequently write \emph{fragments} instead of \emph{indifference fragments}.
As a sanity check, for $\sw_{\op,\SuppSet,j}>0$, the total mass of the fragment $\Frag_{\op,\SuppSet,j}$ is equal to
$\sum\nolimits_{\val} \frag_{\op,\SuppSet,j}(\val) = 1-\ValDist_{\op,\SuppSet}(\op_{j+1})/\ValDist_{\op,\SuppSet}(\op_j)$.
Since $\op_j$ and $\op_{j+1}$ are optimal prices for a seller of type $\cost_j$ in the extremal market $\ValDist_{\op,\SuppSet}$ (\Cref{lem: optimal price set of extremal market}), we also have
$\ValDist_{\op,\SuppSet}(\op_{j+1}) \cdot (\op_{j+1}-\cost_j) = \ValDist_{\op,\SuppSet}(\op_j)\cdot(\op_j-\cost_j)$.
\wtedit{With this observation, the next lemma formalizes the following key property of the fragment: After scaling by $\esw_{\op,\SuppSet,j} =\ValDist_{\op,\SuppSet}(\op_{j})$, the $j$-th \emph{fragment} is completely pinned down by the two endpoints $\op_j,\op_{j+1}$ together with the support set $\FragSuppSet=  [\op_j,\op_{j+1}) \cap \SuppSet$.}
\begin{lemma}
\label{lem:locality of fragment}
Fix a fragment $\Frag_{\op, \SuppSet, j}$ attained by an extremal market $\ValDist_{\op, \SuppSet}$. 
Let $\FragSuppSet :=  [\op_j,\op_{j+1}) \cap \SuppSet $. Then the fragment $\Frag_{\op,\SuppSet,j}$ depends on $(\op,\SuppSet)$ only through $(\op_j, \op_{j+1}, \FragSuppSet)$. Specifically, for every $\val \in \FragSuppSet$,
\[
       \frag_{\op, \SuppSet, j}(\val)  = \frac{\op_j - \cost_j}{\val - \cost_j} - \frac{\op_j - \cost_j}{\val^+ - \cost_j}~, \quad
       \Frag_{\op, \SuppSet, j}(\val)  = \frac{\op_j - \cost_j}{\val - \cost_j} - \frac{\op_j - \cost_j}{\op_{j+1} - \cost_j}~,
\]
where $\val^+ = \min \{\val' \in \FragSuppSet \cup \{\op_{j+1}\}: \val' > \val \}$.
\end{lemma}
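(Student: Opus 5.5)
We will prove the tail formula for $\Frag_{\op,\SuppSet,j}$ first and then read off the mass function by differencing. Fix $j$ with $\esw_{\op,\SuppSet,j}>0$ and write $F:=\ValDist_{\op,\SuppSet}$. Note that $\op_j\in\SuppSet$ by condition (2) of \Cref{def: extremal markets}, so $\op_j\in\FragSuppSet$.

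The key step is to show that for every $\val\in\FragSuppSet\cup\{\op_{j+1}\}$,
\[
F(\val)(\val-\cost_j)=F(\op_j)(\op_j-\cost_j).
\]
We list $\FragSuppSet\cup\{\op_{j+1}\}$ in increasing order as $\op_j=s_a<s_{a+1}<\dots<s_b=\op_{j+1}$. These are consecutive elements of $\SuppSet$, where we use the convention $\op_{j+1}=\infty$ (with $F(\infty)=0$) when $\op_{j+1}\notin\SuppSet$. Each consecutive pair satisfies $\op_j\le s_i<s_{i+1}\le\op_{j+1}$. Hence the first bullet of \Cref{def: extremal markets} gives $F(s_i)(s_i-\cost_j)=F(s_{i+1})(s_{i+1}-\cost_j)$, and chaining these equalities yields the claim.

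The endpoint case $\op_{j+1}=\infty$ needs care. Here the piece is the last one, and the tail beyond $s_{b-1}$ is $0$. The formula is then interpreted with $\frac{\op_j-\cost_j}{\infty-\cost_j}=0$, so the chain must stop at $s_{b-1}$ with $F(s_{b-1}^+)=0$. The identity $F(\val)=\frac{\op_j-\cost_j}{\val-\cost_j}F(\op_j)$ still holds for $\val\in\FragSuppSet$, because equal surplus is only imposed among support points inside the piece. I will check this edge case, together with the case $\op_j=\infty$ (empty fragment), separately.

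Since $\cost_j<\op_j\le\val$, we may divide and obtain
\[
\frac{F(\val)}{F(\op_j)}=\frac{\op_j-\cost_j}{\val-\cost_j}\quad\text{for }\val\in\FragSuppSet\cup\{\op_{j+1}\}.
\]
By \Cref{defn:fragment}, the fragment's tail at $\val\in\FragSuppSet$ is
\[
\Frag_{\op,\SuppSet,j}(\val)=\sum_{\val\le \val'<\op_{j+1}}\frac{\valpmf_{\op,\SuppSet}(\val')}{F(\op_j)}=\frac{F(\val)-F(\op_{j+1})}{F(\op_j)}.
\]
Substituting the ratios above gives the stated expression for $\Frag_{\op,\SuppSet,j}(\val)$. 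For the mass function, the second bullet of \Cref{def: extremal markets} says that $F$ is flat between consecutive support points. Therefore $\valpmf_{\op,\SuppSet}(\val)=F(\val)-F(\val^+)$, where $\val^+$ is the next element of $\FragSuppSet\cup\{\op_{j+1}\}$. Dividing by $F(\op_j)$ and using the ratio identity at both $\val$ and $\val^+$ gives the claimed $\frag_{\op,\SuppSet,j}(\val)$.

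Both formulas involve only $\cost_j$, $\op_j$, $\op_{j+1}$ and $\FragSuppSet$ (through $\val^+$), which establishes the locality claim. The only real obstacle is the bookkeeping at the boundary, namely $\op_{j+1}=\infty$ or $\op_{j+1}$ lying outside the support. I will handle it by explicitly fixing the conventions $F(\infty)=0$ and $1/\infty=0$ and checking that the chain of equalities is never invoked across $\op_{j+1}$ in a way that fails.
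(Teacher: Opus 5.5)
Your proposal is correct and follows essentially the same route as the paper. Both establish the equal-surplus identity $\ValDist_{\op,\SuppSet}(\val)(\val-\cost_j)=\ValDist_{\op,\SuppSet}(\op_j)(\op_j-\cost_j)$ on $\FragSuppSet\cup\{\op_{j+1}\}$, use flatness off the support to write $\valpmf_{\op,\SuppSet}(\val)=\ValDist_{\op,\SuppSet}(\val)-\ValDist_{\op,\SuppSet}(\val^+)$, and normalize by $\ValDist_{\op,\SuppSet}(\op_j)$. The differences are cosmetic: you chain the first bullet of \Cref{def: extremal markets} instead of citing \Cref{lem: optimal price set of extremal market}, and you compute the tail before the mass rather than telescoping the mass. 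Also, condition (2) of the definition already makes $\op_{j+1}\notin\SuppSet$ equivalent to $\op_{j+1}=\infty$, so your two boundary cases are one and the same.
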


\wtedit{\Cref{lem:locality of fragment} implies that for any two extremal markets $\ValDist_{\op,\SuppSet}$ and $\ValDist_{\op',\SuppSet'}$, for any index $j$,
if we have $\op'_j=\op_j$, $\op'_{j+1}=\op_{j+1}$ and $\FragSuppSet'_j = \FragSuppSet_j$, then one must have $\Frag_{\op,\SuppSet,j} \;=\; \Frag_{\op',\SuppSet',j}$.} In light of the above, we give a succinct parametrization of fragments as follows.

\begin{remark}[Succinct Parametrization of Fragments]
\label{remark: Succinct fragments}
For $j\in [m]$, $\ell\in [n]$, $r\in[n+1]$, and any set
$ \FragSuppSet \subseteq \ValSet \cap [\val_\ell,\val_r)$ with $\val_\ell \in \FragSuppSet$, we write $\Frag_{j, \ell, r, \FragSuppSet}$ (resp.\ $\frag_{j, \ell, r, \FragSuppSet}$) to denote the fragment $\Frag_{\op,\SuppSet,j}$ (resp.\ its mass function $\frag_{\op,\SuppSet,j}$ ) satisfying
\[
    \op_j=\val_\ell~,\quad \op_{j+1}=\val_r~,\quad \SuppSet\cap [\val_\ell,\val_r)=\FragSuppSet~,
\]
whenever such a market exists.\footnote{This implies that $\cost_j < \val_\ell \le \val_r$ since $\cost_j < \op_{j} \le \op_{j+1}$.}
\end{remark}

From now on, we generally write $\Frag_{j, \ell, r, \FragSuppSet}$ as the notation of fragments. Note that although the succinct parametrization implies that the number of different fragments is much smaller than the number of extremal markets, there can still be {\em exponentially many different fragments} because of the dependency on the support $\FragSuppSet$.
We will handle this problem in \Cref{sec: undominated fragments and fragment decompositions}.
One main reason why we introduce fragments is that they capture the essence (e.g., the seller / buyer surplus) of an extremal market, or more generally, of an extremal market decomposition. We express any extremal market as a weighted sum of fragments in a concrete way (defined in the fragment decompositions of an extremal market) and further sum up the fragment decompositions of every extremal market, scaled by its weight in extremal market decomposition, to obtain the fragment decompositions of an extremal market decomposition. The formal definition is as follows.
\begin{definition}[Fragment Decompositions]\label{def:fragment-decomposition}
Fix the family of fragments $\{\Frag_{j, \ell, r, \FragSuppSet}\}$ (as in \Cref{remark: Succinct fragments}), where $j\in [m]$, $\ell \in [n]$, $r\in [n+1]$ and $ \FragSuppSet \subseteq \ValSet \cap [\val_\ell, \val_r) $.

\smallskip
\noindent\textbf{(i) Fragment decomposition of an extremal market.}
Given an extremal market $\ValDist_{\op,\SuppSet}$, a collection of nonnegative weights
$\bw=(\esw_{j, \ell, r, \FragSuppSet}^{\op, \SuppSet})$ is called a fragment decomposition of $\ValDist_{\op,\SuppSet}$
(denoted $\ValDist_{\op,\SuppSet}\to \bw$) iff
\begin{enumerate}
    \item for any quadruple $(j, \ell, r, \FragSuppSet)$ with $j\in [m]$, $\ell \in  [n]$, $r \in [n+1]$ and $ \FragSuppSet \subseteq \ValSet \cap [\val_\ell, \val_r)$,
    \[
       \esw_{j, \ell, r, \FragSuppSet}^{\op, \SuppSet} = \begin{cases}
           \ValDist_{\op, \SuppSet}(\op_j), & \val_\ell = \op_j,\val_r = \op_{j+1}, \FragSuppSet = \SuppSet \cap [\op_j, \op_{j+1})~;\\
           0, & \text{otherwise}~.
       \end{cases}
    \]
     Note that when $\ell = r$, which implies that $\op_j = \op_{j+1}$, we have $\esw_{j, \ell, r, \FragSuppSet}^{\op, \SuppSet}= \ValDist_{\op_j} = \ValDist_{\op_{j+1}} = \esw_{j+1, r, k, \FragSuppSet'}^{\op, \SuppSet}$,  where $\val_k = \op_{j+2}$ and $\FragSuppSet' = \SuppSet \cap [\op_{j+1}, \op_{j+2})$, while $\FragSuppSet = \emptyset$.
    \item $\ValDist_{\op,\SuppSet}=\sum_{j, \ell, r, \FragSuppSet} \esw_{j, \ell, r, \FragSuppSet}^{\op, \SuppSet} \cdot \Frag_{j, \ell, r, \FragSuppSet}~$.
\end{enumerate} 

\smallskip

\noindent\textbf{(ii) Fragment decomposition of an extremal market decomposition.}
Let $\ba=(\alpha_{\op,\SuppSet})$ be an extremal market decomposition.
We say $\bw$ is a fragment decomposition of $\ba$ (denoted $\ba\to \bw$) iff
\begin{align*}
    \sum\nolimits_{\op,\SuppSet} \alpha_{\op,\SuppSet}\cdot \ValDist_{\op,\SuppSet}
    = \sum\nolimits_{j, \ell, r, \FragSuppSet} \sw_{j, \ell, r, \FragSuppSet} \cdot \Frag_{j, \ell, r, \FragSuppSet}~,
\end{align*}
where $\sw_{j, \ell, r, \FragSuppSet} = \sum_{\op, \SuppSet} \alpha_{\op, \SuppSet} \cdot \esw_{j, \ell, r, \FragSuppSet}^{\op, \SuppSet}$, and $\esw_{j, \ell, r, \FragSuppSet}^{\op, \SuppSet}$ is the weight of fragment $\Frag_{j, \ell, r, \FragSuppSet}$ in the fragment decompositions of the extremal market $\ValDist_{\op, \SuppSet}$.

\smallskip
\noindent\textbf{(iii) Fragment decomposition of an aggregate market.}
Given the aggregate market $\AggMarket$, we say $\bw$ is a fragment decomposition of $\AggMarket$
(denoted $\AggMarket \to \bw$) iff there exists an extremal market decomposition $\ba$ of the aggregate market $\AggMarket$
(i.e., $\AggMarket \to \ba$) such that $\ba\to \bw$. 
\end{definition}
\paragraph{Towards a fragment-based linear program.}
To see why fragment decompositions are potentially helpful, let us observe that the seller / buyer surplus induced by an extremal market decomposition $\ba$ can be determined from the fragment decomposition $\bw$ of $\ba$ --- in fact, both quantities are linear in $\bw$.

Consider the seller surplus first.
If we charge the seller surplus to the specific fragment $\Frag_{j, \ell, r, \FragSuppSet}$ with a non-zero weight $\sw_{j, \ell, r, \FragSuppSet}$ in a particular way: $\sw_{j, \ell, r, \FragSuppSet} \cdot \costpmf(\cost_j) \cdot (\val_\ell - \cost_j)$, where $\costpmf(\cost_j)$ is the probability of cost $\cost_j$, we can get the same seller surplus as the extremal market decomposition.
This is because in any extremal market where the weight of fragment $\Frag_{j, \ell, r, \FragSuppSet}$ is not zero in its fragment decompositions (note that the fragment does not necessarily come from a single extremal market), it must be the case that $\val_\ell$ is one of the optimal prices for cost $\cost_j$, and seller surplus for this specific $\cost_j$ equals $\ValDist(\val_\ell) \cdot \costpmf(\cost_j) \cdot (\val_\ell - \cost_j)$, where $\ValDist(\val_\ell)$ is the tail probability of $\val_\ell$. As defined in \Cref{def:fragment-decomposition}, $\sw_{j, \ell, r, \FragSuppSet}$ is the weighted sum of $\ValDist(\val_\ell)$ in the extremal markets where the weight of the fragment $\Frag_{j, \ell, r, \FragSuppSet}$ is not 0. 
Given that, the way we charge the seller surplus of $\Frag_{j, \ell, r, \FragSuppSet}$ is equal to the seller surplus contribution of $\cost_j$ in those extremal markets. 

Similarly, the contribution to the social welfare of each fragment $\Frag_{j, \ell, r, \FragSuppSet}$ is between $\left(\sum_{j' < j} \costpmf(c_{j'})\cdot (\val-\cost_{j'})\right) \cdot \left(\sum_{v \in T} \frag_{j, \ell, r, \FragSuppSet}(\val)\cdot \sw_{j, \ell, r, \FragSuppSet}\right)$ and $\left(\sum_{j' \le j} \costpmf(c_{j'})\cdot (\val-\cost_{j'})\right) \cdot \left(\sum_{v \in T}  \frag_{j, \ell, r, \FragSuppSet}(\val) \cdot \sw_{j, \ell, r, \FragSuppSet}\right)$, the former corresponding to the case where the seller, when the type is $\cost_j$, favors $\val_r$ to $\val_\ell$ as the price, and the latter $\val_\ell$ to $\val_r$. The buyer surplus then can be obtained by taking the difference of the social welfare and the seller surplus. Therefore, we can get the conclusion below.

\begin{lemma}
\label{lem: fragment decompositions}
    The  seller surplus and (maximum / minimum) buyer surplus induced by an extremal market decomposition $\ba$ are linear in its fragment decompositions $\bw$ (i.e., $\ba \to \bw$).
\end{lemma}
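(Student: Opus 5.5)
The plan is to write each of the three quantities for a single extremal market $\ValDist_{\op,\SuppSet}$ as a sum over its pieces $j\in[m]$. I will show that the contribution of piece $j$ equals $\esw_{\op,\SuppSet,j}=\ValDist_{\op,\SuppSet}(\op_j)$ times a constant that depends only on the fragment label $(j,\ell,r,\FragSuppSet)$. Summing with weights $\alpha_{\op,\SuppSet}$ and regrouping by fragment label then gives, directly from \Cref{def:fragment-decomposition}(ii), a linear form $\sum \sw_{j,\ell,r,\FragSuppSet}\cdot \kappa_{j,\ell,r,\FragSuppSet}$ for each quantity. All three quantities are weighted averages over segments, hence linear in $\ba$.

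\textbf{Seller surplus.} By \Cref{lem: optimal price set of extremal market}, $\op_j$ is optimal for $\cost_j$ whenever $\op_j\neq\infty$. The cost-$\cost_j$ term is therefore $\costpmf(\cost_j)(\op_j-\cost_j)\ValDist_{\op,\SuppSet}(\op_j)$. This has the required form with $\kappa^{\SellerSurplus}_{j,\ell,r,\FragSuppSet}=\costpmf(\cost_j)(\val_\ell-\cost_j)$. If $\op_j=\infty$, the term is zero and no fragment carries weight.

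\textbf{Buyer surplus.} For $\MaxBuyerSurplus$, cost $\cost_j$ uses price $\op_j$, and the buyer surplus is $\sum_{\val\ge\op_j}(\val-\op_j)\valpmf_{\op,\SuppSet}(\val)$. This sum spans several pieces $j'\ge j$, so it is not yet local to piece $j$. I will handle this by computing social welfare instead. A buyer with value $\val$ lying in piece $j'$ (i.e.\ $\val\in[\op_{j'},\op_{j'+1})$) buys under cost $\cost_{j''}$ exactly when the chosen price for $\cost_{j''}$ is at most $\val$.

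\begin{itemize}
\item With minimum optimal prices, this happens iff $j''\le j'$ (using $\op_{j''}\le\op_{j'}$ for $j''\le j'$, and $\op_{j''}>\val$ for $j''>j'$).
\item With maximum prices $\op_{j''+1}$, it happens iff $j''<j'$.
\end{itemize}

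Hence the welfare contribution of the mass at $\val$ is $\valpmf_{\op,\SuppSet}(\val)\sum_{j''\le j'}\costpmf(\cost_{j''})(\val-\cost_{j''})$, or the same sum over $j''<j'$ for the minimum. Since $\valpmf_{\op,\SuppSet}(\val)=\esw_{\op,\SuppSet,j'}\,\frag_{j',\ell,r,\FragSuppSet}(\val)$, and by \Cref{lem:locality of fragment} $\frag$ depends only on the label, the welfare of piece $j'$ is $\esw_{\op,\SuppSet,j'}$ times a label-dependent constant. Maximum/minimum buyer surplus is then welfare minus seller surplus, so it is also linear.

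\textbf{Main obstacle: boundary and degenerate cases.} These are the step I expect to need the most care.

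\begin{itemize}
\item \emph{Ties at endpoints.} A value equal to $\op_{j'+1}$ belongs to piece $j'+1$, not $j'$. The case $\op_j=\op_{j+1}$ gives an empty fragment, which contributes zero mass but still enters the seller-surplus term; the note in \Cref{def:fragment-decomposition}(i) shows its weight matches the next piece.
\item \emph{Infinite prices.} When $\op_{j''}=\infty$, I will verify that the tie-breaking convention still makes the max/min welfare formulas above correct.
\item \emph{Support below $\op_1$.} I must check that no mass lies below $\op_1$ by condition (2) of \Cref{def: extremal markets}, so every unit of mass is covered by some fragment.
\end{itemize}

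With these checks in place, linearity follows immediately by summation.
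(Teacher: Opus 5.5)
Your route is the same as the paper's. You charge the cost-$\cost_j$ seller surplus to the $j$-th fragment with coefficient $\costpmf(\cost_j)(\val_\ell-\cost_j)$, pass to social welfare, and regroup so that mass in piece $j'$ counts for costs $j''\le j'$ (maximum) or $j''<j'$ (minimum), with buyer surplus taken as welfare minus seller surplus.

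However, the ``infinite prices'' check you deferred fails in the minimum case. Let $J$ be the last index with $\op_J<\infty=\op_{J+1}$. By \Cref{lem: optimal price set of extremal market}, $\Op_J(\ValDist_{\op,\SuppSet})=\{s\in\SuppSet: s\ge\op_J\}$, so the maximum optimal price for $\cost_J$ is $\max\SuppSet$, not $\op_{J+1}$. (Prices range over $\ValSet$, and not selling earns $0<(\op_J-\cost_J)\ValDist_{\op,\SuppSet}(\op_J)$.) At that price the buyers at $\max\SuppSet$, who lie in piece $J$, do purchase. Your rule ``buys iff $j''<j'$'' therefore drops the term $\costpmf(\cost_J)(\max\SuppSet-\cost_J)\valpmf_{\op,\SuppSet}(\max\SuppSet)$, and welfare minus seller surplus can go negative. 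Concretely, take a single cost $\cost_1=0$ and the equal-revenue market with mass $\tfrac12$ on each of the values $1$ and $2$. Your formula gives minimum welfare $0$ and minimum buyer surplus $-1$; the true values are $1$ and $0$.

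Linearity survives the correction. For the top support point of the last piece, \Cref{lem:locality of fragment} has $\val^+=\op_{J+1}=\infty$, so $\frag_{J,\ell,n+1,\FragSuppSet}(\max\FragSuppSet)=(\val_\ell-\cost_J)/(\max\FragSuppSet-\cost_J)$. The missing term is then $\costpmf(\cost_J)(\val_\ell-\cost_J)\,\esw_{\op,\SuppSet,J}$, a label-dependent constant times the fragment weight. So you add $\costpmf(\cost_j)(\val_\ell-\cost_j)$ to the minimum-welfare coefficient of every label with $r=n+1$; equivalently, cost $\cost_J$ contributes zero minimum buyer surplus.

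The paper's own proof makes the same choice $\mu_{\op,\SuppSet,j}=\op_{j+1}$ and has the same slip. Your remaining boundary checks are correct: empty fragments when $\op_j=\op_{j+1}$, no support below $\op_1$, and $\op_{j''}=\infty$ in the maximum case.
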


The above observations hint at the possibility of optimizing achievable outcomes through an LP of reduced size, based directly on fragments, rather than extremal markets.
The high-level idea is to have weights in a fragment decomposition $\bw$ as decision variables, maximize some combination of the seller surplus and the buyer surplus (both of which can be determined directly from the fragment decompositions), and enforce the constraint that $\AggMarket \to \bw$, where $\AggMarket$ is the aggregate market.
However, there are two outstanding issues with this approach:
\begin{itemize}
    \item The size of the LP, formulated in a straightforward way, is still exponential, since we need at least one decision variable for each possible fragment, and the number of possible fragments is exponential in general.
    \item It is not immediately clear how to enforce the constraint that $\AggMarket \to \bw$ in a linear way (without explicit modeling an intermediate extremal market decomposition, which would blow up the size of the LP).
    One tempting idea is to simply require that $\sum_{j, \ell, r, \FragSuppSet} \sw_{j, \ell, r, \FragSuppSet} \cdot \Frag_{j, \ell, r, \FragSuppSet} = \AggMarket$.
    However, it is not too hard to construct examples where doing so relaxes the original constraint by too much, and optimal solutions to the LP no longer make sense.
\end{itemize}
Next, we deal with the first issue by introducing undominated fragments. The second issue is discussed in \Cref{sec:flow}.

\subsection{Undominated Fragments and Reduced Fragment Decompositions}
\label{sec: undominated fragments and fragment decompositions}
In this subsection, we focus on the issue of exponentially many fragments.
Roughly speaking, our solution is to group fragments that share the same leftmost and rightmost points together, and represent each group using an ``undominated'' fragment.

\begin{figure}[h!]
    \centering
    \includegraphics[width=\linewidth]{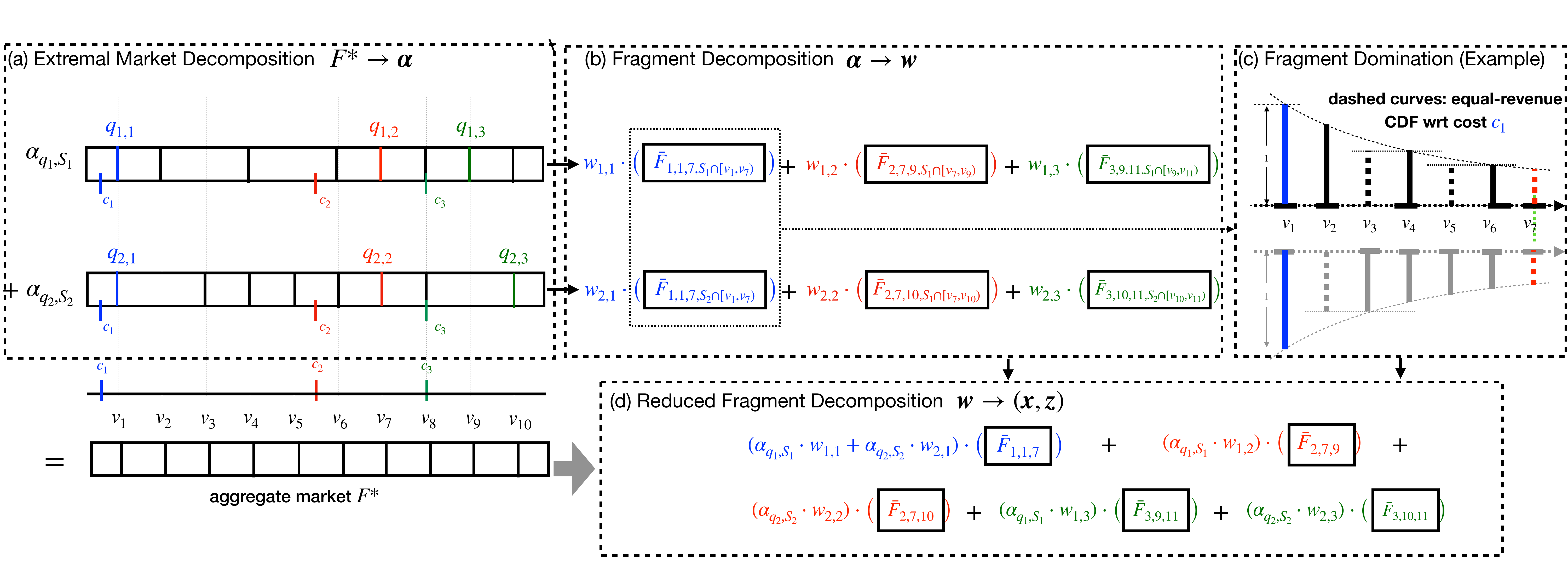}
    \caption{Decomposition Procedure (Example): Fix an aggregate market $\AggMarket$ with the value set $(\val_i)_{i\in [10]}$ ($\val_{11} = \infty$) and a cost set $(\cost_j)_{j \in [3]}$.
    Part (a) shows that the aggregate market is decomposed into $2$ extremal markets $\{\ValDist_{\op_k, \SuppSet_k}\}_{k=1,2}$ weighted by $\alpha_{\op_k, \SuppSet_k}$ respectively. For each extremal market $\ValDist_{\op_k, \SuppSet_k}$, the solid line indicates the corresponding value is in their support set and $\op_{k,j}$ stands for the minimum optimal price of the cost $\cost_j$. 
    Part (b) shows the fragment decomposition of each extremal market, where $\Frag_{j, \ell, r, \FragSuppSet}$ is the fragment with non-zero weight $\esw_{i,j}$. Here $\val_\ell= \op_{i,j}$, $\val_r = \op_{i, j+1}$ and $\FragSuppSet = \SuppSet_i \cap [\val_\ell , \val_r)$. 
    Part (c) shows that, specifically, both $\Frag_{1,1,7, \SuppSet_1 \cap[\val_1, \val_7)}$ and $\Frag_{1,1,7, \SuppSet_2 \cap[\val_1, \val_7)}$ are stochastically dominated by the undominated fragment $\Frag_{1,1,7}$,
    finally, Part (d) demonstrates the reduced fragment decomposition, where their weights are grouped together.}
    \label{fig:fragment_decomposition}
\end{figure}

\begin{definition} [Undominated Fragments]
\label{def: undominated fragments}
We define a fragment $\Frag_{j, \ell, r, \FragSuppSet}$ to be {\em an undominated fragment} iff $\FragSuppSet = \ValSet \cap [\val_\ell, \val_r)$. Thus, given the value set $\ValSet$ and the indexes $\ell$ and $r$, the support set $\FragSuppSet$ is uniquely determined. We simply denote the undominated fragment by $\Frag_{j, \ell, r}$. 
\end{definition}
This immediately implies the following upper bound on the number of possible undominated fragments and the first-order stochastic dominance between fragments and undominated fragments that share the same index $j$, $\ell$ and $r$.

\begin{lemma}
\label{lem: number of undominated fragments}
There are only $O(mn^2)$ possible undominated fragments.
\end{lemma}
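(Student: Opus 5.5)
The plan is a direct counting argument based on \Cref{def: undominated fragments}. By that definition, an undominated fragment $\Frag_{j,\ell,r}$ is a fragment $\Frag_{j,\ell,r,\FragSuppSet}$ whose support set is forced to be $\FragSuppSet = \ValSet \cap [\val_\ell,\val_r)$. Two facts then combine. First, \Cref{lem:locality of fragment} (together with the succinct parametrization in \Cref{remark: Succinct fragments}) shows that a fragment is completely determined by $(j,\val_\ell,\val_r,\FragSuppSet)$. Second, for an undominated fragment the set $\FragSuppSet$ is itself a function of $(\ell,r)$. So the map $(j,\ell,r)\mapsto \Frag_{j,\ell,r}$ is well defined. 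Every undominated fragment lies in its image, and it suffices to count the domain.

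The count proceeds as follows. The cost index ranges over $j\in[m]$, the left index over $\ell\in[n]$, and the right index over $r\in[n+1]$, where $r=n+1$ encodes $\op_{j+1}=\infty$ (see \cref{ft:q_m+1}). This gives at most $m\cdot n\cdot(n+1)=O(mn^2)$ triples. Validity conditions such as $\ell\le r$ and $\cost_j<\val_\ell$ only remove triples, so they can only lower the count and may be ignored for the upper bound.

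The one point to be careful about is that the explicit formula for $\frag_{j,\ell,r,\FragSuppSet}$ in \Cref{lem:locality of fragment} depends only on $\cost_j$, $\val_\ell$, $\val_r$ and $\FragSuppSet$. It does not depend on the rest of $(\op,\SuppSet)$, so distinct extremal markets cannot produce different fragments under the same triple. I expect no real obstacle here; once locality is invoked the argument is immediate.
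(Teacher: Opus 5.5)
Your proposal is correct and is the same argument as the paper's, which counts the index triples $(j,\ell,r)\in[m]\times[n]\times[n+1]$ given by the definition of undominated fragments. Your appeal to locality, to make sure $(j,\ell,r)\mapsto\Frag_{j,\ell,r}$ is well defined, is a harmless extra bit of care; the paper leaves it implicit.
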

\begin{lemma}
\label{lem: domination relationship}
   \yledit{Fix $j\in[m]$, $\ell \in[n]$ and $r\in[n+1]$} such that $\cost_j < \val_\ell \le \val_r$. Let $\Frag_{j, \ell, r, \FragSuppSet}$ be the fragment with the support set $\FragSuppSet \subseteq \ValSet \cap [\val_\ell, \val_r)$. Then $\Frag_{j, \ell, r, \FragSuppSet}$ is first-order stochastically dominated by the undominated fragment $\Frag_{j, \ell, r}$ (in tail order), i.e., for any $\val$, $\Frag_{j, \ell, r, \FragSuppSet}(\val) \le \Frag_{j, \ell, r}(\val)$.
\end{lemma}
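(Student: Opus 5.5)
The plan is to use the explicit tail formula from \Cref{lem:locality of fragment} for both fragments and compare them pointwise. Write $\FragSuppSet_0 := \ValSet \cap [\val_\ell, \val_r)$ for the support of the undominated fragment, so that $\FragSuppSet \subseteq \FragSuppSet_0$, and note that both supports contain $\val_\ell$.

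Since the mass function vanishes outside the support, the tail of a fragment is a step function. For any $\val$, let $s(\val) := \min\{\val' \in \FragSuppSet \cup \{\val_r\} : \val' \ge \val\}$. Then for $\val_\ell \le \val < \val_r$ we have $\Frag_{j,\ell,r,\FragSuppSet}(\val) = \Frag_{j,\ell,r,\FragSuppSet}(s(\val))$. By \Cref{lem:locality of fragment}, this equals
\[
\frac{\val_\ell - \cost_j}{s(\val) - \cost_j} - \frac{\val_\ell - \cost_j}{\val_r - \cost_j},
\]
where the expression is read as $0$ when $s(\val) = \val_r$. The same formula with $\FragSuppSet_0$ in place of $\FragSuppSet$ gives the tail of the undominated fragment. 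There, the relevant support point is $s_0(\val) = \min\{\val' \in \FragSuppSet_0 \cup \{\val_r\}: \val' \ge \val\}$, which is just the smallest grid value $\ge \val$.

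The comparison proceeds as follows.
\begin{itemize}
\item Because $\FragSuppSet \subseteq \FragSuppSet_0$, taking a minimum over a smaller set gives $s(\val) \ge s_0(\val)$.
\item Because $\cost_j < \val_\ell$, every point involved exceeds $\cost_j$, so the map $x \mapsto (\val_\ell - \cost_j)/(x - \cost_j)$ is positive and decreasing on the relevant range.
\item Combining the two, $\Frag_{j,\ell,r,\FragSuppSet}(\val) \le \Frag_{j,\ell,r}(\val)$ for every $\val \in (\val_\ell, \val_r)$.
\end{itemize}

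The boundary cases are handled separately.
\begin{itemize}
\item For $\val \le \val_\ell$, both tails equal the total mass $1 - (\val_\ell - \cost_j)/(\val_r - \cost_j)$, since $\val_\ell$ lies in both supports.
\item For $\val \ge \val_r$, both tails are $0$.
\item When $\val_r = \infty$ (i.e., $r = n+1$), the subtracted term is $0$. The argument is unchanged, with $\val_r$ treated as $+\infty$.
\end{itemize}

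The only step needing care is justifying the step-function extension of the tail formula from support points to arbitrary $\val$, including the convention at $\val_r$. This follows directly from the definition of tail probability. There is no real obstacle.
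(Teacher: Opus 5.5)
Your proposal is correct and follows essentially the same route as the paper: both use the explicit tail formula of \Cref{lem:locality of fragment}, the fact that the tail is flat between support points, and the fact that $x \mapsto (\val_\ell-\cost_j)/(x-\cost_j)$ is decreasing for $x>\cost_j$. Your single comparison $s(\val)\ge s_0(\val)$ merges the paper's two cases ($\val\in\FragSuppSet$ versus $\val\notin\FragSuppSet$) into one step and also spells out the boundary values. One small nit: when $\ell=r$ both supports are empty rather than containing $\val_\ell$, but then both tails vanish identically, so the conclusion is unaffected.
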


An example of two fragments dominated by a common undominated fragment is shown in the right part of Figure~\ref{fig:fragment_decomposition}: The horizontal bars denote the supports of the two fragments.
As the figure shows, both fragments are dominated by the tail mass corresponding to the undominated fragment with the same $(j, \ell, r)$.
Below we argue that in a fragment decompositions, such fragments can be grouped together with some additional bookkeeping, such that the ``reduced'' fragment decompositions still preserves the essence of an extremal market decomposition.

\begin{definition}[Reduced fragment decompositions]
\label{def: reduced fragment decompositions}
We define the reduced fragment decomposition as follows:

\noindent\textbf{(i) Reduced fragment decompositions of a fragment decomposition.}
Given a fragment decomposition $\bw = (\sw_{j, \ell, r, \FragSuppSet})$, a collection of nonnegative weights \yledit{$\bx = (\rsdw_{j, \ell, r})_{j\in [m], \ell \in [n], r \in [n+1]}$ with a collection of nonnegative weights $\bz = (\rsdz_{j, \ell, r, i})_{j\in [m], \ell, i \in [n], r\in [n+1]}$,} denoted by the pair $(\bx, \bz)$, is called a reduced fragment decomposition of $\bw$
(denoted $\bw \to (\bx, \bz)$) iff
\begin{alignat*}{2}
    \rsdw_{j, \ell, r} 
    = \sum\nolimits_{\FragSuppSet \subseteq \ValSet \cap [\val_\ell, \val_r)} \sw_{j, \ell, r, \FragSuppSet}~, \; 
    \forall j, \ell, r~; \quad 
    \rsdz_{j, \ell, r, i} 
    = \sum\nolimits_{ \FragSuppSet \subseteq \ValSet \cap [\val_\ell, \val_r)} \sw_{j, \ell, r, \FragSuppSet} \cdot \frag_{j, \ell, r, \FragSuppSet}(\val_i)~, \; 
    \forall j, \ell, r, i~,
\end{alignat*}
where $\frag_{j, \ell, r, \FragSuppSet}(\val_i)$ is the mass function at $\val_i$ of the fragment $\Frag_{j, \ell, r, \FragSuppSet}$.

\smallskip
\noindent\textbf{(ii) Reduced fragment decompositions of an extremal market decomposition.}
Let $\ba=(\alpha_{\op,\SuppSet})$ be an extremal market decomposition (a convex combination of extremal markets).
We say $(\bx, \bz)$ is a reduced fragment decomposition of $\ba$ (denoted $\ba \to (\bx, \bz)$) iff there exists a $\bw$ such that $\bw$ is a fragment decomposition of $\ba$ (i.e. $\ba \to \bw$) and $(\bx, \bz)$ is a reduced fragment decomposition of $\bw$ (i.e., $\bw \to (\bx, \bz)$). 
\end{definition}
Given a reduced fragment decomposition $(\bx,\bz)$ defined above, we observe two key properties. Fixing the index $j$, $\ell$ and $r$, the sum of $(\rsdz_{j, \ell, r, i})_{i\in[n]}$ equals to $\rsdw_{j, \ell, r}$ scaled by the factor $(\val_r - \val_\ell) / (\val_r - \cost_j)$ since $\rsdw_{j, \ell, r}$ stands for the tail mass of the undominated fragment $\Frag_{j, \ell, r}$ whereas $\rsdz_{j, \ell, r, i}$ stands for the point mass in the fragment. 
Moreover, since any fragment $\Frag_{j, \ell, r, \FragSuppSet}$ with $\FragSuppSet \subseteq \ValSet \cap [\val_\ell, \val_r)$ is first-order stochastically dominated by the undominated fragment $\Frag_{j, \ell, r}$, the market constructed by $(\rsdz_{j, \ell, r, i})_{i\in[n]}$ is also dominated by the undominated fragment $\Frag_{j, \ell, r}$ scaled by $\rsdw_{j, \ell, r}$. We call these two properties together \emph{proper domination} of fragments.

\begin{lemma} [Proper Domination]\label{lem: reduced fragment decompositions - proper domination}
  Let $(\bx,\bz)$ be a reduced fragment decomposition of $\bw$, denoted $\bw \to (\bx,\bz)$,
\ylreplace{where $\bx=(\rsdw_{j,\ell,r})_{j\in[m],\,\ell,r\in[n]}$ and
$\bz=(\rsdz_{j,\ell,r,i})_{j\in[m],\,\ell,r,i\in[n]}$}{where $\bx=(\rsdw_{j,\ell,r})_{j\in[m],\ell \in [n], r\in[n+1]}$ and
$\bz=(\rsdz_{j,\ell,r,i})_{j\in[m],\,\ell,i\in[n], r\in [n+1]}$}.
Fix $j$, $\ell$ and $r$ such that $\cost_j < \val_\ell \le \val_r$. Let $\ValDist^{\rsdz}_{j,\ell,r}$ be the induced market with mass function
    $\valpmf^{\rsdz}_{j,\ell,r}(\val_i) := \rsdz_{j,\ell,r,i}, i \in [n]$. Then for all $\val$, we have $\ValDist^{\rsdz}_{j,\ell,r}(\val) \le \Frag_{j,\ell,r}(\val) \cdot \rsdw_{j,\ell,r}$,
    i.e., $\ValDist^{\rsdz}_{j,\ell,r}$ is first-order stochastically dominated by scaled fragment
    $\rsdw_{j,\ell,r}\cdot \Frag_{j,\ell,r}$ (in tail order).
    The total mass also satisfies 
    \begin{align*}
        \sum\nolimits_{i\in[n]} \rsdz_{j,\ell,r,i}
        = \frac{\val_r-\val_\ell}{\val_r-\cost_j} \cdot \rsdw_{j,\ell,r}~.
    \end{align*}
\end{lemma}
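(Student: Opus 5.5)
The plan is to prove both claims fragment by fragment and then sum, using linearity of the definitions in Definition~\ref{def: reduced fragment decompositions}. By that definition, for fixed $(j,\ell,r)$ the induced market satisfies
\[
\ValDist^{\rsdz}_{j,\ell,r}(\val)=\sum\nolimits_{i:\val_i\ge \val}\rsdz_{j,\ell,r,i}=\sum\nolimits_{\FragSuppSet\subseteq \ValSet\cap[\val_\ell,\val_r)} \sw_{j,\ell,r,\FragSuppSet}\cdot \Frag_{j,\ell,r,\FragSuppSet}(\val).
\]
Thus $\ValDist^{\rsdz}_{j,\ell,r}$ is the $\bw$-weighted sum of the fragments sharing the index triple $(j,\ell,r)$, while $\rsdw_{j,\ell,r}$ is the sum of the same weights. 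Both claims are linear in these weights, so it suffices to check each one for a single fragment $\Frag_{j,\ell,r,\FragSuppSet}$ carrying unit weight.

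\textbf{Domination.} For a single fragment, Lemma~\ref{lem: domination relationship} already gives $\Frag_{j,\ell,r,\FragSuppSet}(\val)\le \Frag_{j,\ell,r}(\val)$ for every $\val$. Multiplying by the nonnegative weight $\sw_{j,\ell,r,\FragSuppSet}$ and summing over $\FragSuppSet$ yields
\[
\ValDist^{\rsdz}_{j,\ell,r}(\val)\le \Big(\sum\nolimits_{\FragSuppSet}\sw_{j,\ell,r,\FragSuppSet}\Big)\Frag_{j,\ell,r}(\val)=\rsdw_{j,\ell,r}\,\Frag_{j,\ell,r}(\val).
\]

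\textbf{Total mass.} Here I use the closed form in Lemma~\ref{lem:locality of fragment}. Evaluated at $\val=\val_\ell$, which is the smallest point of $\FragSuppSet$ since $\val_\ell\in\FragSuppSet$, it gives
\[
\sum\nolimits_{i}\frag_{j,\ell,r,\FragSuppSet}(\val_i)=\Frag_{j,\ell,r,\FragSuppSet}(\val_\ell)=1-\frac{\val_\ell-\cost_j}{\val_r-\cost_j}=\frac{\val_r-\val_\ell}{\val_r-\cost_j}.
\]
Equivalently, the point masses telescope to this value. The right-hand side is independent of $\FragSuppSet$. Multiplying by $\sw_{j,\ell,r,\FragSuppSet}$ and summing over $\FragSuppSet$ therefore gives $\sum_i \rsdz_{j,\ell,r,i}=\frac{\val_r-\val_\ell}{\val_r-\cost_j}\,\rsdw_{j,\ell,r}$.

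\textbf{Edge cases.} Some care is needed at the boundary, and this is the only real obstacle.
\begin{itemize}
\item If $\ell=r$, then $\FragSuppSet=\emptyset$ and the fragment has zero mass. This matches the factor $(\val_r-\val_\ell)/(\val_r-\cost_j)=0$, and domination is trivial.
\item If $r=n+1$, so that $\op_{j+1}=\infty$, the factor should be read as $1$. This agrees with Lemma~\ref{lem:locality of fragment}, where the term with $\op_{j+1}-\cost_j$ in the denominator vanishes.
\item Fragments whose defining weight $\esw=0$ are identically zero, so they contribute nothing to either side.
\item We need $\val_\ell\in\FragSuppSet$, which holds because $\op_j=\val_\ell$ lies in the support of the extremal market by Definition~\ref{def: extremal markets}. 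This is what makes the tail at $\val_\ell$ equal to the fragment's total mass.
\end{itemize}
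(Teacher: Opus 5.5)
Your proof is correct and follows essentially the same route as the paper. You write $\ValDist^{\rsdz}_{j,\ell,r}$ as the $\sw$-weighted sum of the fragment tails $\Frag_{j,\ell,r,\FragSuppSet}$, apply \Cref{lem: domination relationship} term by term, and use the fact that each fragment's total mass $1-\frac{\val_\ell-\cost_j}{\val_r-\cost_j}$ does not depend on $\FragSuppSet$; your explicit handling of $\ell=r$ and $r=n+1$ is, if anything, a bit more careful than the paper's.
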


Observe that all fragments dominated by the same undominated fragment contribute the same amount to the seller surplus.
So, we can merge the contribution of all fragments dominated by the same undominated fragment, and determine the total contribution using the primary component $\bx$ of the reduced fragment decomposition.
Moreover, the maximum / minimum social welfare can be determined in a similar way. 
Instead of counting the contribution of each fragment separately, we count the total contribution of all fragments dominated by the same undominated fragment at once, using the secondary component $\bz$ of a reduced fragment decomposition. Therefore, when we charge the seller surplus and the buyer surplus in a concrete way, which is linear to $(\bx, \bz)$, we can preserve the welfare outcome induced by the extremal market decomposition $\ba$.

\begin{proposition}
\label{lem: surplus of reduced fragment decompositions}
    The seller surplus and (maximum / minimum) buyer surplus induced by an extremal market decomposition $\ba$ are linear in its reduced fragment decomposition $(\bx, \bz)$ (i.e., $\ba \to (\bx, \bz)$).
\end{proposition}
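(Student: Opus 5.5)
We derive the linear formulas directly from the fragment-level charging behind \Cref{lem: fragment decompositions} and then aggregate over the support sets $\FragSuppSet$. Fix an extremal market decomposition $\ba$ and a fragment decomposition $\bw$ with $\ba \to \bw$. We first write the surpluses as explicit linear functionals of $\bw$, and then show that each functional depends on $\bw$ only through the sums $\rsdw_{j,\ell,r} = \sum_{\FragSuppSet} \sw_{j,\ell,r,\FragSuppSet}$ and $\rsdz_{j,\ell,r,i} = \sum_{\FragSuppSet}\sw_{j,\ell,r,\FragSuppSet}\,\frag_{j,\ell,r,\FragSuppSet}(\val_i)$.

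\textbf{Seller surplus.} In each extremal market $\ValDist_{\op,\SuppSet}$, \Cref{lem: optimal price set of extremal market} says $\op_j$ is optimal for $\cost_j$. Hence the cost-$\cost_j$ contribution is $\costpmf(\cost_j)(\op_j-\cost_j)\ValDist_{\op,\SuppSet}(\op_j)$, and $\ValDist_{\op,\SuppSet}(\op_j)$ is exactly the fragment weight $\esw^{\op,\SuppSet}_{j,\ell,r,\FragSuppSet}$. Weighting by $\alpha_{\op,\SuppSet}$ and summing gives
\[
\SellerSurplus(\ba)=\sum_{j,\ell,r,\FragSuppSet}\costpmf(\cost_j)(\val_\ell-\cost_j)\,\sw_{j,\ell,r,\FragSuppSet}=\sum_{j,\ell,r}\costpmf(\cost_j)(\val_\ell-\cost_j)\,\rsdw_{j,\ell,r},
\]
because the coefficient does not depend on $\FragSuppSet$. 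Terms with $\op_j=\infty$ contribute zero in both expressions. This step is routine.

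\textbf{Social welfare.} We prove linearity of $\MaxSocialWelfare$ and $\MinSocialWelfare$; the buyer surplus then follows by subtracting the seller surplus. Take a value $\val$ in the support of an extremal market, lying in piece $j$, i.e.\ $\op_j\le\val<\op_{j+1}$. By \Cref{lem: optimal price set of extremal market} and \Cref{lem:monotonicity of optimal price set -v2}, its buyer purchases under cost $\cost_{j'}$ exactly when the chosen price is at most $\val$.
\begin{itemize}
\item If $j'<j$, the price is at most $\op_{j'+1}\le\op_j\le\val$, so it buys.
\item If $j'>j$, the price is at least $\op_{j'}>\val$, so it does not buy.
\item If $j'=j$, it buys with the minimum-price tie-break ($\op_j$). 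With the maximum-price tie-break ($\op_{j+1}$), it buys only if $\val=\op_{j+1}$, which is impossible since $\val<\op_{j+1}$.
\end{itemize}
Here $\Op_{j'}$ denotes the optimal price sets of the extremal markets, which is valid in the segmented setting because prices are set per segment. So the welfare from mass at $\val$ in the fragment of piece $j$ is $\sum_{j'<j}\costpmf(\cost_{j'})(\val-\cost_{j'})$ (minimum) or $\sum_{j'\le j}\costpmf(\cost_{j'})(\val-\cost_{j'})$ (maximum), times the mass. That mass is $\alpha_{\op,\SuppSet}\esw^{\op,\SuppSet}_{j,\ell,r,\FragSuppSet}\frag_{j,\ell,r,\FragSuppSet}(\val)$.

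Summing, $\MaxSocialWelfare(\ba)$ equals
\[
\sum_{j,\ell,r,i}\Big(\sum_{j'\le j}\costpmf(\cost_{j'})(\val_i-\cost_{j'})\Big)\sum_{\FragSuppSet}\sw_{j,\ell,r,\FragSuppSet}\frag_{j,\ell,r,\FragSuppSet}(\val_i)=\sum_{j,\ell,r,i}\Big(\sum_{j'\le j}\costpmf(\cost_{j'})(\val_i-\cost_{j'})\Big)\rsdz_{j,\ell,r,i},
\]
and $\MinSocialWelfare(\ba)$ is the same with $j'<j$. The key point is that the coefficient of $\frag_{j,\ell,r,\FragSuppSet}(\val_i)$ depends only on $(j,i)$ and not on $\FragSuppSet$. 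That is precisely what lets the per-point masses collapse into $\bz$.

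\textbf{Main obstacle: the boundary cases.} The delicate part is making the tie-breaking bookkeeping correct when pieces are degenerate.
\begin{itemize}
\item $\op_j=\op_{j+1}$ (i.e.\ $\ell=r$, empty $\FragSuppSet$): the fragment has zero mass, so it contributes nothing to welfare. For seller surplus it correctly contributes $\costpmf(\cost_j)(\val_\ell-\cost_j)\ValDist(\op_j)$.
\item Costs with $\op_j=\infty$: they contribute nothing, and any value with $\val<\op_1$ lies outside the support.
\end{itemize}
Finally, since the reduction $\bw\to(\bx,\bz)$ is itself linear, the resulting expressions are linear in $(\bx,\bz)$. The same formulas hold for every $\bw$ with $\ba\to\bw$, so they are well defined for any $(\bx,\bz)$ with $\ba\to(\bx,\bz)$.
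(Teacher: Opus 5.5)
Your route is essentially the paper's. You write the surpluses as fragment-level charges, observe that each coefficient depends on $(j,\ell,r,i)$ but not on $\FragSuppSet$, and aggregate into $\bx$ and $\bz$; your per-value purchase analysis just re-derives the charging in \Cref{lem: fragment decompositions}. The seller-surplus and maximum-welfare parts are fine.

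The minimum-welfare step, however, fails in the boundary case your last paragraph was meant to cover. The paper's own minimum-welfare formula has the same slip. In an extremal market, let $J$ index the top piece, $\op_J<\infty=\op_{J+1}$ (so $r=n+1$). By \Cref{lem: optimal price set of extremal market}, $\Op_J(\ValDist_{\op,\SuppSet})=\SuppSet\cap[\op_J,\infty)$. So the maximum optimal price for $\cost_J$ is the finite value $s_k=\max\SuppSet$, not $\op_{J+1}=\infty$, and the atom at $s_k$ \emph{does} buy under the max-price tie-break. Your sentence ``it buys only if $\val=\op_{j+1}$, which is impossible'' is therefore false here. That atom contributes
$(s_k-\cost_J)\valpmf_{\op,\SuppSet}(s_k)=(s_k-\cost_J)\ValDist_{\op,\SuppSet}(s_k)=(\op_J-\cost_J)\ValDist_{\op,\SuppSet}(\op_J)$,
which is all of $\cost_J$'s seller surplus, while your formula assigns it $0$. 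As a result, your $\MinBuyerSurplus$ is too low by $\costpmf(\cost_J)(\op_J-\cost_J)\ValDist_{\op,\SuppSet}(\op_J)$ for each extremal market (times its weight). Already for $m=1$, a single value $\val_1$ and $\cost_1=0$, the sum over $j'<1$ is empty, so you get $\MinBuyerSurplus=-\val_1$ instead of $0$.

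Linearity still holds, but you need an extra term. By the indifference identity above, the missing contribution equals $\costpmf(\cost_J)(\val_\ell-\cost_J)\,\esw^{\op,\SuppSet}_{J,\ell,n+1,\FragSuppSet}$ with $\val_\ell=\op_J$. The location of the top atom depends on $\FragSuppSet$, but this coefficient does not, so the term aggregates to
\[
\MinSocialWelfare(\ba)=\sum_{j,\ell,r,i}\Big(\sum_{j'<j}\costpmf(\cost_{j'})(\val_i-\cost_{j'})\Big)\rsdz_{j,\ell,r,i}+\sum_{j,\ell}\costpmf(\cost_j)(\val_\ell-\cost_j)\,\rsdw_{j,\ell,n+1}.
\]
Equivalently, when forming $\MinBuyerSurplus$, you subtract only the seller surplus of pieces with $r\le n$ from the $\bz$-part. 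Since $\sum_i\rsdz_{j,\ell,n+1,i}=\rsdw_{j,\ell,n+1}$, the correction can also be written in terms of $\bz$. With this term added, the rest of your argument goes through.
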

\paragraph{Chains of decompositions.}
\label{sec: fragments and fragment decompositions}
Slightly abusing notation, we denote a {\em chain} of decompositions by $\AggMarket \to \ba \to \bw \to (\bx, \bz)$.
We also write sub-chains of the above, which means there {\em exist} intermediate objects such that the complete chain of decompositions is legitimate. 


\section{Feasibility of Decomposition by Flow Conservation}
\label{sec:flow}
Now we only need to handle the final issue: enforcing $\AggMarket \to (\bx, \bz)$ by linear constraints.
In fact, we will show how to accomplish a slightly weaker goal, which is still sufficient for our purposes: enforcing that there is a way to segment the aggregate market $\AggMarket$ ({\em not necessarily an extremal market decomposition}) such that the seller / buyer surplus is the same as what is induced by $(\bx, \bz)$.
We show that this can be done by setting up a discounted flow network over undominated fragments $(\bx, \bz)$ satisfying proper domination, and enforcing flow conservation. 
In this section, we first give all constraints of the discounted flow network and introduce the flow variables $\by$. 
We show if the aggregate market can be decomposed into the reduced fragment decompositions, i.e., $\AggMarket \to (\bx, \bz)$, then there exists flow variables $\by$ such that $(\bx, \by, \bz)$ satifies the flow conservation (\Cref{sec: discounted flow network}). 
We next show the sufficiency of these constraints: If a triple $(\bx, \by, \bz)$ satisfies all constraints of the discounted flow network, then it is feasible (i.e., we can find a market segmentation of the aggregate market $\AggMarket$ such that it induces the same seller surplus and the maximum/minimum buyer surplus as $(\bx, \bz)$). We prove the existence by designing an algorithm to construct a specific market segmentation (\Cref{sec: segmenting market by reduced fragment decompositions}). 

After this, we can close the loop to get our main result of the paper: The discounted flow network formulation precisely characterizes all achievable outcomes through market segmentation.

\begin{theorem}
\label{thm: the equvilence of market segmentation and reduced fragment composition}
    Fix an aggregate market $\AggMarket$.
    The following two claims are equivalent:
    \begin{itemize}
        \item There is a way to segment $\AggMarket$ such that the overall seller surplus is $\SellerSurplus$ and the overall buyer surplus is $\BuyerSurplus$.
        \item There exists $(\bx, \by, \bz)$ satisfying the exact composition, flow conservation, and proper domination, whose induced seller surplus is $\SellerSurplus$, whose induced minimum buyer surplus is at most $\BuyerSurplus$, and whose induced maximum buyer surplus is at least $\BuyerSurplus$.
    \end{itemize}
\end{theorem}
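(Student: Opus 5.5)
The plan is to prove the two directions separately. The forward direction passes through the chain of decompositions. The reverse direction is a constructive flow-decomposition argument.

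\emph{Necessity.} Suppose some segmentation of $\AggMarket$ induces $(\SellerSurplus,\BuyerSurplus)$. By \Cref{corollary: full generality of extremal market} there is an extremal market decomposition $\ba$ with $\AggMarket\to\ba$ that has the same seller surplus, minimum buyer surplus at most $\BuyerSurplus$, and maximum buyer surplus at least $\BuyerSurplus$. Take the (unique) chain $\ba\to\bw\to(\bx,\bz)$. By \Cref{lem: surplus of reduced fragment decompositions}, the induced seller surplus and the max/min buyer surplus are unchanged. Proper domination is exactly \Cref{lem: reduced fragment decompositions - proper domination}. Exact composition, namely $\sum_{j,\ell,r}\rsdz_{j,\ell,r,i}=\valpmf^*(\val_i)$, follows from summing the fragment decompositions of each extremal market.

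It remains to build the flow variables $\by$. For each extremal market $\ValDist_{\op,\SuppSet}$ with $\alpha_{\op,\SuppSet}>0$, its nonempty pieces form a path of undominated-fragment indices $(1,\ell_1,r_1)\to(2,\ell_2,r_2)\to\cdots$ with $r_j=\ell_{j+1}$. I set $\rsdf$ on each edge $(j,\ell,r)\to(j+1,r,r')$ to the sum over extremal markets of $\alpha_{\op,\SuppSet}\ValDist_{\op,\SuppSet}(\op_{j+1})$. I also add a source edge carrying $\alpha_{\op,\SuppSet}\ValDist_{\op,\SuppSet}(\op_1)$. Since each extremal market is normalized and $\op_1$ is its minimum support point, this source mass is exactly $\alpha_{\op,\SuppSet}$.

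Flow conservation rests on one fact. The outgoing tail $\ValDist(\op_{j+1})$ equals $\frac{\op_j-\cost_j}{\op_{j+1}-\cost_j}\,\ValDist(\op_j)$, i.e.\ the discount factor $\frac{\val_\ell-\cost_j}{\val_r-\cost_j}$. This factor depends only on $(j,\ell,r)$, by the indifference of $\op_j,\op_{j+1}$ in \Cref{lem: optimal price set of extremal market}. Summing over markets then gives the following for each node: inflow equals $\rsdw_{j,\ell,r}$, and outflow equals the discounted inflow. For degenerate pieces with $\ell=r$ the factor is $1$, and fragments with $r=n+1$ act as sinks.

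\emph{Sufficiency.} Given $(\bx,\by,\bz)$ satisfying all constraints, I would construct a segmentation by a Ford--Fulkerson-style path peeling.
\begin{enumerate}
\item Repeatedly pick a source-to-sink path in the support of $\by$, and set $\alpha$ to the bottleneck of the flow along it, rescaled by the cumulative discounts.
\item Form a segment by stacking, at each node $(j,\ell,r)$ on the path, mass $\tfrac{\text{inflow}}{\rsdw_{j,\ell,r}}$ times the normalized $\bz$-profile $(\rsdz_{j,\ell,r,i})_i$.
\item Subtract the corresponding flow, and reduce $\bx$ and $\bz$ proportionally.
\end{enumerate}
Flow conservation guarantees that a path always exists while flow remains, that each step zeroes at least one edge (so there are polynomially many steps), and that at termination all of $\bx$ is consumed. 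Exact composition then implies the segments sum to $\AggMarket$.

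The key claim, and the main obstacle, is that in each such segment the seller of cost $\cost_j$ finds $\val_\ell$ (and $\val_r$) optimal, with surplus equal to the tail mass at $\val_\ell$ times $(\val_\ell-\cost_j)$. This should follow from proper domination. Within the piece, the $\bz$-profile is first-order dominated by the indifference fragment, so any price $v\in[\val_\ell,\val_r)$ yields at most $(v-\cost_j)$ times the tail of the indifference profile, which equals the surplus at $\val_\ell$. The total-mass identity makes the tail at $\val_r$ match exactly. Prices in other pieces must be handled through the monotonicity of the discount factors across $j$ and \Cref{lem:monotonicity of optimal price set -v2}-type comparisons. I expect this cross-piece optimality check to require the most care.

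Finally, I would show that the seller surplus and max/min buyer surplus of the constructed segmentation are given by the same linear charging used in \Cref{lem: surplus of reduced fragment decompositions}, which is linear in $(\bx,\bz)$. Summing over segments then recovers the induced values. Because both tie-breaking extremes are achievable, mixing tie-breaks attains the intermediate value $\BuyerSurplus$.
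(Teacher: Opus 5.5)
\textbf{Overall.} Your route is the paper's. Necessity goes through Corollary~\ref{corollary: full generality of extremal market}, the chain $\ba\to\bw\to(\bx,\bz)$, Lemma~\ref{lem: reduced fragment decompositions - proper domination}, Proposition~\ref{lem: surplus of reduced fragment decompositions}, and a flow read off from the extremal markets. Sufficiency goes through path peeling as in Algorithm~\ref{alg: market segmentation}.

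\textbf{The gap.} It is the step you flag yourself, and it is the core of the sufficiency direction (Lemma~\ref{lem: market segment - exists segmentation preserves surplus}). You show only that no price inside piece $j$ beats $\val_{i_j}$ for $\cost_j$. Prices in other pieces are left to an unspecified ``monotonicity'' argument. Lemma~\ref{lem:monotonicity of optimal price set -v2} cannot do this: it orders the optimal sets of different costs, but says nothing about whether $\val_{i_j}$ is globally optimal for $\cost_j$.

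\textbf{How the paper closes it.} Let $\op=(\val_{i_1},\dots,\val_{i_{j^{(t)}}},\infty,\dots)$ and compare the segment with the full-support extremal market $\ValDist_{\op,\ValSet}$.
\begin{itemize}
\item Backward induction from the last piece (where $\val_{i_{j^{(t)}+1}}=\infty$), together with the total-mass identity, gives the segment's tail at each breakpoint: $\mscdf_j/\mscdf_1=\prod_{j'<j}\df_{j',i_{j'},i_{j'+1}}=\ValDist_{\op,\ValSet}(\val_{i_j})$ (Lemma~\ref{lem: market segmentation - tail probability p_j}).
\item Proper domination then yields $\MarketSegment^{(t)}(\val)\le\ValDist_{\op,\ValSet}(\val)$ for every $\val$.
\item Hence, for every $\val>\cost_j$, $(\val-\cost_j)\MarketSegment^{(t)}(\val)\le(\val-\cost_j)\ValDist_{\op,\ValSet}(\val)\le(\val_{i_j}-\cost_j)\ValDist_{\op,\ValSet}(\val_{i_j})=(\val_{i_j}-\cost_j)\MarketSegment^{(t)}(\val_{i_j})$, by Lemma~\ref{lem: optimal price set of extremal market}. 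Equality holds at $\val_{i_{j+1}}$ whenever it is finite.
\end{itemize}
This settles all cross-piece comparisons at once. Your own route can also be completed, but only by redoing the induction in the proof of Lemma~\ref{lem: optimal price set of extremal market}. For each cost $\cost_k$, use the within-piece bound on piece $k$, which is tight at both endpoints. Then use that $(\val-c)/(q-c)$ is monotone in $c$ to chain the inequality from piece $k$ back to cost $\cost_j$.

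\textbf{A smaller point on necessity.} Your edge variable $\alpha_{\op,\SuppSet}\ValDist_{\op,\SuppSet}(\op_{j+1})$ is the \emph{discounted} flow. It satisfies ``inflow $=\rsdw$, outflow $=\df\cdot\rsdw$'', not the constraints the theorem refers to: $\rsdw_{j,\ell,k}=\sum_r\rsdf_{j,\ell,k,r}$ and $\rsdw_{j+1,k,r}=\sum_\ell\df_{j,\ell,k}\rsdf_{j,\ell,k,r}$. Since $\df_{j,\ell,k}>0$, dividing by it fixes this; equivalently, put the undiscounted tail $\alpha_{\op,\SuppSet}\ValDist_{\op,\SuppSet}(\op_j)$ on the edge, as in Lemma~\ref{lem: the existance of y}. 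Your source edges are unnecessary, because first-layer nodes carry no incoming constraint.
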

\begin{proof}[Proof sketch]
For simplicity, suppose that we want to maximize the buyer surplus through market segmentation.
Suppose that the maximum buyer surplus possible is $B^*$.
The full generality of extremal markets implies that there is an extremal market decomposition $\ba$ whose induced maximum buyer surplus is $B^*$, which can further be turned into $(\bx, \by, \bz)$ satisfying the exact composition and flow conservation.
As a result, there exists $(\bx, \by, \bz)$ whose induced maximum buyer surplus is $B^*$.
Conversely, take any $(\bx^*, \by^*, \bz^*)$ satisfying exact composition, flow conservation, and proper domination, whose induced buyer surplus is $B^{**} \ge B^*$.
There is a procedure that transforms $(\bx^*, \by^*, \bz^*)$ into a way to segment $\AggMarket$, whose induced buyer surplus is $B^{**}$, which cannot be larger than the maximum buyer surplus $B^*$ achievable through segmentation.
As a result, $B^* = B^{**}$.
\end{proof}

We have proved the equivalence of the discounted flow network $(\bx, \by, \bz)$ constrained by proper domination and flow conservation and market segmentation in terms of seller surplus and buyer surplus, and there are $O(mn^2)$ undominated fragments. Therefore, we have an LP of polynomial size which captures all achievable welfare outcomes. This implies a polynomial-time algorithm.
\begin{corollary}\label{corollary: polynomial-time algorithm for achievable seller buyer surplus}
    Given the aggregate market $\AggMarket$, there exists a polynomial-time algorithm to optimize the achievable seller/buyer surplus. 
\end{corollary}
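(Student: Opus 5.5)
The plan is to derive the algorithm directly from \Cref{thm: the equvilence of market segmentation and reduced fragment composition}. The argument has three parts: write down the LP, check it is polynomial, and convert an optimal LP solution back into an actual segmentation.

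\textbf{Step 1: the LP.} Fix an objective $\lambda_1 \SellerSurplus + \lambda_2 \BuyerSurplus$. The decision variables are the triple $(\bx,\by,\bz)$. By \Cref{lem: number of undominated fragments}, $\bx$ has $O(mn^2)$ coordinates, $\bz$ has $O(mn^3)$ coordinates, and $\by$ has one variable per edge of the discounted flow network between undominated fragments, which is $O(m^2n^4)$ edges in the worst case. The constraints are exact composition, flow conservation, and proper domination. Each is linear: exact composition and flow conservation are equalities, and proper domination consists of the tail inequalities $\sum_{i' \ge i} \rsdz_{j,\ell,r,i'} \le \Frag_{j,\ell,r}(\val_i)\,\rsdw_{j,\ell,r}$ together with the total-mass equality of \Cref{lem: reduced fragment decompositions - proper domination}. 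The coefficients $\Frag_{j,\ell,r}(\val_i)$ are explicit rational functions of the input by \Cref{lem:locality of fragment}.

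\textbf{Step 2: the objective.} By \Cref{lem: surplus of reduced fragment decompositions}, the seller surplus and both the maximum and minimum buyer surplus are linear in $(\bx,\bz)$. The only subtlety is the choice of buyer-surplus tiebreaking, since the theorem says an achievable $\BuyerSurplus$ only needs to lie between the induced minimum and maximum. I would handle this by introducing an auxiliary variable $\beta$ with constraints $\MinBuyerSurplus(\bx,\bz) \le \beta \le \MaxBuyerSurplus(\bx,\bz)$, and optimizing $\lambda_1 \SellerSurplus(\bx,\bz) + \lambda_2 \beta$. For $\lambda_2 \ge 0$ this is simply the maximum-buyer-surplus expression, and for $\lambda_2 < 0$ it is the minimum one.

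By \Cref{thm: the equvilence of market segmentation and reduced fragment composition}, the projection of this polytope onto $(\SellerSurplus,\beta)$ is exactly the achievable region. Hence the LP optimum equals the optimal achievable value, and the LP can be solved in polynomial time by the ellipsoid or interior-point method, since all data has polynomial bit complexity.

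\textbf{Step 3: recovering a segmentation.} The algorithm should output an optimal segmentation, not just the optimal value, and this is where I expect the main difficulty. I would invoke the constructive direction of the theorem, namely the DFS-style procedure of \Cref{sec: segmenting market by reduced fragment decompositions}, which turns $(\bx^*,\by^*,\bz^*)$ into a segmentation of $\AggMarket$ with the same seller surplus and buyer-surplus range. The obligation is to argue that this procedure runs in polynomial time and produces polynomially many segments. My first approach would mimic the Ford--Fulkerson path-decomposition argument: each augmenting extraction zeroes out at least one flow edge, one $\rsdw$ variable, or one domination constraint. That bounds the number of iterations by the polynomial number of LP variables and constraints.

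Finally, the realized buyer surplus $\beta^*$ may lie strictly between the minimum and maximum of the constructed segmentation. In that case I would realize it by mixing the two tiebreaking rules, that is, by splitting each segment proportionally between the seller favoring the lowest optimal price and the seller favoring the highest. By linearity this attains exactly $\beta^*$.
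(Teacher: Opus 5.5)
Your proposal is correct and follows the paper's route. You use \Cref{thm: the equvilence of market segmentation and reduced fragment composition} to turn optimization over segmentations into a polynomial-size LP in $(\bx,\by,\bz)$, which is exactly the paper's one-line argument; your Step 3 then adds a polynomial iteration bound for \Cref{alg: market segmentation} that the paper leaves implicit, and that bound is sound because each outer iteration exhausts either the starting $\rsdw_{1,i_1,i_2}$ (when no rescaling occurs) or the last bottleneck $\rsdf$ entry, and no variable ever increases (the ``domination constraint'' alternative you list is not needed).
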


\subsection{Discounted Flow Network}
\label{sec: discounted flow network}
\paragraph{Connectivity and discount factors.}
So far, we have been discussing how an extremal market decomposition can be broken into pieces to form a (reduced) fragment decompositions.
From now on, we will turn to the other direction, i.e., how to recover an extremal market decomposition (or more accurately, something similar) by sticking fragments back into extremal markets.
Below we will focus on undominated fragments, but essentially all the claims generalize to arbitrary fragments.

First, observe that two undominated fragments $(j_1, \ell_1, r_1)$ and $(j_2, \ell_2, r_2)$, where without loss of generality $j_1 \le j_2$, can possibly come from consecutive pieces in a single extremal market iff $j_1 + 1 = j_2$ and $r_1 = \ell_2$.
When this happens, we say $(j_1, \ell_1, r_1)$ {\em connects to} $(j_2, \ell_2, r_2)$, denoted by $(j_1 ,\ell_1, r_1) \ct (j_2, \ell_2, r_2)$.
Repeatedly applying this property, $m$ undominated fragments $(j_1, \ell_1, r_1), \dots, (j_m, \ell_m, r_m)$ can be combined into an extremal market iff $(j_1, \ell_1, r_1) \ct (j_2, \ell_2, r_2) \ct \dots \ct (j_m, \ell_m, r_m)$.

The next natural question is regarding the proportion of probability we need to put into each fragment when combining them into a single extremal market.
Given all $m$ fragments that each connect to the next one, it is easy to determine the proportion of probability in each fragment: consider the extremal market defined by $\op = (\ell_1, \ell_2, \dots, \ell_m)$ and the entire $\ValSet$ as the support.
In order to produce $\alpha \cdot \ValDist_{\op, \ValSet}$, clearly we need $\alpha \cdot \ValDist_{\op, \ValSet}(\ell_1) $ units of $\Frag_{1, \ell_1, \ell_2}$, $\alpha \cdot \ValDist_{\op, \ValSet}(\ell_2)$ units of $\Frag_{1, \ell_2, \ell_3}$, etc.
The real question is: can we determine the relative proportions of two consecutive fragments locally, without knowing the other ingredients of the extremal market to be formed?
The answer turns out to be positive: the ratio between the probabilities in two consecutive fragments $(j, \ell, k)$ and $(j + 1, k, r)$ is always the same in all extremal markets that contain both fragments.

\begin{lemma}
\label{lem: discount factor}
    Let $\ValDist_{\op, \SuppSet}$ be an extremal market, and $\bw$ be a fragment decomposition of $\ValDist_{\op, \SuppSet}$ (i.e., $\ValDist_{\op, \SuppSet} \to \bw$).
    Consider two consecutive fragments $\ValDist_{j, \ell, k, \FragSuppSet_j}$ and $\ValDist_{j + 1, k, r, \FragSuppSet_{j+1}}$ where $\val_\ell = \op_j$, $v_k = \op_{j + 1}$, $\val_r = \op_{j + 2}$, $\FragSuppSet_j = S \cap [\op_j, \op_{j + 1})$ and $\FragSuppSet_{j+1} = S \cap [\op_{j + 1}, \op_{j + 2})$.
    Then $\esw_{j + 1, k, r, \FragSuppSet_j}^{\op, \SuppSet} / \esw_{j, \ell, k, \FragSuppSet_{j+1}}^{\op, \SuppSet}$ depends only on $j$, $\ell$, and $k$.
\end{lemma}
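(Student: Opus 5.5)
By Definition~\ref{def:fragment-decomposition}(i), the two weights in question are
\[
\esw_{j,\ell,k,\FragSuppSet_j}^{\op,\SuppSet}=\ValDist_{\op,\SuppSet}(\op_j)=\ValDist_{\op,\SuppSet}(\val_\ell)
\quad\text{and}\quad
\esw_{j+1,k,r,\FragSuppSet_{j+1}}^{\op,\SuppSet}=\ValDist_{\op,\SuppSet}(\op_{j+1})=\ValDist_{\op,\SuppSet}(\val_k).
\]
So the ratio in the statement is just the ratio of tail probabilities $\ValDist_{\op,\SuppSet}(\val_k)/\ValDist_{\op,\SuppSet}(\val_\ell)$. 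The plan is to compute this ratio from the equal-surplus structure of the $j$-th piece and observe that it involves only $\val_\ell$, $\val_k$ and $\cost_j$.

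\emph{Step 1: the $j$-th piece is equal-surplus across its whole range.} Consider the support points $s_{i}<s_{i+1}<\dots$ of $\SuppSet$ lying in $[\op_j,\op_{j+1}]$. The first bullet of Definition~\ref{def: extremal markets} gives $\ValDist_{\op,\SuppSet}(s)(s-\cost_j)=\ValDist_{\op,\SuppSet}(s')(s'-\cost_j)$ for each consecutive pair $s<s'$ with $\op_j\le s<s'\le\op_{j+1}$. Chaining these equalities from $\op_j$ up to $\op_{j+1}$ yields
\[
\ValDist_{\op,\SuppSet}(\op_j)(\op_j-\cost_j)=\ValDist_{\op,\SuppSet}(\op_{j+1})(\op_{j+1}-\cost_j).
\]
This identity also follows directly from Lemma~\ref{lem: optimal price set of extremal market}: when $\op_j\neq\infty$, both $\op_j$ and $\op_{j+1}$ lie in $\Op_j(\ValDist_{\op,\SuppSet})$, and the paper already records this identity just after Definition~\ref{defn:fragment}.

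\emph{Step 2: solve for the ratio.} Dividing the identity from Step~1,
\[
\frac{\esw_{j+1,k,r,\FragSuppSet_{j+1}}^{\op,\SuppSet}}{\esw_{j,\ell,k,\FragSuppSet_j}^{\op,\SuppSet}}
=\frac{\ValDist_{\op,\SuppSet}(\val_k)}{\ValDist_{\op,\SuppSet}(\val_\ell)}
=\frac{\val_\ell-\cost_j}{\val_k-\cost_j},
\]
which depends only on $j$, $\ell$ and $k$. This is the discount factor.

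The remaining work is handling degenerate cases. If $\ell=k$, the ratio is $1$, consistent with the formula above; the note in Definition~\ref{def:fragment-decomposition} confirms the two weights coincide in this case. The denominators are nonzero because $\cost_j<\op_j\le\op_{j+1}$ by condition~(3) of Definition~\ref{def: extremal markets}. The ratio must also be well-defined, which requires $\ValDist_{\op,\SuppSet}(\op_j)>0$. This holds because $\op_1$ is the smallest value in the support and the market is normalized, and the tail stays positive up to the largest finite $\op$ by the equal-surplus identity. If $\op_{j+1}=\infty$, the second fragment does not exist, so there is nothing to prove. The main obstacle is simply confirming that the equal-surplus chain in Step~1 genuinely extends to the endpoint $\op_{j+1}$ even when $\op_{j+1}\in\SuppSet$ starts the next piece; the first bullet of Definition~\ref{def: extremal markets} explicitly allows $s_{i+1}\le\op_{j+1}$, so this is covered.
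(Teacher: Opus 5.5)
Your proposal is correct and follows the paper's proof. Both read off the two weights as $\ValDist_{\op,\SuppSet}(\val_\ell)$ and $\ValDist_{\op,\SuppSet}(\val_k)$ from the fragment-decomposition definition, then use the equal-surplus identity for $\cost_j$ (both $\val_\ell$ and $\val_k$ are optimal for $\cost_j$) to get the ratio $(\val_\ell-\cost_j)/(\val_k-\cost_j)$. Your extra handling of degenerate cases (a positive tail at $\op_j$, $\ell=k$, and $\op_{j+1}=\infty$) is consistent with the paper and fills in details it leaves implicit.
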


\begin{remark}
\label{rmk: discount factor}
We define the ratio between two consecutive fragments in the fragment decompositions of any extremal market $\ValDist_{\op, \SuppSet}$ as {\em discount factor}, denoted 
\[
    \df_{j, \ell, k} = \frac{\esw_{j + 1, k, r, \FragSuppSet_{j+1}}^{\op, \SuppSet}}{\esw_{j, \ell, k, \FragSuppSet_{j}}^{\op, \SuppSet}} = \frac{\val_\ell - \cost_j}{\val_k - \cost_j}~,
\]
where $\esw_{j, \ell, k, \FragSuppSet_j}^{\op, \SuppSet}$ and $\esw_{j + 1, k, r,  \FragSuppSet_{j+1}}^{\op, \SuppSet}$ are both positive.
\end{remark}

\begin{figure}[t]
    \centering
    \includegraphics[width=\linewidth]{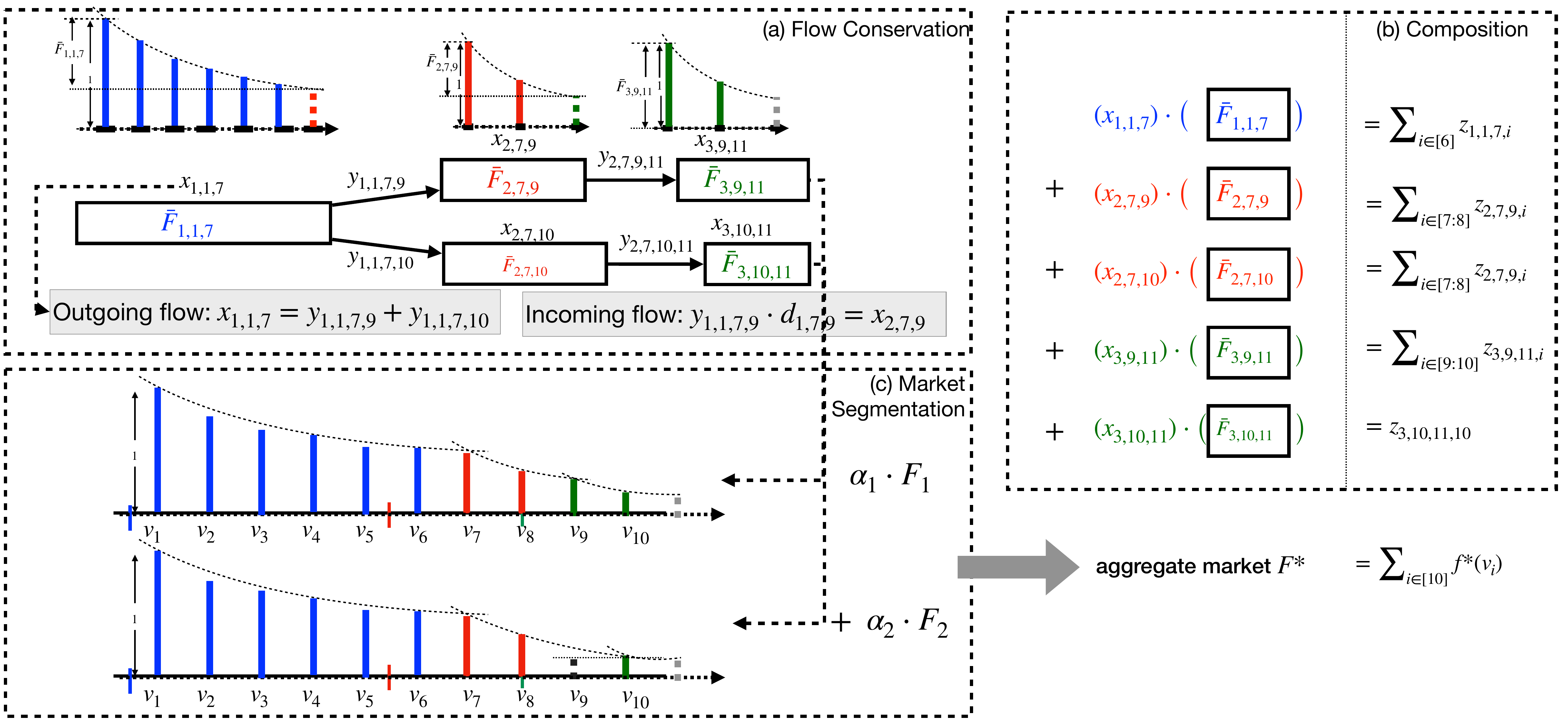}
    \caption{Feasible Discounted Flow Network (Example): 
    Fix the same aggregate market $\AggMarket$ as in \Cref{fig:fragment_decomposition}.
    Part (a) demonstrates in a feasible discounted flow network, how the fragments $\Frag_{j, \ell, r}$, where $\ell \in [n], r\in [n+1]$, connect and how the probability "flows" from one fragment to another discounted by the discount factor. 
    In Part (b), each fragment's total mass is further decomposed into the variables $(\rsdz_{j, \ell, r, i})_{i\in[n]}$ (only non-zero variables are displayed) and it satisfies the exact composition: $\sum_{j,\ell,r}\rsdz_{j,\ell, r,i} = \valpmf^*(\val_i)$. 
    Part (c) shows how the market segmentation is constructed by the flow network (\Cref{alg: market segmentation}). Note that $\ValDist_1$ and $\ValDist_2$ are not necessary to be extremal markets.}
    \label{fig:flow_network}
\end{figure}

\paragraph{Discounted flow networks.}
Now we are ready to define discounted flow networks, which capture how probabilities on different fragments in a (reduced) fragment decompositions can be combined into a way to segment the aggregate market, and define the surplus induced by the discounted flow networks.

As illustrated in Figure~\ref{fig:flow_network}, there are edges from one undominated fragment $(j, \ell, k)$ to another $(j + 1, k, r)$ iff the former fragment connects to the latter.
Moreover, there is a discount factor $\df_{j, \ell, k}$ associated with each edge.
On each edge, we further set up a variable $\rsdf_{j, \ell, k, r}$, corresponding to the probability ``flowing'' from $\Frag_{j, \ell, k}$ to $\Frag_{j + 1, k, r}$, which is ``discounted'' in the process (see further explanation below).
Given the aggregate market $\AggMarket$, we translate the constraint that $\AggMarket \to (\bx, \bz)$ into the following families of linear constraints:

\begin{itemize}
    \item Exact composition of aggregate market: for each $i \in [n]$, $\valpmf(\val_i) =\sum_{j, \ell, r} z_{j, \ell, r, i}$.
    \item Outgoing flow conservation: for each undominated fragment $\Frag_{j, \ell, k}$ \yledit{with $k\le n$}, $\rsdw_{j, \ell, k} = \sum_r \rsdf_{j, \ell, k, r}$. 
    \item Incoming flow conservation: for each undominated fragment $\Frag_{j + 1, k, r}$, $\rsdw_{j + 1, k, r} = \sum_\ell \rsdf_{j, \ell, k, r} \cdot \df_{j, \ell, k}$.
    \item Proper domination of fragments: for each undominated fragment $\Frag_{j, \ell, r}$, $\sum_{i:\, \val \le \val_i }\rsdz_{j,\ell,r,i} \le \Frag_{j,\ell,r}(\val) \cdot \rsdw_{j,\ell,r}$ for all $\val$. The total mass satisfies $\sum_{i \in [n]}\rsdz_{j, \ell, r, i} = (\val_r - \val_\ell)/(\val_r - \cost_j) \cdot \rsdw_{j, \ell, r} $.
    
\end{itemize}
 If we get a reduced fragment decomposition $(\bx, \bz)$ of $\AggMarket$, then the proper domination is already satisfied (\Cref{lem: reduced fragment decompositions - proper domination}) and we can always find the corresponding flow variable $\by$.
\begin{lemma}
\label{lem: the existance of y}
    Given the aggregate market $\AggMarket$, if $\AggMarket \to (\bx, \bz)$, then there exists $\by$ such that $(\bx, \by, \bz)$ satisfies exact composition and flow conservation.
\end{lemma}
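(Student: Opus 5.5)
Unfold the chain $\AggMarket \to \ba \to \bw \to (\bx,\bz)$ and define $\by$ directly from the extremal market decomposition $\ba=(\alpha_{\op,\SuppSet})$. For each extremal market $\ValDist_{\op,\SuppSet}$ with $\alpha_{\op,\SuppSet}>0$, and each $j$ with $\op_{j+1}\neq\infty$, write $\op_j=\val_\ell$, $\op_{j+1}=\val_k$, $\op_{j+2}=\val_r$ (with $\val_{n+1}=\infty$). This market contributes the amount $\alpha_{\op,\SuppSet}\cdot\esw^{\op,\SuppSet}_{j,\ell,k,\FragSuppSet_j}=\alpha_{\op,\SuppSet}\ValDist_{\op,\SuppSet}(\op_j)$ to the edge variable $\rsdf_{j,\ell,k,r}$. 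In other words, set
\[
\rsdf_{j,\ell,k,r}:=\sum\nolimits_{\op,\SuppSet:\;\op_j=\val_\ell,\,\op_{j+1}=\val_k,\,\op_{j+2}=\val_r}\alpha_{\op,\SuppSet}\,\ValDist_{\op,\SuppSet}(\op_j)~.
\]
This is the natural choice: each extremal market is a path $(1,\cdot,\cdot)\ct(2,\cdot,\cdot)\ct\cdots$ in the network, and $\by$ is the sum of these path flows.

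Next check the constraints. Exact composition follows by summing the chain. Since $\AggMarket=\sum\alpha_{\op,\SuppSet}\ValDist_{\op,\SuppSet}=\sum\sw_{j,\ell,r,\FragSuppSet}\Frag_{j,\ell,r,\FragSuppSet}$, evaluating at $\val_i$ and regrouping by $(j,\ell,r)$ gives $\valpmf^*(\val_i)=\sum_{j,\ell,r}\rsdz_{j,\ell,r,i}$ by the definition of $\bz$.

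For outgoing conservation at $\Frag_{j,\ell,k}$ with $k\le n$, sum $\rsdf_{j,\ell,k,r}$ over $r$. This collects, for every extremal market with $\op_j=\val_\ell$ and $\op_{j+1}=\val_k$, the term $\alpha\,\ValDist_{\op,\SuppSet}(\op_j)$, which equals $\sum_{\FragSuppSet}\sw_{j,\ell,k,\FragSuppSet}=\rsdw_{j,\ell,k}$ by Definition~\ref{def:fragment-decomposition}(ii) and Definition~\ref{def: reduced fragment decompositions}. Here the hypothesis $k\le n$ guarantees that $\op_{j+1}$ is finite, so some $r\in[n+1]$ is always assigned.

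For incoming conservation at $\Frag_{j+1,k,r}$, sum $\rsdf_{j,\ell,k,r}\,\df_{j,\ell,k}$ over $\ell$. By Lemma~\ref{lem: discount factor} and Remark~\ref{rmk: discount factor}, each market's term satisfies $\alpha\,\ValDist_{\op,\SuppSet}(\op_j)\cdot\frac{\val_\ell-\cost_j}{\val_k-\cost_j}=\alpha\,\ValDist_{\op,\SuppSet}(\op_{j+1})$. This is just the indifference $\ValDist(\op_j)(\op_j-\cost_j)=\ValDist(\op_{j+1})(\op_{j+1}-\cost_j)$ from Lemma~\ref{lem: optimal price set of extremal market}. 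The sum is therefore $\sum_\FragSuppSet\sw_{j+1,k,r,\FragSuppSet}=\rsdw_{j+1,k,r}$.

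The main obstacle is the boundary cases, which I will handle separately. The first is $j=1$: $\Frag_{1,\ell,r}$ has no predecessor, so incoming conservation is imposed only for indices $j+1\ge2$, and I read the constraint that way. The second is the case $\op_j=\infty$ for large $j$, where the fragment weight is zero by convention; markets whose later $\op$'s are $\infty$ end the path there. The third is degenerate fragments with $\ell=k$, where $\df_{j,\ell,\ell}=1$ and the relation still holds. The fourth is markets where $\ValDist_{\op,\SuppSet}(\op_j)=0$, which contribute nothing consistently on both sides. Finally, I must ensure the fragment decomposition $\bw$ used is the canonical one from Definition~\ref{def:fragment-decomposition}(i), so that the identification $\rsdw_{j,\ell,k}=\sum\alpha\,\ValDist_{\op,\SuppSet}(\op_j)$ is exact rather than only an equality of mixtures.
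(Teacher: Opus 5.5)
Your proposal is correct and matches the paper's proof: the paper defines $\rsdf_{j,\ell,k,r}$ as the sum of $\alpha_{\op,\SuppSet}\cdot\esw^{\op,\SuppSet}_{j,\ell,k,\SuppSet\cap[\val_\ell,\val_k)}$, which equals $\alpha_{\op,\SuppSet}\ValDist_{\op,\SuppSet}(\op_j)$, over the extremal markets with $(\op_j,\op_{j+1},\op_{j+2})=(\val_\ell,\val_k,\val_r)$. It then obtains exact composition by regrouping the chain, outgoing conservation by summing over $r$, and incoming conservation via the discount factor of \Cref{lem: discount factor}, just as you do. Your extra handling of the boundary cases (no incoming constraint at $j=1$, paths ending once $\op_{j+1}=\infty$, and $\ell=k$ with $\df_{j,\ell,\ell}=1$) is consistent with the paper's argument and somewhat more careful than it.
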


\begin{definition}
\label{def: surplus by reduced fragment decompositions (x z)}
Fix a reduced fragment decompositions $(\bx, \bz)$, where $\bx = (\rsdw_{j, \ell, r})_{j \in [m], \ell, r \in [n]}$ and  $\bz = (\rsdz_{j, \ell, r, i})_{j \in [m], \ell, r, i \in [n]}$. We define the seller surplus induced by $(\bx, \bz)$, denoted $\SellerSurplus\left((\bx, \bz)\right)$,
\begin{equation}
    \begin{aligned}
        \SellerSurplus\left((\bx, \bz)\right) = \sum\nolimits_{j, \ell, r} \costpmf(\cost_j) \cdot (\val_\ell - \cost_j) \cdot \rsdw_{j, \ell, r}~,
    \end{aligned}
\end{equation}
the maximum (resp.\ minimum) social welfare induced by $(\bx, \bz)$, denoted $\MaxSocialWelfare\left((\bx, \bz)\right)$ (resp.\ $\MinSocialWelfare\left((\bx, \bz)\right)$),
\begin{align*}
    \text{Maximum social welfare:} \quad 
    \MaxSocialWelfare\left((\bx, \bz)\right) 
    & = \sum\nolimits_{j, \ell,r} \sum\nolimits_{j' \le j}\sum\nolimits_{i \in [\ell: r)} \costpmf(\cost_{j'}) \cdot (\val_i - \cost_{j'}) \cdot \rsdz_{j, \ell, r, i}\\ 
    \text{Minimum social welfare:} \quad 
    \MinSocialWelfare\left((\bx, \bz)\right) 
    & = \sum\nolimits_{j, \ell,r} \sum\nolimits_{j' < j}\sum\nolimits_{i \in [\ell: r)} \costpmf(\cost_{j'}) \cdot (\val_i - \cost_{j'}) \cdot \rsdz_{j, \ell, r, i}~.
\end{align*}
Moreover, the maximum buyer surplus induced by $(\bx, \bz)$ is $\MaxBuyerSurplus\left((\bx, \bz)\right) = \MaxSocialWelfare\left((\bx, \bz)\right) - \SellerSurplus\left((\bx, \bz)\right)$ and the minimum buyer surplus induced by $(\bx, \bz)$ is $\MinBuyerSurplus\left((\bx, \bz)\right) = \MinSocialWelfare\left((\bx, \bz)\right) - \SellerSurplus\left((\bx, \bz)\right)$.
\end{definition}

\begin{algorithm}[h!]
\captionsetup{skip=2pt} 
\begingroup
\footnotesize 
\caption{\textsc{SegmentMarketByReducedFragmentDecomposition}$((\bx, \by, \bz))$}
\begin{algorithmic}[1]
\setlength{\itemsep}{0pt}
\setlength{\topsep}{0pt}
\setlength{\partopsep}{0pt}
\setlength{\parskip}{0pt}
\label{alg: market segmentation}
\REQUIRE Reduced Fragment Decomposition $(\bx, \by, \bz)$ 
\ENSURE A market segmentation of $\MarketSegSet =\{(\alpha^{(t)},\MarketSegment^{(t)})\}_{t=1}^T$ of the aggregate market $\AggMarket$.

\STATE $\MarketSegSet\gets \emptyset$,  $t \gets 0$.
\WHILE{$\bx \neq \textbf{0}$}
     \STATE $t \gets t+1$
     \STATE \Comment{Within each outer iteration $t$, we omit the superscript $(t)$ on $i_j$ and $\mscdf_j$ for readability.}
    \STATE Choose any $(i_1, i_2)$ such that $x_{1, i_1, i_2} > 0$, let $p_1 \gets x_{1, i_1, i_2}$.
    \STATE $j \gets 1$.
        
        \WHILE {\ylreplace{$\val_{i_{j+1}} \ne \infty$}{$i_{j+1} \le n$}}  \label{alg-line: market segmentation - iteration t while starts} 
        \STATE \Comment{When $i_{j+1} = n+1$, then $\op_{j+1} = \val_{n+1} = \infty$ }
        \STATE Choose any $i_{j + 2}$ such that $\rsdf_{j, i_{j}, i_{j + 1}, i_{j + 2}} > 0$.
        
        \IF {$\rsdf_{j, i_{j}, i_{j} + 1, i_{j + 2}} < \mscdf_j$}
            \FOR{ $j' = 1$ to $j$}
                \STATE $\mscdf_{j'} \gets \mscdf_{j'} \cdot \rsdf_{j, i_j, i_{j + 1}, i_{j + 2}} / \mscdf_{j}$. \label{alg-line: maket segmentation - update p_j} 
            \ENDFOR
        \ENDIF
        
        \STATE $p_{j + 1} \gets \mscdf_j \cdot d_{j, i_{j}, i_{j + 1}}$. \label{alg-line: market segmentation - p_j+1} \Comment{$d_{j, i_{j}, i_{j + 1}}$ is the discount factor we defined in \Cref{rmk: discount factor}. }
        
             
        \STATE $j \gets j+1$.
    \ENDWHILE \label{alg-line: market segmentation - iteration t while ends}
    \STATE $j^{(t)} \gets j$. \label{alg-line: market segmentation - assign j^(t)} \Comment{$j^{(t)}$ is the final value of $j$ in iteration $t$, so that $i_1,\dots,i_{j^{(t)}+1}$ are chosen and $\val_{i_{j^{(t)}+1}}=\infty$.}

    \STATE Create a new market $\MarketSegment$, where for each $i \in [n]$, $\marketSegmentPmf(\val_i) = \sum_{j \in [j^{(t)}]} \frac{z_{j, i_{j}, i_{j + 1}, i} \cdot \mscdf_{j}} {x_{j, i_{j}, i_{j + 1}}}$. \label{alg-line: f(v_i)}\Comment{Since the flow variable $y>0$, the incoming flow conservation (proved in \Cref{lem: market segmentation - x y z feasible}) implies $x>0$ as well.}
    \STATE $\alpha^{(t)} \gets \mscdf_1^{(t)}$; $\MarketSegment^{(t)} \gets \MarketSegment / \alpha^{(t)}$. \label{alg-line: alpha^t and market segment F}
    \STATE $\MarketSegSet \gets \MarketSegSet\cup \{(\alpha^{(t)}, \MarketSegment^{(t)})\}$.
    \FOR {$j = 1$ to $j^{(t)}$}
        \STATE $x_{j, i_j, i_{j + 1}} \gets x_{j, i_j, i_{j + 1}} - \mscdf_{j}$. \label{alg-line: market segmentation - x}
        \STATE \algorithmicif\ $j \ne j^{(t)}$ \algorithmicthen\ $\rsdf_{j, i_j, i_{j + 1}, i_{j + 2}} \gets \rsdf_{j, i_j, i_{j + 1}, i_{j + 2}} - \mscdf_{j}$.
        \label{alg-line: market segmentation - y}
        \FOR {$i= i_j$ to $i_{j + 1} - 1$} \label{alg-line: market segmentation - z}
            \STATE $z_{j, i_j, i_{j + 1}, i} \gets z_{j, i_j, i_{j + 1}, i} - \alpha^{(t)} \cdot \marketSegmentPmf^{(t)}(\val_i)$.
        \ENDFOR
    \ENDFOR 
\ENDWHILE
\RETURN $\MarketSegSet$
\end{algorithmic}
\endgroup
\end{algorithm}

\subsection{Market Segmentation from Reduced Fragment Decompositions} \label{sec: segmenting market by reduced fragment decompositions}
We claim that any $(\bx, \bz)$ can be turned into a segmentation of the aggregate market $\AggMarket$ (which is similar to an extremal market decomposition, as explained below), if there exist a flow $\by$, such that $(\bx, \by, \bz)$ satisfies the constraints of the discounted flow network. We prove the existence by constructiion: The core idea is to repeatedly find augmenting paths and turns each path into a weighted market that is ``essentially extremal''. The procedure is described in \Cref{alg: market segmentation}.

To see why the procedure works, we first observe that 
the the triple $(\bx, \by, \bz)$ satisfies all constraints (exact composition, flow conservation, and proper domination) throughout the execution of the above procedure.

\begin{lemma}
\label{lem: market segmentation - x y z feasible}
    Let $\left(\bx^{(1)}, \by^{(1)}, \bz^{(1)}\right)$ be the input reduced fragment decomposition to \Cref{alg: market segmentation}. For each outer iteration $t$ executed by the algorithm, let $\left(\bx^{(t+1)},\by^{(t+1)},\bz^{(t+1)}\right)$ denote the tensors obtained from $\left(\bx^{(t)},\by^{(t)},\bz^{(t)}\right)$ after applying the updates in Lines~\ref{alg-line: market segmentation - x}--\ref{alg-line: market segmentation - z}. Then, for every such $t$, the triple $\left(\bx^{(t+1)}, \by^{(t+1)}, \bz^{(t+1)}\right)$ satisfies the constraints (outgoing/incoming flow conservation, fragment composition, and proper domination) as long as $\left(\bx^{(t)}, \by^{(t)}, \bz^{(t)}\right)$ satisfies the same constraints.
\end{lemma}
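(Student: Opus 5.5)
The plan is to verify each family of constraints directly, by tracking how much each variable on the chosen path is decreased in iteration $t$. The key invariants come from the inner while-loop (Lines~\ref{alg-line: market segmentation - iteration t while starts}--\ref{alg-line: market segmentation - iteration t while ends}). Throughout, exact composition is read with respect to the \emph{residual} aggregate market $\AggMarket - \sum_{t'\le t}\alpha^{(t')}\MarketSegment^{(t')}$.

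\textbf{Step 1: path invariants.} I will first establish three facts about the path $i_1,\dots,i_{j^{(t)}+1}$ chosen in iteration $t$, writing $j^{(t)}$ for its last index:
\begin{itemize}
\item $\mscdf_{j+1}=\mscdf_j\,\df_{j,i_j,i_{j+1}}$ for every $j<j^{(t)}$;
\item $0<\mscdf_j\le \rsdf_{j,i_j,i_{j+1},i_{j+2}}$ for every $j<j^{(t)}$;
\item $\mscdf_j\le \rsdw_{j,i_j,i_{j+1}}$ for every $j\le j^{(t)}$.
\end{itemize}
The first holds by Line~\ref{alg-line: market segmentation - p_j+1}. Line~\ref{alg-line: maket segmentation - update p_j} rescales all of $\mscdf_1,\dots,\mscdf_j$ by a common factor, so the first relation survives later rescalings. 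The same common factor lies in $(0,1]$, so earlier bounds of the form $\mscdf_{j'}\le \rsdf$ also survive.

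The second fact is enforced at the moment edge $j$ is chosen. The third follows from flow conservation of $(\bx^{(t)},\by^{(t)},\bz^{(t)})$:
\begin{itemize}
\item For $j=1$, $\mscdf_1$ starts equal to $\rsdw_{1,i_1,i_2}$ and only decreases.
\item For $j<j^{(t)}$, outgoing conservation gives $\mscdf_j\le\rsdf_{j,\cdot}\le\sum_r\rsdf_{j,i_j,i_{j+1},r}=\rsdw_{j,i_j,i_{j+1}}$.
\item For $j+1$, incoming conservation gives $\rsdw_{j+1,i_{j+1},i_{j+2}}\ge \rsdf_{j,i_j,i_{j+1},i_{j+2}}\df_{j,i_j,i_{j+1}}\ge \mscdf_j\df_{j,i_j,i_{j+1}}=\mscdf_{j+1}$.
\end{itemize}
The last bullet also shows that the $\rsdw$ appearing in denominators in Line~\ref{alg-line: f(v_i)} is positive. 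A minor side point is that the inner loop always finds an edge with $\rsdf>0$. This holds because $\rsdw_{j,i_j,i_{j+1}}>0$ and $i_{j+1}\le n$, so outgoing conservation supplies such an edge.

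\textbf{Step 2: flow conservation and nonnegativity.} With the invariants in hand, nonnegativity of the updated $\bx,\by$ is immediate.
\begin{itemize}
\item \emph{Outgoing conservation at node $j<j^{(t)}$.} Both $\rsdw_{j,i_j,i_{j+1}}$ and the single outgoing edge variable $\rsdf_{j,i_j,i_{j+1},i_{j+2}}$ drop by $\mscdf_j$, so the equality is preserved.
\item \emph{The last node $j^{(t)}$.} Here $i_{j^{(t)}+1}=n+1$, so this node carries no outgoing constraint.
\item \emph{Incoming conservation at node $j+1$.} The left side drops by $\mscdf_{j+1}$, and the right side drops by $\mscdf_j\df_{j,i_j,i_{j+1}}$; these are equal by the first invariant.
\item \emph{Node $j=1$.} This node has no incoming constraint.
\end{itemize}
Nodes and edges off the path are untouched, so their constraints continue to hold.

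\textbf{Step 3: composition and proper domination.}
\begin{itemize}
\item \emph{Supports.} Proper domination at $\val_r$ gives $\sum_{i\ge r}\rsdz_{j,\ell,r,i}\le \Frag_{j,\ell,r}(\val_r)\,\rsdw_{j,\ell,r}=0$, and the total-mass identity rules out mass below $\ell$. Hence $\rsdz_{j,\ell,r,\cdot}$ is supported on $[\ell,r)$.
\item \emph{Disjointness.} The intervals $[i_j,i_{j+1})$ along the path are disjoint, since the indices $i_j$ are nondecreasing.
\item \emph{The $\bz$ update.} By the support and disjointness facts, Line~\ref{alg-line: f(v_i)} gives $\alpha^{(t)}\marketSegmentPmf^{(t)}(\val_i)=\rsdz_{j,i_j,i_{j+1},i}\,\mscdf_j/\rsdw_{j,i_j,i_{j+1}}$ for $i\in[i_j,i_{j+1})$. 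So the $\bz$ update multiplies the block $\rsdz_{j,i_j,i_{j+1},\cdot}$ by $1-\mscdf_j/\rsdw_{j,i_j,i_{j+1}}\in[0,1]$. This factor is exactly the one by which $\rsdw_{j,i_j,i_{j+1}}$ is multiplied.
\item \emph{Proper domination.} Its constraints are homogeneous linear in $(\rsdw_{j,\ell,r},\rsdz_{j,\ell,r,\cdot})$, so a common nonnegative scaling preserves them together with nonnegativity.
\item \emph{Exact composition.} The total decrease of $\sum_{j,\ell,r}\rsdz_{j,\ell,r,i}$ equals $\alpha^{(t)}\marketSegmentPmf^{(t)}(\val_i)$, which is precisely the mass removed from the residual aggregate market. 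Hence exact composition continues to hold with respect to that residual.
\end{itemize}

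I expect the main obstacle to be Step 1, specifically checking that the in-loop rescaling keeps all earlier bounds $\mscdf_{j'}\le\rsdf$ valid while preserving the discount relations. I would handle it by noting that each rescaling multiplies the whole prefix by one common factor, and that factor is at most $1$ because it is applied only when $\rsdf<\mscdf_j$.
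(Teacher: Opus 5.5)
Your route is the paper's: edge-by-edge bookkeeping for the two conservation laws using $\mscdf_{j+1}=\mscdf_j\,\df_{j,i_j,i_{j+1}}$, plus the observation that each touched $\bz$-block and its $\rsdw$ are rescaled by the same factor $1-\mscdf_j/\rsdw_{j,i_j,i_{j+1}}$. Your Step~1 invariants, $\mscdf_j\le\rsdf_{j,i_j,i_{j+1},i_{j+2}}$ and $\mscdf_j\le\rsdw_{j,i_j,i_{j+1}}$, are a genuine improvement, because the paper never checks nonnegativity of the updated variables.

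\textbf{The gap.} The ``Supports'' bullet of Step~3 contains a false step. The upper half (domination at $\val_r$) is fine. The lower half is not: the total-mass identity plus tail domination does \emph{not} rule out mass below $\ell$. For every $\val\le\val_\ell$ one has $\Frag_{j,\ell,r}(\val)=\frac{\val_r-\val_\ell}{\val_r-\cost_j}$. So domination at such $\val$ only says the tail of $\rsdz_{j,\ell,r,\cdot}$ is at most its total mass, which is automatic. In general, first-order domination by $\rsdw_{j,\ell,r}\Frag_{j,\ell,r}$ permits shifting mass downward.

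\textbf{Counterexample.} Take $m=1$, $\cost_1=0$, $(\val_1,\val_2,\val_3)=(1,2,3)$, and $\AggMarket$ a point mass at $\val_1$. Set $\rsdw_{1,2,4}=1$ and $\rsdz_{1,2,4,\cdot}=(1,0,0)$. Every stated constraint holds. The algorithm uses $(i_1,i_2)=(2,4)$ with $\mscdf_1=1$. Afterwards $\rsdw_{1,2,4}=0$, while the $\bz$-update, which only runs over $i\in[i_1,i_2)$, leaves $\rsdz_{1,2,4,1}=1$. This breaks both the total-mass identity and exact composition for the residual market.

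\textbf{Consequence.} Two of your steps rest on this false claim. Your identification of the $\bz$-update with block scaling needs no cross-block contributions inside $[i_j,i_{j+1})$. Your exact-composition step needs $\marketSegmentPmf^{(t)}(\val_i)=0$ for $i<i_1$. In fact the lemma itself fails without a support restriction.

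\textbf{Repair.} Use the convention the paper adopts silently; its proof simply asserts $\rsdz_{j,i_j,i_{j+1},i}=0$ for $i\notin[i_j,i_{j+1})$. Include $\rsdz_{j,\ell,r,i}=0$ for $i\notin[\ell,r)$ among the constraints. This holds for every $(\bx,\bz)$ arising from a reduced fragment decomposition, since each $\Frag_{j,\ell,r,\FragSuppSet}$ is supported in $[\val_\ell,\val_r)$. Then carry it as an invariant; it is preserved because the update only rescales entries with $i\in[i_j,i_{j+1})$. With that in place, the rest of your Step~3 goes through.
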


Since flow conservation remains valid throughout the execution of the above procedure, as long as $\bx \ne 0$, there exists a complete path (assuming in iteration $t$) from the leftmost layer of the flow network to some fragment to the right where $i_{j^{(t)}+1} = n+1$ ( $j^{(t)}$ is the final value of $j$ in iteration $t$), which implies $\val_{i_{j^{(t)}+1}} = \op_{j^{(t)}+1} = \infty$, where all the weights of undominated fragments $(\rsdw_{j, \ell, r})$ flowing through and the flow variables $(\rsdf_{j, \ell, k, r})$ involved are strictly positive.
This means we must be able to find an augmenting path. Therefore, when the procedure ends, all the weights of the reduced fragment $\rsdw_{j, \ell, r}$ and flow variables $\rsdf_{j, \ell, k, r}$ become zero.
\begin{corollary}
\label{lem: market segmentation stop condition}
    \Cref{alg: market segmentation} stops iff the remaining $\bx = \textbf{0}$.
\end{corollary}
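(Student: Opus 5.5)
We must prove Corollary: Algorithm 2 stops iff the remaining $\bx=\mathbf{0}$. The ``if'' direction is immediate from the outer while-loop guard. The content is the converse together with well-definedness and termination. We must show that while $\bx\neq\mathbf{0}$, each outer iteration can run to completion; that is, every choice the algorithm makes is available. We must also show that only finitely many outer iterations occur, so the loop actually reaches $\bx=\mathbf{0}$.

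The plan proceeds in three steps.

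\textbf{Step 1 (invariant).} By \Cref{lem: market segmentation - x y z feasible} and induction on $t$, every $(\bx^{(t)},\by^{(t)},\bz^{(t)})$ satisfies exact composition, both flow conservations and proper domination. Nonnegativity also persists, because each subtracted amount $\mscdf_j$ is at most the current $\rsdf$ or $\rsdw$ on the path after the rescaling in Line~\ref{alg-line: maket segmentation - update p_j}.

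\textbf{Step 2 (path existence).} Suppose $\bx\neq\mathbf{0}$. First we need a start $x_{1,i_1,i_2}>0$. Take any positive $\rsdw_{j,\ell,r}$ with $j$ minimal. If $j>1$, incoming conservation gives a positive $\rsdf_{j-1,\ell',\ell,r}$. Outgoing conservation for $(j-1,\ell',\ell)$ then gives $\rsdw_{j-1,\ell',\ell}>0$, contradicting minimality. So $j=1$.

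Next we extend the path. At a fragment $(j,i_j,i_{j+1})$ with $i_{j+1}\le n$ and $\rsdw>0$, outgoing conservation $\rsdw_{j,i_j,i_{j+1}}=\sum_r\rsdf_{j,i_j,i_{j+1},r}$ yields some $i_{j+2}$ with positive flow. Incoming conservation then gives $\rsdw_{j+1,i_{j+1},i_{j+2}}\ge \rsdf\cdot\df>0$, using $\df_{j,\ell,k}>0$ because $\cost_j<\val_\ell$.

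The inner loop must end. Each step increases $j$ by one, and fragments exist only for $j\le m$. For $j=m$ the only possible right endpoint is $\op_{m+1}=\infty$, i.e.\ $i_{m+1}=n+1$. Hence the inner loop ends within $m$ steps at some $i_{j^{(t)}+1}=n+1$. Along the path, the maintained $\mscdf_j$ are positive and bounded by the relevant $\rsdw$ and $\rsdf$. This gives a genuine augmenting path, so the iteration is well-defined whenever $\bx\neq\mathbf{0}$.

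\textbf{Step 3 (finiteness).} We argue that each outer iteration zeroes at least one positive variable. The starting $p_1=x_{1,i_1,i_2}$ is rescaled down exactly when some $\rsdf$ on the path is smaller than the current $\mscdf_j$, and then that $\rsdf$ becomes the binding value. At the end, $\mscdf$ therefore equals either $x_{1,i_1,i_2}$ or some $\rsdf_{j,\cdot}$ on the path, and that variable is set to $0$ by Line~\ref{alg-line: market segmentation - x} or~\ref{alg-line: market segmentation - y}.

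A zeroed variable stays zero. Variables only decrease, and a path uses only positive entries. So there are at most as many iterations as the polynomially many $\rsdw,\rsdf$ entries. Hence the algorithm terminates, and by Step 2 it cannot halt with $\bx\neq\mathbf{0}$ except through the loop guard. Therefore it stops exactly when $\bx=\mathbf{0}$.

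When $\bx=\mathbf{0}$, outgoing conservation also forces $\by=\mathbf{0}$, and proper domination forces $\bz=\mathbf{0}$.

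The main obstacle is Step 3. We must verify carefully that the proportional rescaling keeps $\mscdf_{j'}\le$ the corresponding $\rsdw$ and $\rsdf$ for every earlier $j'$, not just the current one. This relies on the multiplicative structure $\mscdf_{j+1}=\mscdf_j\df_{j,i_j,i_{j+1}}$ matching the incoming-flow discount. It is also exactly what guarantees nonnegativity in Step 1.
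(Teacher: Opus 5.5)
Your proof is correct, and its core (Step 2) is the paper's own argument. Flow conservation, which the feasibility-preservation lemma maintains across outer iterations, guarantees a positive $\rsdw_{1,i_1,i_2}$ and a chain of positive $\rsdf$ entries ending at $i_{j^{(t)}+1}=n+1$ whenever $\bx\neq\mathbf{0}$, so the algorithm can only exit through the loop guard.

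Two parts of your write-up are not in the paper's proof. The first is your nonnegativity check, $\mscdf_{j+1}=\mscdf_j\,\df_{j,i_j,i_{j+1}}\le \rsdf\cdot\df_{j,i_j,i_{j+1}}\le \rsdw_{j+1,i_{j+1},i_{j+2}}$. The second is your finiteness argument: each outer iteration zeroes a distinct, previously positive $\rsdw_{1,\cdot,\cdot}$ or $\rsdf$ entry. The paper only argues that the procedure cannot get stuck and tacitly assumes it terminates, so your version is the more complete one.
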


Also, observe that the new market $\MarketSegment$ constructed in each iteration of the above procedure satisfies: $\op(\MarketSegment) = (\val_{i_1}, \val_{i_2}, \dots, \val_{i_{j^{(t)}}}, \infty, \dots, \infty)$, and in fact, for each $j \in [j^{(t)}]$, $\{v_{i_j}, v_{i_{j + 1}}\} \subseteq \Op_j(\MarketSegment) \subseteq [v_{i_j}, v_{i_{j + 1}}]$.
This implies that the seller surplus and the maximum/minimum buyer surplus in the new market $\MarketSegment$, weighted by $\alpha$, is precisely the change in the seller / buyer surplus induced by $(\bx, \bz)$.
Summing over all iterations, we see that the above procedure turns $(\bx, \by, \bz)$ into a way to segment the aggregate market $\AggMarket$ while preserving the seller / buyer surplus.
\begin{lemma}
\label{lem: market segment - exists segmentation preserves surplus}
    Given the aggregate market $\AggMarket$,
    if $(\bx, \by, \bz)$ satisfies the exact composition, flow conservation, and proper domination constraints (depicted as the discounted flow network), then there exists a way to segment $\AggMarket$ (as \Cref{alg: market segmentation}), which induces the same seller surplus and maximum/minimum buyer surplus as $(\bx, \bz)$.
\end{lemma}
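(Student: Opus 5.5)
We run \Cref{alg: market segmentation} on $(\bx,\by,\bz)$ and check three things. First, it terminates with a segmentation of $\AggMarket$. Second, every segment $\MarketSegment^{(t)}$ has the claimed optimal price structure. Third, the surplus of each segment, weighted by $\alpha^{(t)}$, equals the decrease of the linear functionals of \Cref{def: surplus by reduced fragment decompositions (x z)} in that iteration. Summing over iterations then gives the result, because all functionals are linear and vanish at $\bx=\textbf{0}$.

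\textbf{Termination and composition.} By \Cref{lem: market segmentation - x y z feasible}, every iteration preserves exact composition, flow conservation and proper domination. By \Cref{lem: market segmentation stop condition}, the algorithm stops only when $\bx=\textbf{0}$. At that point the mass constraint in proper domination forces $\bz=\textbf{0}$ as well. Each iteration subtracts exactly $\alpha^{(t)}\valpmf^{(t)}_d$ from $\bz$, so exact composition gives $\sum_t \alpha^{(t)}\MarketSegment^{(t)}=\AggMarket$, i.e.\ a segmentation. For the iteration count: the rescaling in line~\ref{alg-line: maket segmentation - update p_j} makes the bottleneck edge, or the starting $x_{1,i_1,i_2}$, drop to zero. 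Hence some variable is zeroed in each iteration, and the number of iterations is at most the number of variables, which is polynomial.

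\textbf{Structure of a segment.} Fix an iteration with path $i_1<i_2\le\dots\le i_{j^{(t)}+1}=n+1$ and weights $\mscdf_j$. By construction, $\mscdf_{j+1}=\mscdf_j\, \df_{j,i_j,i_{j+1}}$. On $[\val_{i_j},\val_{i_{j+1}})$, the segment's pmf is the scaled average $\mscdf_j\,\rsdz_{j,i_j,i_{j+1},\cdot}/\rsdw_{j,i_j,i_{j+1}}$. By proper domination, its tail on that piece is at most $\mscdf_j\Frag_{j,i_j,i_{j+1}}$, with equality at the left endpoint $\val_{i_j}$ (total mass) and at $\val_{i_{j+1}}$. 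Adding the mass $\mscdf_{j+1}$ carried by later pieces, the tail of the (unnormalized) segment satisfies
\begin{align*}
    \ValDist(\val_{i_j}) = \mscdf_j~, \qquad \ValDist(\val) \le \mscdf_j\,\frac{\val_{i_j}-\cost_j}{\val-\cost_j} \quad \text{for } \val\in[\val_{i_j},\val_{i_{j+1}}]~.
\end{align*}
This uses \Cref{lem:locality of fragment} for the undominated fragment, whose tail at $\val$ is $\frac{\val_{i_j}-\cost_j}{\val-\cost_j}$ minus a constant; adding $\mscdf_{j+1}$ cancels that constant exactly via the discount factor. So on the $j$-th piece, $(\val-\cost_j)\ValDist(\val)\le(\val_{i_j}-\cost_j)\mscdf_j$, with equality at both endpoints.

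\textbf{Main obstacle: optimality outside the piece.} We must show the revenue for cost $\cost_j$ is at most $(\val_{i_j}-\cost_j)\mscdf_j$ at every price, not only on its own piece. We would proceed by induction over pieces, using that $\cost_{j'}$ increases in $j'$. For a price in piece $j'>j$, we have $(\val-\cost_{j'})\ValDist(\val)\le(\val_{i_{j'}}-\cost_{j'})\ValDist(\val_{i_{j'}})$. Since $\val\ge\val_{i_{j'}}$ and $\cost_j<\cost_{j'}$, we get $\frac{\val-\cost_j}{\val-\cost_{j'}}\le\frac{\val_{i_{j'}}-\cost_j}{\val_{i_{j'}}-\cost_{j'}}$, and chaining gives the bound for $\cost_j$. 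Prices below $\val_{i_j}$ are handled symmetrically, using that revenue for the smaller cost was maximized at the right endpoint $\val_{i_j}$. Prices outside the support give no more revenue than the next support point to the right, so they never need separate treatment. If $j>j^{(t)}$, positive revenue is impossible only up to the convention $\op_j=\infty$; these types contribute nothing to either side.

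\textbf{Accounting.} With optimality established, $\{\val_{i_j},\val_{i_{j+1}}\}\subseteq\Op_j(\MarketSegment)\subseteq[\val_{i_j},\val_{i_{j+1}}]$, and \Cref{lem: optimal price set - scaled F} lets us ignore the normalization by $\alpha^{(t)}$. The weighted seller surplus of the segment is $\sum_j\costpmf(\cost_j)(\val_{i_j}-\cost_j)\mscdf_j$, which is exactly the decrease in $\SellerSurplus((\bx,\bz))$. Under maximal tie-breaking (price $\val_{i_j}$ for $\cost_j$), a buyer with value in piece $j$ buys exactly for the types $j'\le j$. Under minimal tie-breaking (price $\val_{i_{j+1}}$), she buys exactly for $j'<j$. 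These are precisely the welfare formulas $\MaxSocialWelfare$ and $\MinSocialWelfare$ applied to the subtracted $\bz$. Subtracting the seller surplus yields the same statement for buyer surplus, and summing over iterations completes the proof.
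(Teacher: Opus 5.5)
Apart from one point, your argument follows the paper's proof. You run \Cref{alg: market segmentation}, combine proper domination with $\mscdf_{j+1}=\mscdf_j\,\df_{j,i_j,i_{j+1}}$ to get $\ValDist(\val_{i_j})=\mscdf_j$ and $(\val-\cost_j)\ValDist(\val)\le(\val_{i_j}-\cost_j)\mscdf_j$ on the $j$-th piece, then match each weighted segment's surplus with the drop in the linear functionals of \Cref{def: surplus by reduced fragment decompositions (x z)}, and sum.

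For prices outside the $j$-th piece, you redo the cross-piece induction by hand. The paper instead notes that the segment lies pointwise below the scaled full-support extremal market with breakpoints $\val_{i_1},\dots,\val_{i_{j^{(t)}}}$ and coincides with it at every breakpoint. It then imports optimality from \Cref{lem:optimal_price_for_extremal_market} and reads off $\min\Op_j=\val_{i_j}$ and $\max\Op_j=\val_{i_{j+1}}$ from \Cref{lem:monotonicity of optimal price set -v2}. The two routes are equivalent. Your inequalities are weak as written, though, so to claim $\Op_j\subseteq[\val_{i_j},\val_{i_{j+1}}]$ you should note that the ratio comparisons are strict for prices strictly outside the piece. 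Your iteration bound and the residual form of exact composition are correct additions that the paper leaves implicit.

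The genuine gap is your sentence about types $j>j^{(t)}$: nothing in the constraints makes these types inert. The terminal undominated fragment $\Frag_{j,\ell,n+1}$ is supported on all of $\ValSet\cap[\val_\ell,\infty)$. Proper domination therefore allows $\bz$ to put mass above $\cost_{j+1}$. Seller type $\cost_{j+1}$ then earns positive revenue, and generates welfare, in the segment, and neither $\SellerSurplus((\bx,\bz))$ nor the welfare formulas ever count it.

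Here is a concrete instance:
\begin{itemize}
\item Take $\val_1=1$, $\val_2=2$, $\cost_1=0$, $\cost_2=3/2$, and $\valpmf^*(\val_1)=\valpmf^*(\val_2)=1/2$.
\item Set $\rsdw_{1,1,3}=1$ and $\rsdz_{1,1,3,1}=\rsdz_{1,1,3,2}=1/2$, with all other variables zero.
\item Every constraint holds: proper domination with equality, and the flow constraints trivially.
\item The algorithm returns the single segment $\AggMarket$. In it, type $\cost_2$ earns $1/4$, so the seller surplus is $\costpmf(\cost_1)+\costpmf(\cost_2)/4$, not $\SellerSurplus((\bx,\bz))=\costpmf(\cost_1)$.
\item Type $\cost_2$ can price at $\val_2$ in every segment, so no segmentation attains $\costpmf(\cost_1)$ when $\costpmf(\cost_2)>0$.
\end{itemize}
So the gap cannot be closed by a better argument.

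The paper's proof has the same hole. It sums only over $j\le j^{(t)}$, and \Cref{lem:optimal_price_for_extremal_market} for $\op_j=\infty$ silently assumes the support lies below $\cost_j$, which \Cref{def: extremal markets} does not impose.

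The repair is an extra constraint: for $j<m$, require $\rsdz_{j,\ell,n+1,i}=0$ whenever $\val_i>\cost_{j+1}$. Every decomposition produced by \Cref{alg:extremal market decomposition} satisfies it, since there $\op_{j+1}=\infty$ means the residual has no mass above $\cost_{j+1}$. With this constraint, every segment has no mass above $\cost_{j^{(t)}+1}$. Types $j>j^{(t)}$ then earn zero and create no welfare, and your accounting goes through.
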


\newpage
\bibliographystyle{plainnat}
\bibliography{mybib}

\newpage
\appendix

\section{Missing Proofs in \Cref{sec:prelim}}
\label{apx:prelim proof}
\begin{proof}[Proof of \Cref{lem:monotonicity of optimal price set -v2}] 
Given the market $\ValDist$ and the cost $\cost_j$, we denote the largest element in the optimal price set by $r_j$ ($r_j = \max \Op_j(\ValDist)$) and the smallest element in the optimal price set by $l_j$ ($l_j = \min \Op_j(\ValDist)$).  
Given $r_j$ and $l_{j+1}$, which are the optimal price of $\cost_j$ and $\cost_{j+1}$ separately, we have 
\begin{equation}
\label{eq:oprimal price set monotonicity - 1}
    \ValDist(r_j)\cdot(r_j - \cost_{j+1}) \le \ValDist(l_{j+1}) \cdot (l_{j+1} - \cost_{j+1}) 
\end{equation} 
and 
\begin{equation}
\label{eq:oprimal price set monotonicity - 2}
    \ValDist(r_j) \cdot (r_j - \cost_j) \ge \ValDist(l_{j+1}) \cdot (l_{j+1} - \cost_{j})~.
\end{equation}
We can further derive that
\[
   \cost_{j+1}\cdot (\ValDist(l_{j+1}) - \ValDist(r_j))  \le l_{j+1}\cdot \ValDist(l_{j+1}) - r_j \cdot \ValDist(r_j)  \le \cost_j \cdot ( \ValDist(l_{j+1}) - \ValDist(r_j))~.
\]
Since $\cost_j < \cost_{j+1}$, then $\ValDist(l_{j+1}) - \ValDist(r_j) \le 0$. Let us discuss the potential value of $\ValDist(l_{j+1})$ and $\ValDist(r_j)$ in these two cases.
    \begin{itemize}
        \item If $\ValDist(l_{j+1}) = \ValDist(r_j)$, the assumption $\max_\val (\val-\cost_j)\cdot \ValDist(\val) > 0$ indicates that $\ValDist(r_j) > 0$, then $\ValDist(l_{j+1})>0$ as well. Combining with \Cref{eq:oprimal price set monotonicity - 1} and \Cref{eq:oprimal price set monotonicity - 2}, we can derive
        $l_{j+1} = r_j$, then
        \[
            \max \Op_j(\ValDist) = \min \Op_{j+1}(\ValDist)~;
        \]
        \item if $\ValDist(l_{j+1}) - \ValDist(r_j) < 0$, we have $ r_j < l_{j+1}$, then
        \[
            \max \Op_j(\ValDist) < \min \Op_{j+1}(\ValDist)~.
        \]
     \end{itemize}
Therefore, we have proved the monotonicity of $\Op_j(\ValDist)$.
\end{proof}

\begin{proof}[Proof of \Cref{lem: optimal price set - scaled F}]
    Given a fixed $\alpha \in [0,1]$, $\Op_j(\alpha \ValDist) = \argmax_\val (\val - \cost_j) \cdot \alpha \ValDist(\val) = \argmax_\val (\val - \cost_j) \cdot \ValDist(\val) 
    = \Op_j(\ValDist)$. We conclude the lemma.
\end{proof}

\section{Missing Proofs in \Cref{sec:extremal}}
\label{apx:extremal proof}

\begin{proof}[Proof of \Cref{lem:optimal_price_for_extremal_market}] 
We separate our proof based on whether $\op_j= \infty$ or $\op_j \ne \infty$.

\textbf{When $\op_j= \infty$}, it satisfies $\val_i \notin S, \forall \val_i \in \{v \in \ValSet: \val > \cost_j\}.$ Therefore, we have $\ValDist(\val) = 0$ for all $\val \in \{\val \in \ValSet: \val > \cost_j\}$, we could derive that the maximum seller surplus of $c_j$ is 0. Therefore, the optimal price set is $\Op_j(\ValDist_{\op, \SuppSet}) =  \{\val \in \ValSet: \ValDist_{\op, \SuppSet}(\val)  = 0 \}$ and the optimal price $\op_j(\ValDist) = \op_j = \infty$.

\textbf{When $\op_j \ne \infty$}, if we prove for any $s \in [\op_j , \op_{j + 1}] \cap \SuppSet$,
\begin{align*}
    \ValDist_{\op, \SuppSet}(s) \cdot (s - \cost_j) = \ValDist_{\op, \SuppSet}(\val_i) \cdot (\val_i - \cost_j), \quad & \val \in [\op_j , \op_{j + 1}] \cap \SuppSet~, \\
    \ValDist_{\op, \SuppSet}(s) \cdot (s - \cost_j) > \ValDist_{\op, \SuppSet}(\val_i) \cdot (\val_i - \cost_j), \quad & \val \notin [\op_j , \op_{j + 1}] \cap \SuppSet~,
\end{align*}
then we prove the optimal set $\Op_j(\ValDist_{\op, \SuppSet}) = [\op_j, \op_{j+1}] \cap \SuppSet$ and the minimum optimal price $\op_j(\ValDist_{\op, \SuppSet}) = \op_j$.
The first equality can be easily observed from the definition of the extremal market.
For the inequality, fix $j \in [m]$. For any $\val \notin [\op_j , \op_{j + 1}] \cap \SuppSet$ we discuss three cases:
\begin{itemize}
    \item \textbf{if $\val \in [\op_j, \op_{j+1}]$}. Since $\val \notin  [\op_j , \op_{j + 1}] \cap \SuppSet$, $\val \notin \SuppSet$.  Therefore, $\ValDist_{\op, \SuppSet}(\val) = \ValDist_{\op, \SuppSet}(\val^{+})$, where $\val^{+} = \min \{v'\in \SuppSet: \val'> \val\}$, therefore, 
    \[
      \ValDist_{\op, \SuppSet}(\val) \cdot (\val - \cost_{j}) = \ValDist_{\op, \SuppSet}(\val^{+}) \cdot (\val - \cost_{j}) < \ValDist_{\op, \SuppSet}(\val^{+}) \cdot (\val^{+} - \cost_{j})~;
    \]
    \item \textbf{if $\val > s$}: we assume $\val \in S$ without loss.\footnote{\label{ft:v not in S}If $\val \notin \SuppSet$, either $\ValDist_{\op, \SuppSet}(\val) = 0$ or we could always find a $\val_{i+k} \in \SuppSet$ such that $\ValDist_{\op, \SuppSet}(\val) = \ValDist_{\op, \SuppSet}(\val^{+}) \ne 0$, in which case $\ValDist_{\op, \SuppSet}(\val) \cdot (\val - \cost_{j}) < \ValDist_{\op, \SuppSet}(\val^{+}) \cdot (\val^{+} - \cost_{j})$.}
    We would like to prove for all $ k \in [j+1: m]$, it has $\ValDist_{\op, \SuppSet}(s) \cdot (s - \cost_j) > \ValDist_{\op, \SuppSet}(\val) \cdot (\val - \cost_j), \forall \val \in (\op_k , \op_{k + 1}] \cap \SuppSet$ (recall that $q_{m + 1} = \infty$). We prove it by induction:
    \begin{enumerate}
        \item Given $k = j+1$ and $\val \in  (\op_{j+1} , \op_{j + 2}] \cap \SuppSet$, such equality
        $
         \ValDist_{\op, \SuppSet}(\op_{j+1}) \cdot (\op_{j+1} - \cost_{j+1}) = \ValDist_{\op, \SuppSet}(\val) \cdot (\val - \cost_{j+1})
        $ holds. 
        
        Since $\val > \op_{j+1} >\cost_{j+1} > \cost_{j} \ge 0$ and the function $h(c) = (\val -c)/(\op_{j+1} - c)$ is increasing in $c$ with $c < \op_{j+1}$, then $\frac{\val - \cost_{j+1}}{\op_{j+1} - \cost_{j+1}} > \frac{\val - \cost_{j}}{\op_{j+1} - \cost_{j}}$.  
        We can derive that 
        \[
            \ValDist_{\op, \SuppSet}(\op_{j+1}) \cdot (\op_{j+1} - \cost_{j}) > \ValDist_{\op, \SuppSet}(\val) \cdot (\val - \cost_{j})~.
        \]
        Thus, given $s \in [\op_j, \op_{j+1}] \cap \SuppSet$, we can prove that 
        \begin{align*}
            \ValDist_{\op, \SuppSet}(s) \cdot (s - \cost_{j}) > \ValDist_{\op, \SuppSet}(\val) \cdot (\val - \cost_{j})~.
        \end{align*}  
        \item Suppose given $k \ge j+1$, it has $\ValDist_{\op, \SuppSet}(s) \cdot (s - \cost_j) > \ValDist_{\op, \SuppSet}(\val) \cdot (\val - \cost_j), ~\forall \val \in (\op_k , \op_{k + 1}] \cap\SuppSet$, we prove it also holds for $k+1$. Given $\val \in (\op_{k+1}, \op_{k+2}] \cap \SuppSet$, we have
        \[
            \ValDist_{\op, \SuppSet}(\op_{k+1}) \cdot (\op_{k+1} - \cost_{k+1}) = \ValDist_{\op, \SuppSet}(\val) \cdot (\val - \cost_{k+1})~.
        \]
         Similarly, since $\frac{r - \cost_{k+1}}{\op_{k+1} - \cost_{k+1}} > 1$ and $\cost_{k+1} > \cost_{j} \ge 0$, we have $\frac{\val - \cost_{k+1}}{\op_{k+1} - \cost_{k+1}} > \frac{r - \cost_{j}}{\op_{k+1} - \cost_{j}}$. Then
         \[
             \ValDist_{\op, \SuppSet}(\op_{k+1}) \cdot (\op_{k+1} - \cost_{j}) > \ValDist_{\op, \SuppSet}(\val) \cdot (\val - \cost_{j})~.
        \]
        Since we know
        \[
             \ValDist_{\op, \SuppSet}(\supp) \cdot (\supp - \cost_{j}) > \ValDist_{\op, \SuppSet}(\op_{k+1}) \cdot (\op_{k+1} - \cost_{j})~, \quad \forall \supp \in [\op_j, \op_{j+1}] \cap \SuppSet~.
        \]
        Thus, given $s \in [\op_j, \op_{j+1}] \cap \SuppSet$, we can prove that
        \begin{align*}
             \ValDist_{\op, \SuppSet}(s) \cdot (s - \cost_{j})
             > \ValDist_{\op, \SuppSet}(\val) \cdot (\val - \cost_{j})~.
        \end{align*}
    \end{enumerate}
    Thus, we can prove that for the inequality holds for all $s \in [\op_j , \op_{j + 1}] \cap \SuppSet, \val \notin [\op_j , \op_{j + 1}] \cap \SuppSet$ and $\val >s$;
\item \textbf{if $\val < s$}:  we assume $\val > \cost_j$ and $\val \in \SuppSet$ without loss.\footref{ft:v not in S} We would like to prove for all $ k \in [1, j-1]$, it has $\ValDist_{\op, \SuppSet}(s) \cdot (s - \cost_j) > \ValDist_{\op, \SuppSet}(\val) \cdot (\val - \cost_j), ~\forall s \in [\op_j , \op_{j + 1}] \cap \SuppSet, \val \in [\op_k , \op_{k + 1}) \cap \SuppSet$. We prove it by backward induction:
    \begin{enumerate}
        \item Given  $k = j-1$, for every $\val \in  [\op_{j-1} , \op_{j}) \cap \SuppSet$ we have
        \[
            \ValDist_{\op, \SuppSet}(\op_j) \cdot (\op_j - \cost_{j-1}) = \ValDist_{\op, \SuppSet}(\val) \cdot (\val - \cost_{j-1})~,
        \]
        based on the first identity. Since $0<\frac{\val - \cost_{j-1}}{\op_j - \cost_{j-1}} < 1$ and $0\le \cost_{j-1} < \cost_{j}$, we have $\frac{\val - \cost_{j-1}}{\op_j - \cost_{j-1}} > \frac{\val - \cost_{j}}{\op_j - \cost_{j}}$. Then
        \begin{equation}
        \label{eq:extremal_market_property_proof_2}
            \ValDist_{\op, \SuppSet}(\op_j) \cdot (\op_j - \cost_{j}) > \ValDist_{\op, \SuppSet}(\val) \cdot (\val - \cost_{j})~.
        \end{equation}
        Thus, given $s \in [\op_j, \op_{j+1}] \cap \SuppSet$, we can prove that 
        \begin{align*}
              \ValDist_{\op, \SuppSet}(s) \cdot (s - \cost_{j}) > \ValDist_{\op, \SuppSet}(\val) \cdot (r - \cost_{j})~.
        \end{align*}
        \item Suppose given $k$, it has $\ValDist_{\op, \SuppSet}(s) \cdot (s - \cost_j) > \ValDist_{\op, \SuppSet}(\val) \cdot (\val - \cost_j), ~\forall \val \in [\op_k , \op_{k + 1}) \cap \SuppSet$, we prove it also holds for $k-1$. 
        Given $\val \in [\op_{k-1}, \op_{k}) \cap \SuppSet$, we have
        \[
            \ValDist_{\op, \SuppSet}(\op_{k}) \cdot (\op_{k} - \cost_{k-1}) = \ValDist_{\op, \SuppSet}(\val) \cdot (\val - \cost_{k-1})~.
        \]
         Similarly, since $0< \frac{\val - \cost_{k-1}}{\op_{k} - \cost_{k-1}} < 1$ and $0\le \cost_{k-1} < \cost_{j}$, we have $\frac{\val - \cost_{k-1}}{\op_{k} - \cost_{k-1}} > \frac{\val - \cost_{j}}{\op_{k} - \cost_{j}}$. Then
         \[
             \ValDist_{\op, \SuppSet}(\op_{k}) \cdot (\op_{k} - \cost_{j}) > \ValDist_{\op, \SuppSet}(\val) \cdot (\val - \cost_{j})~.
        \]
        We have the assumption that
        \[
             \ValDist_{\op, \SuppSet}(s) \cdot (s - \cost_{j}) > \ValDist_{\op, \SuppSet}(\op_{k}) \cdot (\op_{k} - \cost_{j}), \quad \forall \supp \in [\op_j, \op_{j+1}] \cap \SuppSet.
        \]
        Thus, given $s \in [\op_j, \op_{j+1}] \cap \SuppSet$, we can prove that 
        \begin{align*}
              \ValDist_{\op, \SuppSet}(s) \cdot (\supp - \cost_{j})
              > \ValDist_{\op, \SuppSet}(\op_{k}) \cdot (\op_{k} - \cost_{j})
              > \ValDist_{\op, \SuppSet}(\val) \cdot (\val - \cost_{j})~.
        \end{align*}
        
    \end{enumerate}
    Therefore, we conclude that the inequality holds for all $\val \notin [\op_j , \op_{j + 1}] \cap \SuppSet$ and $\val < s$.
\end{itemize}
In summary, when $\op_j \ne \infty$, both the equality and inequality hold. Therefore, the optimal price set $\Op_j(\ValDist_{\op, \SuppSet}) = [\op_j, \op_{j+1}] \cap \SuppSet$ and the minimum optimal price $\op_j(\ValDist_{\op, \SuppSet}) = \op_j$.
\end{proof}

\begin{proof}[Proof of \Cref{lem: extremal market decomposition - optimal price set expansion}] In the algorithm we assign $\op_j = \op_j(\ValDist)$ and recall that \Cref{lem: optimal price set of extremal market} proves $\op_j(\ValDist_{\op, \SuppSet}) = \op_j$, so these three notation stands for the same value. We denote them as $\op_j$ in the remaining proof to simplify the notation. We discuss $\cost_j$ in different cases and in each case we prove $\Op_j(\ValDist) \subseteq \Op_j(\alpha_{\op, \SuppSet} \cdot \ValDist_{\op, \SuppSet})$ and $\Op_j(\ValDist) \subseteq \Op_j(\ValDist - \alpha_{\op, \SuppSet} \cdot \ValDist_{\op, \SuppSet})$ separately.

\paragraph{Suppose $\max_\val(\val - \cost_j ) \cdot \ValDist(\val) = 0$.}
The optimal price set of $\cost_j$ in the market  
\[\Op_j(\ValDist) = \{\val \in \ValSet: (\val - \cost_j) \cdot \ValDist(\val)  = 0\}~.\]

For any $\val > \cost_j$, since $\ValDist(\val) = 0$, then $\val$ is not in the support set of $\ValDist$; thus, $\val \notin \SuppSet$. Then for any $\val > \cost_j$, it has $\ValDist_{\op, \SuppSet}(\val) = 0$. We can get the optimal set of $\cost_j$ in the extremal market 
\[
\Op_j(\ValDist_{\op, \SuppSet}) = \{\val \in \ValSet: (\val - \cost_j) \cdot \ValDist_{\op, \SuppSet}(\val)  = 0\}~.
\]
We just need to prove if $\ValDist(\val)  = 0$, then $\ValDist_{\op, \SuppSet}(\val) = 0$.
When $\ValDist(\val)  = 0$, it implies for any $\val' > \val$, $\ValDist(\val')  = 0$ also holds; thus, $\ValDist_{\op, \SuppSet}(\val) = 0$ also holds. 
Therefore, for every $\val \in \Op_j(\ValDist)$, we can derive $\val \in \Op_j(\ValDist_{\op, \SuppSet})$, which proves $\Op_j(\ValDist) \subseteq \Op_j(\ValDist_{\op, \SuppSet})$.

 Since for every $\val\ge \cost_j$, $0 \le (\val - \cost_j ) \cdot (\ValDist(\val) - \alpha_{\op, \SuppSet} \cdot \ValDist_{\op, \SuppSet}(\val)) \le  (\val - \cost_j ) \cdot \ValDist(\val)$,
$\max_\val(\val - \cost_j ) \cdot (\ValDist(\val) - \alpha_{\op, \SuppSet} \cdot \ValDist_{\op, \SuppSet}(\val)) = 0$ also holds. Therefore, 
\[
\Op_j(\ValDist - \alpha_{\op, \SuppSet} \cdot \ValDist_{\op, \SuppSet}) = \{\val \in \ValSet: (\val - \cost_j) \cdot \left(\ValDist(\val) - \alpha_{\op, \SuppSet} \cdot \ValDist_{\op, \SuppSet}(\val)\right)  = 0\}~.
\]
If $\ValDist(\val)  = 0$, then $\ValDist_{\op, \SuppSet}(\val) = 0$, and thus $\ValDist(\val) - \alpha_{\op, \SuppSet} \cdot \ValDist_{\op, \SuppSet}(\val) = 0$. Therefore, for every $\val \in \Op_j(\ValDist)$, we have
$\ValDist(\val_i) - \alpha_{\op, \SuppSet} \cdot \ValDist_{\op, \SuppSet}(\val_i) = 0$. We can derive that $\Op_j(\ValDist) \subseteq \Op_j(\ValDist - \alpha_{\op, \SuppSet} \cdot \ValDist_{\op, \SuppSet})$.
\paragraph{Suppose $\max_\val(\val - \cost_j ) \cdot \ValDist(\val) > 0$.}

\textbf{Part 1: we prove $\Op_j(\ValDist) \subseteq \Op_j(\alpha_{\op, \SuppSet} \cdot \ValDist_{\op, \SuppSet})$.} Since $\max_\val(\val - \cost_j ) \cdot \ValDist(\val) \ne 0$, we have $\op_j \in \SuppSet$. Due to the monotonicity of $\Op_j(\ValDist)$, we have $\max \Op_j(\ValDist) \le \min \Op_{j+1}(\ValDist) \le \op_{j+1}$ (considering $\op_{j+1}$ is either $\min \Op_{j+1}(\ValDist)$ or $\infty$). 
Therefore, $\Op_j(\ValDist)  \subseteq [\op_j, \op_{j+1}]$. 
For any value $\val \in \{\val' \in \ValSet : \val' \in [\op_j, \op_{j+1}], \val' \notin \SuppSet\}$, $\ValDist(\val) =\ValDist(\val^{+})$, where $\val^{+} = \min\{\val' \in \SuppSet: \val' > \val \}$.\footnote{If $\val^{+}$ does not exist, then $\ValDist(\val) = 0$.} 
The seller surplus of $\cost_j$ when setting price as $\val$ is $(\val-\cost_j)\cdot \ValDist(\val) = (\val-\cost_j)\cdot \ValDist(\val^{+})  <(\val^{+} -\cost_j)\cdot \ValDist(\val^{+})$, which shows $\val$ is not the optimal price of $\cost_j$. Therefore, if the value $\val \in \{\val' \in \ValSet: \val' \in [\op_j, \op_{j+1}], \val' \notin \SuppSet\}$, which is not in the support set $\SuppSet$, $\val_i \notin \Op_j(\ValDist)$ as well. 
Therefore,  $\Op_j(\ValDist)\subseteq [\op_j, \op_{j+1}] \cap \SuppSet = \Op_j(\ValDist_{\op, \SuppSet})$.
We conclude that $\Op_j(\ValDist)\subseteq \Op_j(\ValDist_{\op, \SuppSet})$.

\medskip
\textbf{Part 2: we prove $\Op_j(\ValDist) \subseteq \Op_j(\ValDist - \alpha_{\op, \SuppSet} \cdot \ValDist_{\op, \SuppSet})$.}

Recall that $\op_j \in \Op_j(\ValDist)$ when the optimal seller revenue is not $0$, for any $s \in \SuppSet$, if  $s \in  \Op_j(\ValDist)$,
\[\ValDist(\op_j)(\op_j-\cost_j) = \ValDist(s)(s-\cost_j)~,\]
if  $s \notin \Op_j(\ValDist)$
\[\ValDist(\op_j)(\op_j-\cost_j) > \ValDist(s)(s-\cost_j)~.\]
Since $\op_j \in \Op_j(\ValDist)$, we also know that given $\alpha_{\op, \SuppSet} \in [0,1]$, for all $s \in \Op_j(\alpha_{\op, \SuppSet} \cdot \ValDist_{\op, \SuppSet})$, it has $\ValDist_{\op, \SuppSet}(\op_j)(\op_j-\cost_j) = \ValDist_{\op, \SuppSet}(s)(s-\cost_j)$.
Thus, we can derive that if $s \in \Op_j(\alpha_{\op, \SuppSet} \cdot \ValDist_{\op, \SuppSet})$
\begin{equation}
\label{eq:lemma_subset_proof_1}
(\ValDist(\op_j) - \alpha_{\op, \SuppSet} \cdot \ValDist_{\op, \SuppSet}(\op_j)) (\op_j - \cost_j) \ge (\ValDist(s) - \alpha_{\op, \SuppSet} \cdot \ValDist_{\op, \SuppSet}(s)) (s-\cost_j)~, 
\end{equation}
where the equality holds if and only if $s \in \Op_j(\ValDist)$.\\
We discuss different cases of $\alpha$ in the while loop.
\begin{itemize}
    \item \textbf{Case 1: $\alpha = \alpha_{\text{runout}} < \alpha_{\text{shift}} $.} Then all the minimum optimal prices $\op_j$, where $j \in [m]$, remain them same. Thus, $\op(\ValDist - \alpha_{\op, \SuppSet} \cdot \ValDist_{\op, \SuppSet}) = \op(\ValDist)$. This indicates that $\op_j(\ValDist) \in \Op_j(\ValDist - \alpha_{\op, \SuppSet} \cdot \ValDist_{\op,\SuppSet})$ since we have proved that for all $s \in \Op_j(\ValDist)$
    \[
        (\ValDist(\op_j) - \alpha_{\op, \SuppSet} \cdot \ValDist_{\op, \SuppSet}(\op_j)) (\op_j - \cost_j) = (\ValDist(s) - \alpha_{\op, \SuppSet} \cdot \ValDist_{\op, \SuppSet}(s)) (s-\cost_j)~, 
    \]
    we conclude that $s \in \Op_j(\ValDist - \alpha_{\op, \SuppSet}\cdot \ValDist_{\op,\SuppSet}), ~\forall s \in \Op_j(\ValDist)$. Thus, $\Op_j(\ValDist) \subseteq \Op_j(\ValDist - \alpha_{\op, \SuppSet}\cdot \ValDist_{\op,\SuppSet})$.
    \item \textbf{Case 2: $\alpha =  \alpha_{\text{shift}} \le \alpha_{\text{runout}}$}. This indicates that some minimum optimal price(s) $\op_j$, where $j \in [n]$ have been shifted.
    
    We prove it by contradiction. Assume that for a specific cost $\cost_j$, $\Op_j(\ValDist) \nsubseteq \Op_j(\ValDist - \alpha_{\op, \SuppSet} \cdot \ValDist_{\op, \SuppSet})$ and $\max_\val(\val - \cost_j) \ValDist(\val) > 0$.
    \Cref{eq:lemma_subset_proof_1} shows that if there exists an element $s \in \Op_j(\ValDist)$ and $s \notin \Op_j(\ValDist - \alpha_{\op, \SuppSet} \cdot \ValDist_{\op, \SuppSet})$, all elements $ s \in \Op_j(\ValDist)$ satisfies $s \notin \Op_j(\ValDist - \alpha_{\op, \SuppSet} \cdot \ValDist_{\op, \SuppSet})$.  This indicates that there exists a $s^*$ such that $s^*\in \Op_j(\ValDist - \alpha_{\op, \SuppSet} \cdot \ValDist_{\op, \SuppSet})$ and $(\ValDist(s) - \alpha_{\op, \SuppSet} \cdot \ValDist_{\op, \SuppSet}(s)) (s - \cost_j) < (\ValDist(s^*) - \alpha_{\op, \SuppSet} \cdot \ValDist_{\op, \SuppSet}(s^*)) (s^*-\cost_j)~~ \forall s \in \Op_j(\ValDist)$. This further implies $s^* \in \SuppSet$ and  $\ValDist(s^*) - \alpha_{\op, \SuppSet} \cdot \ValDist_{\op, \SuppSet}(s^*) > 0$.
    
    We discuss the possible $s^*$ by considering different cases:
        \begin{itemize}
        \item \textbf{Suppose $s^* \in [\op_j, \op_{j+1}] \cap \SuppSet $}.  From \Cref{eq:lemma_subset_proof_1}, we know that $\forall s \in \Op_j(\alpha_{\op, \SuppSet} \cdot \ValDist_{\op, \SuppSet})$ 
        \[
            (F(\op_j) - \alpha_{\op, \SuppSet} \cdot \ValDist_{\op, \SuppSet}(\op_j)) (\op_j - \cost_j) \ge (\ValDist(s) - \alpha_{\op, \SuppSet} \cdot \ValDist_{\op, \SuppSet}(s)) (s-\cost_j)~,
        \]
        and $\Op_j(\alpha_{\op, \SuppSet} \cdot \ValDist_{\op, \SuppSet}) = [\op_j, \op_{j+1}] \cap \SuppSet$, therefore, it is impossible that $s^* \in [\op_j, \op_{j+1}]\cap \SuppSet$. 
        \item \textbf{Suppose $s^* \in (\op_{j+1}, + \infty) \cap \SuppSet$}. We would like to prove for all $k \in [j+1, m]$ it has $(F(\op_j) - \alpha_{\op, \SuppSet} \cdot \ValDist_{\op, \SuppSet}(\op_j))(\op_j - \cost_j) > (\ValDist(s) - \alpha_{\op, \SuppSet} \cdot \ValDist_{\op, \SuppSet}(s)) (s-\cost_j), \forall s \in (\op_k, \op_{k+1}] \cap \SuppSet$, therefore, $s^* \notin (\op_{j+1}, + \infty) \cap \SuppSet$. We prove it by induction. 
        \begin{enumerate}
            \item Given $s \in(\op_{j+1}, \op_{j+2}] \cap \SuppSet \subseteq \Op_{j+1}(\alpha_{\op, \SuppSet} \cdot \ValDist_{\op, \SuppSet})$ and $\ValDist(s) - \alpha_{\op, \SuppSet} \cdot \ValDist_{\op, \SuppSet}(s) > 0$,\footnote{\label{ft:assumption for si}We are only interested in $s^*$ that satisfies $F(s^*) - \alpha_{\op, \SuppSet} \cdot \ValDist_{\op, \SuppSet}(s^*) > 0$.} according to \Cref{eq:lemma_subset_proof_1}, 
            \[
                (F(\op_{j+1}) - \alpha_{\op, \SuppSet} \ValDist_{\op, \SuppSet}(\op_{j+1})) (\op_{j+1} - \cost_{j+1}) \ge (\ValDist(s) - \alpha_{\op, \SuppSet} \ValDist_{\op, \SuppSet}(s)) (s-\cost_{j+1})~,
            \]
            and 
            \[
               (F(\op_j) - \alpha_{\op, \SuppSet} \cdot \ValDist_{\op, \SuppSet}(\op_j))(\op_j - \cost_j) \ge (F(\op_{j+1}) - \alpha_{\op, \SuppSet} \cdot \ValDist_{\op, \SuppSet}(\op_{j+1}))(\op_{j+1} - \cost_j)~.
            \]
            Since $s-\cost_{j+1} > \op_{j+1} - \cost_{j+1} > 0$ and $\cost_{j+1} > \cost_{j} \ge 0$, we observe that $\frac{s-\cost_{j+1}}{\op_{j+1} - \cost_{j+1}} > \frac{s-\cost_j}{\op_{j+1} - \cost_{j}}$
                \[
                    \frac{F(\op_{j+1}) - \alpha_{\op, \SuppSet} \cdot \ValDist_{\op, \SuppSet}(\op_{j+1})}{\ValDist(s) - \alpha_{\op, \SuppSet} \cdot \ValDist_{\op, \SuppSet}(s)}  \ge  \frac{s-\cost_{j+1}}{\op_{j+1} - \cost_{j+1}} > \frac{s-\cost_j}{\op_{j+1} - \cost_{j}}~.
                \]
            
            Then we can derive that for all $s \in (\op_{j+1}, \op_{j+2}] \cap \SuppSet$
            \begin{equation}
            \label{eq:lemma_subset_proof_2}
                \begin{aligned}
                    (F(\op_j) - \alpha_{\op, \SuppSet} \cdot \ValDist_{\op, \SuppSet}(\op_j))(\op_j - \cost_j) & \ge (F(\op_{j+1}) - \alpha_{\op, \SuppSet} \cdot \ValDist_{\op, \SuppSet}(\op_{j+1}))(\op_{j+1} - \cost_j) \\
                & > (\ValDist(s) - \alpha_{\op, \SuppSet} \cdot \ValDist_{\op, \SuppSet}(s)) (s-\cost_j)~.
                \end{aligned}
            \end{equation}
            \item Suppose given $k$, it has $(F(\op_j) - \alpha_{\op, \SuppSet} \cdot \ValDist_{\op, \SuppSet}(\op_j))(\op_j - \cost_j) > (\ValDist(s) - \alpha_{\op, \SuppSet} \cdot \ValDist_{\op, \SuppSet}(s)) (s-\cost_j)~, ~~\forall s \in (\op_k, \op_{k+1}] \cap \SuppSet$, we prove it also holds for $k+1$.
            
            Given $s \in(\op_{k+1}, \op_{k+2}] \cap \SuppSet \subseteq \Op_{k+1}(\alpha_{\op, \SuppSet} \cdot \ValDist_{\op, \SuppSet})$ and $\ValDist(s) - \alpha_{\op, \SuppSet} \cdot \ValDist_{\op, \SuppSet}(s) > 0$,\footref{ft:assumption for si} then we have
        \[
        (\ValDist(\op_{k+1}) - \alpha_{\op, \SuppSet} \cdot \ValDist_{\op, \SuppSet}(\op_{k+1}))(\op_{k+1} - \cost_{k+1}) \ge (\ValDist(s) - \alpha_{\op, \SuppSet} \cdot \ValDist_{\op, \SuppSet}(s)) (s-\cost_{k+1})~.
        \]
        Similarly, since $s-\cost_{k+1} > \op_{k+1} - \cost_{k+1} $ and $\cost_{k+1} > \cost_j$, we have $\frac{s-\cost_{k+1}}{\op_{k+1} - \cost_{k+1}} > \frac{s-\cost_j}{\op_{k+1} - \cost_{j}} > 0$. Then 
        \[
        (\ValDist(\op_{k+1}) - \alpha_{\op, \SuppSet} \cdot \ValDist_{\op, \SuppSet}(\op_{k+1}))(\op_{k+1} - \cost_{j}) > (\ValDist(s) - \alpha_{\op, \SuppSet} \cdot \ValDist_{\op, \SuppSet}(s)) (s-\cost_{j})~.
        \]
        Therefore,
        \begin{align*}
            (F(\op_j) - \alpha_{\op, \SuppSet} \cdot \ValDist_{\op, \SuppSet}(\op_j))(\op_j - \cost_j) 
            & > (F(\op_{j+2}) - \alpha_{\op, \SuppSet} \cdot \ValDist_{\op, \SuppSet}(\op_{j+2}))(\op_{j+2} - \cost_j) \\
            & > (\ValDist(s) - \alpha_{\op, \SuppSet} \cdot \ValDist_{\op, \SuppSet}(s)) (s-\cost_j)~.
        \end{align*}
        \end{enumerate}
        For all $s \in (\op_j, + \infty) \cap \SuppSet$, we have 
        \begin{align*}
            (F(\op_j) - \alpha_{\op, \SuppSet} \cdot \ValDist_{\op, \SuppSet}(\op_j))(\op_j - \cost_j) > (\ValDist(s) - \alpha_{\op, \SuppSet} \cdot \ValDist_{\op, \SuppSet}(s)) (s-\cost_j)~.
        \end{align*}
        That implies  we could not find $s^* > \op_{j+1}$ such that $(F(\op_j) - \alpha_{\op, \SuppSet} \cdot \ValDist_{s, S}(s)) (\op_j - \cost_j) < (F(s^*) - \alpha_{\op, \SuppSet} \cdot \ValDist_{\op, \SuppSet}(s^*)) (s^*-\cost_j)$.
        \item \textbf{Suppose $s^* < \op_j$}. We assume $s^* > \cost_j$ without loss, then
        \[
            \ValDist_{q,S}(s)(s-\cost_{j}) <  \ValDist_{q,S}(\op_{j})(\op_{j}- \cost_{j}), ~~ \forall s \in (\cost_j,\op_j) \cap \SuppSet.
        \]
        Given $s \in (\cost_j,\op_j) \cap \SuppSet$, we define a function 
        \[
        g(\alpha) = \frac{(F(\op_{j}) - \alpha_{\op, \SuppSet} \cdot \ValDist_{\op, \SuppSet}(\op_{j})) (\op_j - \cost_j)}{(\ValDist(s) - \alpha_{\op, \SuppSet} \cdot \ValDist_{\op, \SuppSet}(s)) (s - \cost_j)}~.
        \]
        We also assume that $\ValDist(s) - \alpha_{\op, \SuppSet} \cdot \ValDist_{\op, \SuppSet}(s) > 0$,\footref{ft:assumption for si}. 
        Since $\ValDist_{\op, \SuppSet}(s) \ge \ValDist_{\op, \SuppSet}(\op_j) > 0$ the monotonicity of the function can be discussed in different cases for different $s$:
        \begin{itemize}
            \item If $\frac{\ValDist_{\op, \SuppSet}(\op_{j})}{\ValDist_{\op, \SuppSet}(s)} \le \frac{F(\op_{j})}{\ValDist(s)}$, $g(\alpha)$ is non-decreasing when $\alpha$ increases continuously from $0$. Therefore, $g(\alpha) \ge \frac{F(\op_{j}) (\op_j - \cost_j)}{\ValDist(s) (s - \cost_j)} > 1$. This shows for any $\alpha$, $ (\ValDist(\op_{j}) - \alpha_{\op, \SuppSet} \cdot \ValDist_{\op, \SuppSet}(\op_{j})) (\op_j - \cost_j)> (F(s^*) - \alpha_{\op, \SuppSet} \cdot \ValDist_{\op, \SuppSet}(s)) (s - \cost_j)$ holds.
            \item If $\frac{\ValDist_{\op, \SuppSet}(\op_{j})}{\ValDist_{\op, \SuppSet}(s)} > \frac{F(\op_{j})}{\ValDist(s)}$, $g(\alpha)$ is monotonically decreasing from $\frac{F(\op_{j}) (\op_j - \cost_j)}{\ValDist(s) (s - \cost_j)} > 1$.  When we increase $\alpha$  continuously from $0$ till we find $\alpha_{\op,\SuppSet}$ satisfying 
        \[(F(\op_{j}) - \alpha_{\op, \SuppSet} \cdot \ValDist_{\op, \SuppSet}(\op_{j})) (\op_j - \cost_j) = (\ValDist(s) - \alpha_{\op, \SuppSet} \cdot \ValDist_{\op, \SuppSet}(s)) (s - \cost_j)~,\]
        for some $s \in (\cost_j,\op_j) \cap \SuppSet$.
        When this occurs, it triggers the 2(b) since $\op_j(F-\alpha^*\cdot \ValDist_{\op,\SuppSet})$ changes to $s$ and $\alpha$ stops increasing. Therefore, it still satisfies for all $s \in (\cost_j,\op_j) \cap \SuppSet$
        \[(\ValDist(\op_{j}) - \alpha_{\op, \SuppSet} \cdot \ValDist_{\op, \SuppSet}(\op_{j})) (\op_j - \cost_j) \ge (\ValDist(s) - \alpha_{\op, \SuppSet} \cdot \ValDist_{\op, \SuppSet}(s)) (s - \cost_j)~.\]
    \end{itemize}
    Thus, we could not find a $s^* \in (\cost_j, \op_j) \cap \SuppSet$ such that $(\ValDist(s) - \alpha_{\op, \SuppSet} \cdot \ValDist_{\op, \SuppSet}(s)) (s - \cost_j) < (\ValDist(s^*) - \alpha_{\op, \SuppSet} \cdot \ValDist_{\op, \SuppSet}(s^*)) (s^*-\cost_j)$ for any $s \in \Op_j(\ValDist)$.
    
\end{itemize} 
Since we can not find $s^* \in S$ satisfying the condition we give, that indicates that $\Op_j(\ValDist) \subseteq \Op_j(\ValDist - \alpha_{\op, \SuppSet} \cdot  \ValDist_{\op, \SuppSet})$.
 \end{itemize}
Combining $\Op_j(\ValDist) \subseteq \Op_j(\alpha_{\op, \SuppSet} \cdot \ValDist_{\op, \SuppSet})$ and $\Op_j(\ValDist) \subseteq \Op_j(\ValDist - \alpha_{\op, \SuppSet} \cdot \ValDist_{\op, \SuppSet})$, we complete the proof.
\end{proof}

\begin{proof}[Proof of \Cref{lem: extremal market decomposition - decomposition time}]
    There are two conditions that could trigger $\alpha_{\op,\SuppSet}$ stopping increasing. We analyze the number of iteration for those two cases of $\alpha$ separately:
\begin{itemize}
    \item if \textbf{$\alpha =  \alpha_{\text{runout}} \le \alpha_{\text{shift}}$}: Probability runs out at some $v$: $\alpha_{\op, \SuppSet} \cdot \valpmf_{\op, \SuppSet}(\val) = \valpmf(\val)$ for some $v \in \ValSet$, it could happen at most $n = |\ValSet|$ times;
    \item if \textbf{$\alpha =  \alpha_{\text{shift}} < \alpha_{\text{runout}}$}:
    In iteration $k = 1,\dots, K$, we denote the targeted market as $\ValDist^{(k)}$, the extremal market as $\ValDist_{\op^{(k)}, \SuppSet^{(k)}}$ and its corresponding weight as $\alpha_{\op^{(k)}, \SuppSet^{(k)}}$.  Then $\ValDist^{(1)} \leftarrow \ValDist$.
    For a specific $\cost_j$, where its seller surplus in the original market $\max_\val(\val-\cost_{j})\cdot \ValDist(\val) \ne 0$, assume $\op_j $ shifts at iteration $(j_1, j_2,\dots, j_{K_j}, j_{K_{j+1}})$, where the iteration $j_{K_{j+1}}$ is when $\op_j$ becomes $\infty$.
    Since $\op_j = \infty$ also satisfies the first case ( probability at some values runs out), it is not counted in the current situation. Then $K_j$ the number of shifts of $\op_j$ before becoming $\infty$.
    This implies throughout iterations $ k \in \{j_1, j_2,\dots, j_{K_j}\}$, the seller surplus of all $\cost_{j'}$ where $j' \le j$ satisfies $\max_\val(\val-\cost_{j'})\cdot \ValDist^{(k)}(\val) > 0$.
    Due to the monotonicity of $\Op_{j-1}(\ValDist)$, 
    \[
         \max \Op_{j-1}(\ValDist^{(k)}) \le \min \Op_{j}(\ValDist^{(k)}),~~ \forall k \in \{j_1, j_2,\dots, j_{K_j}\}~.
    \]
    Recall that we prove the optimal price set weakly expands in \Cref{lem: extremal market decomposition - optimal price set expansion}, we have 
    \[
    \Op_{j-1}(\ValDist)= \Op_{j-1}(\ValDist^{(1)}) \subseteq \dots \ \Op_{j-1}(\ValDist^{(j_{K_j})})~.
    \]
    Then in all iterations $k \in \{j_1, j_2,\dots, j_{K_j}\}$, $\max \Op_{j-1}(\ValDist) \in \Op_{j-1}(\ValDist^{(k)}) $ always holds and further
    \[
        \max \Op_{j-1}(\ValDist) \le \max \Op_{j-1}(\ValDist^{(j_{K_j})}) \le \min \Op_j(\ValDist^{(j_{K_j})})~.
    \]
    Therefore, the number of iteration $K_j$ is bounded by
    \[
        K_j \le \left|\bigl[\max \Op_{j-1}(\ValDist), \min \Op_{j}(\ValDist)\bigr)\cap \ValSet\right|~.
    \]
    Summing over all $j$, we can derive that the upper bound of the total number of iterations
    \[
        \sum\nolimits_j{K_j} \le \sum\nolimits_j \left|\bigl[\max \Op_{j-1}(\ValDist), \min \Op_{j}(\ValDist)\bigr)\cap \ValSet\right| \le |\ValSet| = n~.
    \]
\end{itemize}
Combining these two situations, we can prove that it takes at most $2n$ iterations to decompose a market $\ValDist$ into extremal markets. 
\end{proof}

\begin{proof}[Proof of \Cref{thm:convex_combination_of_extremal_market}] We prove that the decomposition procedure described in \Cref{alg:extremal market decomposition} is a way to construct $F$ and it preserves both the seller surplus and the buyer surplus.
In iteration $k = 1,\dots, K$, we denote the targeted market by $\ValDist^{(k)}$, the extremal market by $\ValDist_{\op^{(k)}, \SuppSet^{(k)}}$ and its corresponding weight by $\alpha_{\op^{(k)}, \SuppSet^{(k)}}$.  $\ValDist^ 1 \leftarrow \ValDist$. Therefore, we could decompose $\ValDist$ into $\sum_{k=1}^K\alpha_{\op^{(k)}, \SuppSet^{(k)}} \cdot \ValDist_{\op^{(k)}, \SuppSet^{(k)}}$ and $K \le 2n$ since \Cref{lem: extremal market decomposition - decomposition time} have proved that there are at most $2n$ iterations.

\paragraph{The seller surplus and buyer surplus is preserved in the decomposition procedure.} 
Recall \Cref{lem: extremal market decomposition - optimal price set expansion} we have
    \[
        \Op_j(\ValDist^{(k)}) \subseteq \Op_j\left(\alpha_{\op^{(k)}, \SuppSet^{(k)}} \cdot \ValDist_{\op^{(k)}, \SuppSet^{(k)}}\right) \cap \Op_j\left(\ValDist^{(k+1)}\right)~.
    \]
Then given the market $\ValDist$, 
    \[
        \Op_j(\ValDist) \subseteq \bigcap_{k\in[K]} \Op_j\left(\alpha_{\op^{(k)}, \SuppSet^{(k)}} \cdot \ValDist_{\op^{(k)}, \SuppSet^{(k)}}\right) = \bigcap_{k\in[K]} \Op_j\left(\ValDist_{\op^{(k)}, \SuppSet^{(k)}}\right)~,
    \]
Since the procedure ends when the remaining market is zero, for $\val \in \ValSet$, it satisfies 
\[
\ValDist(\val) = \sum\nolimits_{k\in[K]} \alpha_{\op^{(k)}, \SuppSet^{(k)}} \cdot \ValDist_{\op^{(k)}, \SuppSet^{(k)}}(\val)~, 
\]
and 
\[
\valpmf(\val) = \sum\nolimits_{k\in[K]} \alpha_{\op^{(k)}, \SuppSet^{(k)}} \cdot \valpmf_{\op^{(k)}, \SuppSet^{(k)}}(\val)~. 
\]
If we choose optimal price $\mu_j$ for cost $\cost_j$ in $\ValDist$, then $\mu_j> \cost_j$ and $\mu_j\in \Op_j(\ValDist)$. Thus, 
\[
    \mu_j \in \Op_j(\ValDist_{\op^{(k)}, \SuppSet^{(k)}}), ~ k=1,2,\dots, K~.
\] 
This indicates that we can choose $\mu_j$ as the optimal price for each extremal market $\ValDist_{\op^{(k)}, \SuppSet^{(k)}}, ~ k=1,2,\dots, K$ to get the seller surplus $\SellerSurplus(\ValDist_{\op^{(k)}, \SuppSet^{(k)}}) = \sum\nolimits_{j}\costpmf(\cost_j) \cdot (\mu_j - \cost_j)\ValDist_{\op^{(k)}, \SuppSet^{(k)}}(\mu_j)$, where $\costpmf(\cost_j)$ is the probability of $\cost_j$. 
Then the seller surplus of $\ValDist$, denoted as $\SellerSurplus(\ValDist)$, is
\begin{align*}
    \SellerSurplus(\ValDist) 
    & = \sum\nolimits_{j}\costpmf(\cost_j) \cdot (\mu_j - \cost_j)\cdot 
    \ValDist(\mu_j)\\ 
    & = \sum\nolimits_{j}\costpmf(\cost_j) \cdot (\mu_j - \cost_j)\sum\nolimits_{k\in[K]} \alpha_{\op^{(k)}, \SuppSet^{(k)}} \cdot \ValDist_{\op^{(k)}, \SuppSet^{(k)}}(\mu_j)\\
    & = \sum\nolimits_{k\in[K]} \alpha_{\op^{(k)}, \SuppSet^{(k)}} \sum\nolimits_{j}\costpmf(\cost_j) \cdot (\mu_j - \cost_j) \cdot \ValDist_{\op^{(k)}, \SuppSet^{(k)}}(\mu_j) \\
    & = \sum\nolimits_{k\in[K]} \alpha_{\op^{(k)}, \SuppSet^{(k)}}\cdot \SellerSurplus(\ValDist_{\op^{(k)}, \SuppSet^{(k)}})~.
\end{align*} 

Similarly, the buyer surplus of $\ValDist$, denoted as $\BuyerSurplus(\ValDist)$, is
\begin{align*}
    \BuyerSurplus(\ValDist) 
    & = \sum\nolimits_j \costpmf(\cost_j) \sum\nolimits_{\val_i > \mu_j}(\val_i - \mu_j)\cdot \valpmf(\val_i) \\
    & = \sum\nolimits_j\costpmf(\cost_j) \sum\nolimits_{\val_i > \mu_j}(\val_i - \mu_j)\sum\nolimits_{k\in[K]} \alpha_{\op^{(k)}, \SuppSet^{(k)}} \cdot \valpmf_{\op^{(k)},\SuppSet^{(k)}}(\val_i)\\ 
    & = \sum\nolimits_{k\in[K]} \alpha_{\op^{(k)}, \SuppSet^{(k)}} \sum\nolimits_j\costpmf(\cost_j)\sum\nolimits_{\val_i > \mu_j}(\val_i - \mu_j) \cdot \valpmf_{\op^{(k)},\SuppSet^{(k)}}(\val_i) \\
    & = \sum\nolimits_{k\in[K]} \alpha_{\op^{(k)}, \SuppSet^{(k)}}\cdot \BuyerSurplus(\ValDist_{\op^{(k)}, \SuppSet^{(k)}})~.
\end{align*}
We conclude that the convex combination of extremal markets $\ValDist =\sum_{k\in[K]} \alpha_{\op^{(k)}, \SuppSet^{(k)}} \ValDist^{(k)}_{\op, \SuppSet}$ preserve both the seller surplus and the buyer surplus.
\end{proof}

\section{Missing Proofs in \Cref{sec:fragment}}
\label{apx: fragment proof}
\begin{proof}[Proof of \Cref{lem:locality of fragment}]
    Given a $\Frag_{\op, \SuppSet, j}$ obtained by an extremal market $\ValDist_{\op, \SuppSet}$ and the corresponding cost $\cost_j$, define $\esw_{\op, \SuppSet, j} := \ValDist_{\op,\SuppSet}(\op_{j})$ and $\FragSuppSet := \SuppSet \cap [\op_j, \op_{j+1})$. Assume $\esw_{\op, \SuppSet, j} > 0$.\footnote{When $\esw_{\op, \SuppSet, j} = 0$, it indicates that $\ValDist_{\op,\SuppSet}(\op_{j}) = 0$, thus, $\FragSuppSet = \emptyset$. } It satisfies $(\op_j - \cost_j)\cdot \ValDist_{\op, \SuppSet}(\op_j) = (\val - \cost_j)\cdot \ValDist_{\op, \SuppSet}(\val) \; \forall \val \in \FragSuppSet \cup \{\op_{j+1}\}$, according to \Cref{lem:optimal_price_for_extremal_market}. 

    For any $\val \in \FragSuppSet$, we find $\val^{+}$ such that $\val^{+} = \min \{\val' \in \FragSuppSet \cup \{\op_{j+1}\}: \val' > \val \}$, $\valpmf_{\op, \SuppSet}(\val) = \ValDist_{\op, \SuppSet}(\val) - \ValDist_{\op, \SuppSet}(\val^{+})$.
    \begin{align*}
        \frag_{\op, \SuppSet}(\val) 
        = \frac{\valpmf_{\op, \SuppSet}(\val)}{\esw_{\op, \SuppSet, j}} 
        = \frac{\ValDist_{\op, \SuppSet}(\val)}{\ValDist_{\op,\SuppSet}(\op_{j})} - \frac{\ValDist_{\op, \SuppSet}(\val^{+})}{\ValDist_{\op,\SuppSet}(\op_{j})} 
        = \frac{\op_j - \cost_j}{\val - \cost_j} - \frac{\op_j - \cost_j}{\val^{+} - \cost_j}~.
    \end{align*}
    Thus,
    \begin{align*}
        \Frag_{\op, \SuppSet, j}(\val) 
        = \sum\nolimits_{\val' \ge \val} \frag_{\op, \SuppSet}(\val') = \frac{\op_j - \cost_j}{\val - \cost_j} - \frac{\op_j - \cost_j}{\op_{j+1} - \cost_j}~.
    \end{align*}
    
    We observe that $\Frag_{\op, \SuppSet, j}(\val)$  and $\frag_{\op, \SuppSet}(\val)$ are uniquely determined by $\op_j$, $\op_{j+1}$ $\cost_j$, and $\val$. Therefore, the fragment $\Frag_{\op, \SuppSet, j}$ is determined by the triple $(\op_j, \op_{j+1}, \FragSuppSet)$.
\end{proof}

\begin{proof}[Proof of \Cref{lem: fragment decompositions}] \label{proof: lem - fragment decompositions}
\textbf{Seller surplus induced by $\ba$.} Given an extremal market $\ValDist_{\op, \SuppSet}$ in an extremal market decomposition $\ba = (\alpha_{\op, \SuppSet})$ and a specific cost $\cost_j$, the contribution to the seller surplus is $\alpha_{\op, \SuppSet} \cdot \costpmf(\cost_j) \cdot (\op_j - \cost_j) \cdot \ValDist(\op_j)$. According to the fragment decomposition of the extremal market defined in \Cref{def:fragment-decomposition}, $\esw_{j, \ell, r, \FragSuppSet}^{\op, \SuppSet} = \ValDist(\op_j)$ when $\val_\ell = \op_j, \val_r = \op_{j+1}$ and $T = \SuppSet \cap [\op_j, \op_{j+1})$. Otherwise, 0 Then we can charge the seller surplus induced by the pair $(\ValDist_{\op, \SuppSet}, \cost_j)$ to the specific fragment $\Frag_{j, \ell, r, \FragSuppSet}$ with the number $\alpha_{\op, \SuppSet} \cdot \costpmf(\cost_j) \cdot (\op_j - \cost_j) \cdot \esw_{j, \ell, r, \FragSuppSet}^{\op, \SuppSet}$, where $\val_\ell = \op_j, \val_r = \op_{j+1}$ and $T = \SuppSet \cap [\op_j, \op_{j+1})$. 

Therefore, we can define the seller surplus contribution of the quadruple $(j, \ell, r, \FragSuppSet)$ induced by each extremal market $\ValDist_{\op, \SuppSet}$ as $\alpha_{\op, \SuppSet} \cdot \costpmf(\cost_j) \cdot (\val_\ell - \cost_j) \cdot \esw_{j, \ell, r, \FragSuppSet}^{\op, \SuppSet}$. Note that the contribution is $0$ when either $\val_\ell \ne \op_j$ or $\val_r \ne \op_{j+1}$ or $\FragSuppSet \ne \SuppSet \cap [\op_j, \op_{j+1})$, since $\esw_{j, \ell, r, \FragSuppSet}^{\op, \SuppSet} = 0$.

The overall seller surplus of $\ba$, denoted by $\SellerSurplus(\ba)$, is 
\begin{equation}
\label{eq: seller surplus of fragment decomposition}
   \begin{aligned}
    \SellerSurplus(\ba) & = \sum\nolimits_{\op, \SuppSet}\sum\nolimits_j  \alpha_{\op, \SuppSet} \cdot  \costpmf(\cost_j) \cdot (\op_{j} - \cost_j)\cdot \ValDist_{\op, \SuppSet}(\op_j) \\
    & = \sum\nolimits_{\op, \SuppSet} \sum\nolimits_{j, \ell, r, \FragSuppSet} \alpha_{\op, \SuppSet}  \cdot \costpmf(\cost_j) \cdot (\op_j - \cost_j) \cdot \esw_{j, \ell, r, \FragSuppSet}^{\op, \SuppSet} \\
    & = \sum\nolimits_{j, \ell, r, \FragSuppSet} \costpmf(\cost_j) \cdot (\val_\ell - \cost_j)\sum\nolimits_{\op, \SuppSet} \alpha_{\op, \SuppSet} \cdot \esw_{j, \ell, r, \FragSuppSet}^{\op, \SuppSet} \\
    & = \sum\nolimits_{j, \ell, r, \FragSuppSet} \costpmf(\cost_j)\cdot (\val_\ell - \cost_j) \cdot \sw_{j, \ell, r, \FragSuppSet} ~.
    \end{aligned}
\end{equation}
We can prove that the seller surplus $\SellerSurplus(\ba)$ is linear to $\bw$. 

\textbf{Buyer surplus induced by $\ba$.}Since the buyer surplus of the extremal market decomposition $\ba$, denoted by $\BuyerSurplus(\ba)$, is the difference between social welfare $\SocialWelfare(\ba)$ and seller surplus $\SellerSurplus(\ba)$, we prove that the maximum/minimum buyer surplus is linear to $\bw$ by proving the maximum/minimum social welfare is linear to $\bw$.
For social welfare, if we choose optimal price $\mu_{\op, \SuppSet, j}$ for cost $\cost_j$ in the extremal market $\ValDist_{\op, \SuppSet}$, where $\mu_{\op, \SuppSet, j} \in \Op_j(\ValDist_{\op, \SuppSet}) =  \SuppSet \cap [\op_j, \op_{j+1}]$, then
\begin{align*}
  \SocialWelfare(\ba) 
  & = \sum\nolimits_{\op, \SuppSet} \alpha_{\op, \SuppSet}  \sum\nolimits_j \costpmf(\cost_j) \sum\nolimits_{\val \ge \mu_{\op, \SuppSet, j}} (\val - \cost_j) \cdot \valpmf_{\op, \SuppSet}(\val)~.
\end{align*}
We get maximum social welfare when $\mu_{\op, \SuppSet, j} = \op_j$, which is the minimum optimal price, for each cost in all extremal markets. The maximum social welfare is equal to
\begin{align*}
     \MaxSocialWelfare(\ba)  = \sum\nolimits_{\op, \SuppSet} \alpha_{\op, \SuppSet}  \sum\nolimits_j \costpmf(\cost_j) \sum\nolimits_{\val \ge \op_j} (\val - \cost_j) \cdot \valpmf_{\op, \SuppSet}(\val)~.
\end{align*}
Fix an extremal market $\ValDist_{\op, \SuppSet}$ in extremal market decomposition $\ba$ and a specific cost $\cost_j$, then the contribution to the maximum social welfare is 
\begin{align*}
    & \alpha_{\op, \SuppSet} \cdot \costpmf(\cost_j) \sum\nolimits_{\val \ge \op_j} (\val - \cost_j) \cdot \valpmf_{\op, \SuppSet}(\val) \\
    = ~& \sum\nolimits_{j'\ge j} \alpha_{\op, \SuppSet} \cdot \costpmf(\cost_j)  \sum\nolimits_{\val \in[\op_{j'}, \op_{j'+1})} (\val - \cost_j) \cdot \valpmf_{\op, \SuppSet}(\val)\\
    = ~ & \sum\nolimits_{j'\ge j} \alpha_{\op, \SuppSet} \cdot \costpmf(\cost_j) \sum\nolimits_{\val \in \FragSuppSet_{j'}} (\val - \cost_j) \cdot \esw_{j', \ell_{j'}, r_{j'}, \FragSuppSet_{j'}}^{\op, \SuppSet} \cdot \frag_{j', \ell_{j'}, r_{j'}, \FragSuppSet_{j'}}(\val)~,
\end{align*}
where $\val_{\ell_{j'}} = \op_{j'}$, $\val_{r_{j'}} = \op_{j'+1}$, and $\FragSuppSet_{j'} = \SuppSet \cap [\op_{j'}, \op_{j'+1})$ for each $j'$.
Then we charge the social welfare to each fragment $\Frag_{j', \ell_{j'}, r_{j'}, \FragSuppSet_{j'}}$ where $j' \ge j$, with the amount $\alpha_{\op, \SuppSet} \cdot \costpmf(\cost_j) \sum\nolimits_{\val \in \FragSuppSet_{j'}} (\val - \cost_j) \cdot \esw_{j', \ell_{j'}, r_{j'}, \FragSuppSet_{j'}}^{\op, \SuppSet} \cdot \frag_{j', \ell_{j'}, r_{j'}, \FragSuppSet_{j'}}(\val)$. This can be expanded for other $(j', \ell, r, \FragSuppSet)$ where $\val_\ell \ne \op_{j'}$ or $\val_r \ne \op_{j'+1}$ or $\FragSuppSet \ne \SuppSet \cap [\op_{j'}, \op_{j'+1})$ since $w_{j', \ell, r, \FragSuppSet}^{\op, \SuppSet} = 0$ and the social welfare charged is $0$. Thus,
\begin{align*}
    \alpha_{\op, \SuppSet} \cdot \costpmf(\cost_j) \sum\nolimits_{\val \ge \op_j} (\val - \cost_j) \cdot \valpmf_{\op, \SuppSet}(\val) = \sum\nolimits_{j'\ge j} \alpha_{\op, \SuppSet} \cdot \costpmf(\cost_j) \sum\nolimits_{\ell, r, \FragSuppSet} \sum\nolimits_{\val \in \FragSuppSet} (\val - \cost_j) \cdot \esw_{j', \ell, r, \FragSuppSet}^{\op, \SuppSet} \cdot \frag_{j', \ell, r, \FragSuppSet}(\val)~,
\end{align*}
Now we consider the social welfare charged to each fragment $\Frag_{j, \ell,r, \FragSuppSet}$. Given a quadruple $(j, \ell,r, \FragSuppSet)$, we sum over the social welfare induced by all extremal markets $\ValDist_{\op, \SuppSet}$ and $\cost_{j'}, j' \le j$, the value is $\sum_{\op, \SuppSet} \sum_{j' \le j}\alpha_{\op, \SuppSet} \cdot \costpmf(\cost_{j'}) \sum_{\val \in \FragSuppSet}(\val - \cost_{j'}) \cdot \frag_{j, \ell, r, \FragSuppSet}(\val) \cdot \esw_{j, \ell, r, \FragSuppSet}^{\op, \SuppSet}$. 
Therefore, we can rewrite the maximum social welfare as 
\begin{equation}
\label{eq: maximum social welfare of fragment decomposition}
    \begin{aligned}
    \MaxSocialWelfare(\ba) 
      & = \sum\nolimits_{j, \ell,r, \FragSuppSet} \sum\nolimits_{\op, \SuppSet} \sum\nolimits_{j' \le  j}\alpha_{\op, \SuppSet} \cdot \costpmf(\cost_{j'}) \sum\nolimits_{\val \in \FragSuppSet}(\val - \cost_{j'}) \cdot \frag_{j, \ell, r, \FragSuppSet}(\val) \cdot \esw_{j, \ell, r, \FragSuppSet}^{\op, \SuppSet} \\
      & = \sum\nolimits_{j, \ell,r, \FragSuppSet} \sum\nolimits_{j' \le  j}\sum\nolimits_{\val \in \FragSuppSet}\costpmf(\cost_{j'}) \cdot (\val - \cost_{j'}) \cdot \frag_{j, \ell, r, \FragSuppSet}(\val) \sum\nolimits_{\op, \SuppSet} \alpha_{\op, \SuppSet} \cdot \esw_{j, \ell, r, \FragSuppSet}^{\op, \SuppSet}\\
      & = \sum\nolimits_{j, \ell,r, \FragSuppSet} \sum\nolimits_{j' \le  j}\sum\nolimits_{\val \in \FragSuppSet}\costpmf(\cost_{j'}) \cdot (\val - \cost_{j'}) \cdot \frag_{j, \ell, r, \FragSuppSet}(\val) \cdot \sw_{j, \ell, r, \FragSuppSet}~.
    \end{aligned}
\end{equation}
Thus, $\MaxSocialWelfare(\ba)$ is a linear combination of $\bw$. Since $\MaxBuyerSurplus(\ba) = \MaxSocialWelfare(\ba) - \SellerSurplus(\ba)$, $\MaxBuyerSurplus(\ba)$ is also linear to $\bw$.

Similarly, we get minimum social welfare when $\mu_{\op, \SuppSet, j} = \op_{j+1}$ for each cost in all extremal markets: 
\begin{align*}
     \MinSocialWelfare(\ba)  = \sum\nolimits_{\op, \SuppSet} \alpha_{\op, \SuppSet}  \sum\nolimits_j \costpmf(\cost_j) \sum\nolimits_{\val \ge \op_{j+1}} (\val - \cost_j) \cdot \valpmf_{\op, \SuppSet}(\val)~.
\end{align*}
Fix an extremal market $\ValDist_{\op, \SuppSet}$ in extremal market decomposition $\ba = (\alpha_{\op, \SuppSet})$ and a specific cost $\cost_j$, the contribution to the minimum social welfare is 
\begin{align*}
    & \alpha_{\op, \SuppSet} \cdot \costpmf(\cost_j) \sum\nolimits_{\val \ge \op_{j+1}} (\val - \cost_j) \cdot \valpmf_{\op, \SuppSet}(\val) \\
    = ~& \sum\nolimits_{j' \ge j+1} \alpha_{\op, \SuppSet} \cdot \costpmf(\cost_j) \sum\nolimits_{\val \in[\op_{j'}, \op_{j'+1})} (\val - \cost_j) \cdot \valpmf_{\op, \SuppSet}(\val)\\
    = ~ & \sum\nolimits_{j'\ge j+1} \alpha_{\op, \SuppSet} \cdot \costpmf(\cost_j) \sum\nolimits_{\val \in \FragSuppSet_{j'}} (\val - \cost_j) \cdot \esw_{j', \ell_{j'}, r_{j'}, \FragSuppSet_{j'}}^{\op, \SuppSet} \cdot \frag_{j', \ell_{j'}, r_{j'}, \FragSuppSet_{j'}}(\val)~,
\end{align*}
where $\val_{\ell_{j'}} = \op_{j'}$, $\val_{r_{j'}} = \op_{j'+1}$, and $\FragSuppSet_{j'} = \SuppSet \cap [\op_{j'}, \op_{j'+1})$ for each $j'$, $\ell_{j'}$.
Then we charge the social welfare to each fragment $\Frag_{j', \ell_{j'}, r_{j'}, \FragSuppSet_{j'}}$ where $j' \ge j+1$, with the amount $\alpha_{\op, \SuppSet} \cdot \costpmf(\cost_j) \sum\nolimits_{\val \in \FragSuppSet_{j'}} (\val - \cost_j) \cdot \esw_{j', \ell_{j'}, r_{j'}, \FragSuppSet_{j'}}^{\op, \SuppSet} \cdot \frag_{j', \ell_{j'}, r_{j'}, \FragSuppSet_{j'}}(\val)$. This can be expanded for other $(j', \ell, r, \FragSuppSet)$ where $\ell \ne \op_{j'}$ or $r \ne \op_{j'+1}$ or $\FragSuppSet \ne \SuppSet \cap [\op_{j'}, \op_{j'+1})$ since $w_{j', \ell, r, \FragSuppSet}^{\op, \SuppSet} = 0$ and the social welfare charged is $0$. Thus,
\begin{align*}
    \alpha_{\op, \SuppSet} \cdot \costpmf(\cost_j) \sum\nolimits_{\val \ge \op_{j+1}} (\val - \cost_j) \cdot \valpmf_{\op, \SuppSet}(\val) = \sum\nolimits_{j'\ge j+1} \alpha_{\op, \SuppSet} \cdot \costpmf(\cost_j) \sum\nolimits_{\ell, r, \FragSuppSet} \sum\nolimits_{\val \in \FragSuppSet} (\val - \cost_j) \cdot \esw_{j', \ell, r, \FragSuppSet}^{\op, \SuppSet} \cdot \frag_{j', \ell, r, \FragSuppSet}(\val)~,
\end{align*}

When we consider the overall minimum social welfare charged to each fragment $\Frag_{j, \ell, r, \FragSuppSet}$, we sum over the social welfare charged from all extremal markets and $\cost_{j'}, j' < j$ (strictly less than), the value is $\sum\nolimits_{\op, \SuppSet} \sum\nolimits_{j' < j}\alpha_{\op, \SuppSet} \cdot \costpmf(\cost_{j'}) \sum\nolimits_{\val \in \FragSuppSet}(\val - \cost_{j'}) \cdot \frag_{j,\ell, r, \FragSuppSet}(\val) \cdot \esw_{j, \ell, r, \FragSuppSet}^{\op, \SuppSet}$.
Therefore, we can rewrite the minimum social welfare as 
\begin{equation}
\label{eq: minimum social welfare of fragment decomposition}
    \begin{aligned}
        \MinSocialWelfare(\ba) 
      & = \sum\nolimits_{j, \ell, r, \FragSuppSet} \sum\nolimits_{\op, \SuppSet} \sum\nolimits_{j' <j}\alpha_{\op, \SuppSet} \cdot \costpmf(\cost_{j'}) \sum\nolimits_{\val \in \FragSuppSet}(\val - \cost_{j'}) \cdot \frag_{j,\ell, r, \FragSuppSet}(\val) \cdot \esw_{j, \ell, r, \FragSuppSet}^{\op, \SuppSet} \\
      & = \sum\nolimits_{j, \ell,r, \FragSuppSet} \sum\nolimits_{j' < j}\sum\nolimits_{\val \in \FragSuppSet}\costpmf(\cost_{j'}) \cdot (\val - \cost_{j'}) \cdot \frag_{j,\ell, r, \FragSuppSet}(\val) \sum\nolimits_{\op, \SuppSet} \alpha_{\op, \SuppSet} \cdot \esw_{j, \ell, r, \FragSuppSet}^{\op, \SuppSet}\\
      & = \sum\nolimits_{j, \ell,r, \FragSuppSet} \sum\nolimits_{j' < j}\sum\nolimits_{\val \in \FragSuppSet}\costpmf(\cost_{j'}) \cdot (\val - \cost_{j'}) \cdot \frag_{j,\ell, r, \FragSuppSet}(\val) \cdot \sw_{j, \ell, r, \FragSuppSet}~.
    \end{aligned}
\end{equation}
Thus, $\MinSocialWelfare(\ba)$ is a linear combination of $\bw$. Since $\MinBuyerSurplus(\ba) = \MinSocialWelfare(\ba) - \SellerSurplus(\ba)$, $\MinBuyerSurplus(\ba)$ is also linear to $\bw$.
\end{proof}
\begin{proof}[Proof of \Cref{lem: number of undominated fragments}] According to the definition of undominated fragment $\Frag_{j, \ell, r}$, where $j \in [m]$, $\ell \in [n]$ and $r\in [n+1] $, there are at most $O(mn^2)$ possible undominated fragments. 
\end{proof}
\begin{proof}[Proof of \Cref{lem: domination relationship}]
    Given a fragment $\Frag_{j, \ell, r, \FragSuppSet}$, recall the tail function of the fragment in \Cref{lem:locality of fragment},
    \[
       \Frag_{j, \ell, r, \FragSuppSet}(\val) = \frac{\val_\ell - \cost_j}{\val - \cost_j} - \frac{\val_\ell - \cost_j}{\val_r - \cost_j}, \quad \forall \val \in \FragSuppSet~. 
    \]
    
    The undominated fragment $\Frag_{j, \ell, r}$ is also a fragment with the same $j, \ell, r$, while its support set is $\ValSet \cap [\val_l, \val_r)$. Thus,
    \[
       \Frag_{j, \ell, r}(\val) = \frac{\val_\ell - \cost_j}{\val - \cost_j}  - \frac{\val_\ell - \cost_j}{\val_r - \cost_j}, \quad \forall \val \in \ValSet \cap [\val_l, \val_r)~. 
    \]
    Since $\FragSuppSet \subseteq \ValSet \cap [\val_\ell, \val_r)$, given $\val \in \ValSet \cap [\val_l, \val_r)$ it has two cases:
    \begin{itemize}
        \item $\val \in \FragSuppSet$, then $\Frag_{j, \ell, r, \FragSuppSet}(\val) = \Frag_{j, \ell, r}(\val) $
        \item $\val \notin \FragSuppSet$, then $\Frag_{j, \ell, r, \FragSuppSet}(\val) = \Frag_{j, \ell, r, \FragSuppSet}(\val^{+}) = \Frag_{j, \ell, r}(\val^{+}) < \Frag_{j, \ell, r}(\val) $, where $\val^{+} = \min \{\val' \in \FragSuppSet \cap [\val_\ell, \val_r): \val' > \val \}$. \footnote{If $\val^{+}$ does not exist, it indicates that $\Frag_{j, \ell, r, \FragSuppSet}(\val) = 0 < \Frag_{j, \ell, r}(\val)$.} 
    \end{itemize}
Therefore, we can prove that $\Frag_{j, \ell, r}$ first-order stochastically dominates $\Frag_{j, \ell, r, \FragSuppSet}$.
\end{proof}
\begin{proof}[Proof of \Cref{lem: reduced fragment decompositions - proper domination}] Recall that the fragment $\Frag_{j,\ell,r}$ has support on $ \{\val_i:\, i\in [\ell, r)\}$ and the induced market $\ValDist^{\rsdz}_{j,\ell,r}$'s support is the subset of $ \{\val_i:\, i\in [\ell, r)\}$. Moreover, for $\val \ge \val_r$, both sides are $0$ and for $\val < \val_\ell$ both sides equal their total masses. Therefore, it suffices to prove the inequality for $\val=\val_i$ with $i \in [\ell, r)$.
    Then for any $i \in [\ell: r)$, 
    \begin{align*}
             \ValDist^{\rsdz}_{j, \ell, r}(\val_i) 
             & = \sum\nolimits_{i' \in [i: r)} \rsdz_{j, \ell, r, i'} \\
             & = \sum\nolimits_{i' \in [i: r)} \sum_{ \FragSuppSet \subseteq \ValSet \cap [\val_\ell, \val_r)} \sw_{j, \ell, r, \FragSuppSet} \cdot \frag_{j, \ell, r, \FragSuppSet}(\val_{i'})  \\
             & = \sum\nolimits_{ \FragSuppSet \subseteq \ValSet \cap [\val_\ell, \val_r)} \sw_{j, \ell, r, \FragSuppSet} \sum\nolimits_{i' \in [i: r)} \frag_{j, \ell, r, \FragSuppSet}(\val_{i'}) \\
             & = \sum\nolimits_{ \FragSuppSet \subseteq \ValSet \cap [\val_\ell, \val_r)} \sw_{j, \ell, r, \FragSuppSet} \cdot \Frag_{j, \ell, r, \FragSuppSet}(\val_{i})~.
             \end{align*}
     We invoke the domination relationship between the fragment $\Frag_{j, \ell, r, \FragSuppSet}$ and the reduced fragment $\Frag_{j, \ell, r}$ in \Cref{lem: domination relationship} to get 
    \[
    \sum\nolimits_{ \FragSuppSet \subseteq \ValSet \cap [\val_\ell, \val_r)} \sw_{j, \ell, r, \FragSuppSet} \cdot \Frag_{j, \ell, r, \FragSuppSet}(\val_{i}) \le 
    \sum\nolimits_{ \FragSuppSet \subseteq \ValSet \cap [\val_\ell, \val_r)} \sw_{j, \ell, r, \FragSuppSet} \cdot \Frag_{j, \ell, r}(\val_{i}) = \rsdw_{j, \ell, r} \cdot \Frag_{j,\ell,r}(\val_{i})~.
    \]
     Thus,  for any $i \in [\ell: r)$ , it has $\ValDist^{\rsdz}_{j, \ell, r}(\val_i) \le \rsdw_{j, \ell, r} \cdot \Frag_{j,\ell,r}(\val_{i})$.  
Therefore, the induced $\ValDist^{z}_{j, \ell, r}$ is weakly dominated by the fragment $\Frag_{j,\ell,r}$ scaled by $\rsdw_{j, \ell, r}$.

According to the definition of fragment $\Frag_{j, \ell, r, \FragSuppSet}$ (in \Cref{defn:fragment}), the point mass $\frag_{j, \ell, r, \FragSuppSet}(
\val) = 0$ when $\val \notin [\val_\ell, \val_r)$. Then
    \begin{align*}
        \sum\nolimits_{i \in [n]}\rsdz_{j, \ell, r, i} 
        & = \sum\nolimits_{i \in [n]}\sum\nolimits_{ \FragSuppSet \subseteq \ValSet \cap [\val_\ell, \val_r)} \sw_{j, \ell, r, \FragSuppSet} \cdot \frag_{j, \ell, r, \FragSuppSet}(\val_i) \\
        & = \sum\nolimits_{i \in [\ell: r)}\sum\nolimits_{ \FragSuppSet \subseteq \ValSet \cap [\val_\ell, \val_r)} \sw_{j, \ell, r, \FragSuppSet} \cdot \frag_{j, \ell, r, \FragSuppSet}(\val_i)\\
        & = \sum\nolimits_{ \FragSuppSet \subseteq \ValSet \cap [\val_\ell, \val_r)} \sw_{j, \ell, r, \FragSuppSet} \sum\nolimits_{i \in \FragSuppSet} \frag_{j, \ell, r, \FragSuppSet}(\val_i)\\
        & = \sum\nolimits_{ \FragSuppSet \subseteq \ValSet \cap [\val_\ell, \val_r)} \sw_{j, \ell, r, \FragSuppSet} \cdot (1- \frac{\val_\ell - \cost_j}{\val_r - \cost_j})\\
        & = \frac{\val_r - \val_\ell}{\val_r - \cost_j}\cdot \rsdw_{j, \ell, r}~.
    \end{align*} 
We complete the proof.
\end{proof}

\begin{proof}[Proof of \Cref{lem: surplus of reduced fragment decompositions}]
    Fix an extremal market decompoisition $\ba$. Let $\bw = (\sw_{j, \ell, r, \FragSuppSet})$ be its corresponding fragment decomposition and $(\bx, \bz)$, where $\bx = (\rsdw_{j,\ell, r})$ and $\bz=(\rsdz_{j, \ell, r, i})$, be the reduced fragment decomposition.
    According to \Cref{lem: fragment decompositions} and \Cref{eq: seller surplus of fragment decomposition}, the seller surplus of fragment decomposition induced by $\ba$ is linear to $\bw$. Specifically, the seller surplus $\SellerSurplus(\ba) = \sum_{j, \ell, r, \FragSuppSet} \costpmf(\cost_j) \cdot (\val_\ell - \cost_j) \cdot \sw_{j, \ell, r, \FragSuppSet}$. Since the fragment $\Frag_{j, \ell, r, \FragSuppSet}$ is defined only for $\FragSuppSet \subseteq \ValSet \cap [\val_\ell, \val_r)$ by \Cref{remark: Succinct fragments}; otherwise, $\Frag_{j, \ell, r, \FragSuppSet}$ is not defined and $\sw_{j, \ell, r, \FragSuppSet}$ does not exist, the seller surplus $\SellerSurplus(\ba)$ equals
    \begin{equation}
    \label{eq: seller surplus charged by x}
    \begin{aligned}
        \SellerSurplus(\ba) & = \sum\nolimits_{j, \ell, r, \FragSuppSet} \costpmf(\cost_j) \cdot (\val_\ell - \cost_j) \cdot \sw_{j, \ell, r, \FragSuppSet} \\
        & = \sum\nolimits_{j, \ell, r} \costpmf(\cost_j) \cdot (\val_\ell - \cost_j) \sum\nolimits_{{ \FragSuppSet \subseteq \ValSet \cap [\val_\ell, \val_r)}} \sw_{j, \ell, r, \FragSuppSet}\\
        & = \sum\nolimits_{j, \ell, r} \costpmf(\cost_j) \cdot (\val_\ell - \cost_j) \cdot \rsdw_{j, \ell, r}~.
    \end{aligned}    
    \end{equation}
    We can prove that $\SellerSurplus(\alpha)$ is linear to $\bx$.
    
    For the maximum social welfare, according to \Cref{lem: fragment decompositions} and \Cref{eq: maximum social welfare of fragment decomposition}, the maximum social welfare induced by $\ba$ is linear to $\bw$, and specifically,
    \begin{equation}
    \label{eq: maximum social welfare charged by z}
    \begin{aligned}
             \MaxSocialWelfare(\ba) & = \sum\nolimits_{j, \ell, r, \FragSuppSet} \sum\nolimits_{j' \le j}\sum\nolimits_{\val \in \FragSuppSet} \costpmf(\cost_{j'}) \cdot (\val - \cost_{j'}) \cdot \frag_{j, \ell, r, \FragSuppSet}(\val) \cdot \sw_{j, \ell, r, \FragSuppSet}\\
             & = \sum\nolimits_{j, \ell, r} \sum\nolimits_{\FragSuppSet \subseteq \ValSet \cap [\val_\ell, \val_r)} \sum\nolimits_{j' \le j}\sum\nolimits_{i \in [\ell: r)} \costpmf(\cost_{j'}) \cdot (\val_i - \cost_{j'})  \frag_{j,\ell, r, \FragSuppSet}(\val_i) \cdot \sw_{j, \ell, r, \FragSuppSet} \\
             & = \sum\nolimits_{j, \ell, r} \sum\nolimits_{j' \le j}\sum\nolimits_{i \in [\ell: r)} \costpmf(\cost_{j'}) \cdot  (\val_i - \cost_{j'}) \sum\nolimits_{{\FragSuppSet \subseteq \ValSet \cap [\val_\ell, \val_r)}}  \frag_{j,\ell, r, \FragSuppSet}(\val_i) \cdot \sw_{j, \ell, r, \FragSuppSet} \\
              & = \sum\nolimits_{j, \ell,r} \sum\nolimits_{j' \le j}\sum\nolimits_{i \in [\ell: r)} \costpmf(\cost_{j'}) \cdot (\val_i - \cost_{j'}) \cdot \rsdz_{j, \ell, r, i}~.
    \end{aligned}
    \end{equation}
    In the second step, fix an index triple $(j, \ell, r)$, the fragment $\Frag_{j, \ell, r, \FragSuppSet}$ is defined only for $\FragSuppSet \subseteq \ValSet \cap [\val_\ell, \val_r)$ by \Cref{remark: Succinct fragments}.  Given a support set $\FragSuppSet \subseteq \ValSet \cap [\val_\ell, \val_r)$ , when $i \in [\ell: r)$ but $\val_i \notin \FragSuppSet$, the mass at $\val_i$ is zero. Thus, we can expand $\val \in \FragSuppSet$ to $\val_i \in \{\val_i \in \ValSet: i \in [\ell, r)\} $ without changing the equality in the second step. 
    
    Therefore, the maximum social welfare is linear to $\bz$. Moreover, the maximum buyer surplus $\MaxBuyerSurplus(\ba) = \MaxSocialWelfare(\ba) - \SellerSurplus(\ba)$ is linear to $(\bx, \bz)$.

    Similarly, the minimum social welfare $\MinSocialWelfare(\ba)$ is linear to $\bw$ in \Cref{eq: minimum social welfare of fragment decomposition}, and specifically
    \begin{equation}
    \label{eq: minimum social welfare charged by z}
    \begin{aligned}
             \MinSocialWelfare(\ba) & = \sum\nolimits_{j, \ell, r, \FragSuppSet} \sum\nolimits_{j' < j}\sum\nolimits_{\val \in \FragSuppSet} \costpmf(\cost_{j'}) \cdot (\val - \cost_{j'}) \cdot \frag_{j,\ell, r, \FragSuppSet}(\val) \cdot \sw_{j, \ell, r, \FragSuppSet}\\
             & = \sum\nolimits_{j, \ell, r} \sum\nolimits_{\FragSuppSet \subseteq \ValSet \cap [\val_\ell, \val_r)} \sum\nolimits_{j' < j}\sum\nolimits_{i \in [\ell: r)} \costpmf(\cost_{j'}) \cdot (\val_i - \cost_{j'})  \frag_{j,\ell, r, \FragSuppSet}(\val_i) \cdot \sw_{j, \ell, r, \FragSuppSet} \\
             & = \sum\nolimits_{j, \ell, r} \sum\nolimits_{j' < j}\sum\nolimits_{i \in [\ell: r)} \costpmf(\cost_{j'}) \cdot (\val_i - \cost_{j'}) \sum\nolimits_{{\FragSuppSet \subseteq \ValSet \cap [\val_\ell, \val_r)}}  \frag_{j,\ell, r, \FragSuppSet}(\val_i) \cdot \sw_{j, \ell, r, \FragSuppSet} \\
              & = \sum\nolimits_{j, \ell, r} \sum\nolimits_{j' < j}\sum\nolimits_{i \in [\ell: r)} \costpmf(\cost_{j'}) \cdot (\val_i - \cost_{j'}) \cdot \rsdz_{j, \ell, r,i}~,
    \end{aligned}
    \end{equation}
    which is linear to $\bz$. Therefore, the minimum buyer surplus $\MinBuyerSurplus(\ba) = \MinSocialWelfare(\ba) - \SellerSurplus(\ba)$ is also linear to $(\bx, \bz)$.
\end{proof}

\section{Missing Proofs in \Cref{sec:flow}}
\label{apx: flow proof}
\begin{proof}[Proof of \Cref{thm: the equvilence of market segmentation and reduced fragment composition}]
    \textbf{Suppose the segmentation $(\alpha, \ValDist)$ induces the overall seller surplus $\SellerSurplus$ and the overall buyer surplus $\BuyerSurplus$.} In light of \Cref{lem: full generality of extremal markets}, there exists an extremal market segmentation $(\alpha_{\op, \SuppSet}, \ValDist_{\op, \SuppSet})$ such that the overall seller surplus $\SellerSurplus(\ba) = \SellerSurplus$, the buyer surplus satisfies $\MinBuyerSurplus(\ba) \le \BuyerSurplus \le \MaxBuyerSurplus(\ba)$. We can decompose $\ba$ into fragments with weight $\bw$ ($\ba \to \bw$ as in \Cref{def:fragment-decomposition}) and further find the reduced fragment decomposition $(\bx, \bz)$ ($\ba \to (\bx, \bz)$ as in \Cref{def: reduced fragment decompositions}) such that the proper domination (\Cref{lem: reduced fragment decompositions - proper domination}) is satisfied. In \Cref{lem: surplus of reduced fragment decompositions}, we proved the seller surplus and (maximum / minimum) buyer surplus induced by an extremal market decomposition $\ba$ are linear in its reduced fragment decomposition $(\bx, \bz)$. If we charge the seller surplus and the maximum/minimum buyer surplus as \Cref{def: surplus by reduced fragment decompositions (x z)}, then 
    \[
        \SellerSurplus((\bx, \bz)) = \SellerSurplus(\alpha) = \SellerSurplus~,
    \]
    and 
    \[
        \MinBuyerSurplus((\bx, \bz)) = \MinBuyerSurplus(\alpha) \le \BuyerSurplus \le  \MaxBuyerSurplus(\alpha) = \MaxBuyerSurplus((\bx, \bz))~.
    \]
    Moreover, \Cref{lem: the existance of y} proves the existence of $\by$ such that $(\bx, \by, \bz)$ satisfies exact composition and flow conservation. Therefore, we can prove if the segmentation exists with seller surplus $\SellerSurplus$ and buyer surplus $\BuyerSurplus$, then the reduced fragment decomposition $(\bx, \by, \bz)$ also exists satisfying $\SellerSurplus((\bx, \bz)) = \SellerSurplus$ and $\MinBuyerSurplus((\bx, \bz)) \le \BuyerSurplus \le \MaxBuyerSurplus((\bx, \bz))$.

\textbf{Suppose there exists $(\bx, \by, \bz)$ satisfying constraints with induced seller surplus $\SellerSurplus((\bx, \bz))$ and maximum/minimum buyer surplus $\MaxBuyerSurplus((\bx, \bz))$ or $\MinBuyerSurplus((\bx, \bz))$ defined in \Cref{def: surplus by reduced fragment decompositions (x z)}.} \Cref{lem: market segment - exists segmentation preserves surplus} shows that there exists a way to segment the aggregate market $\AggMarket$ which induces the same seller surplus and maximum/minimum buyer surplus as $(\bx, \bz)$. Then this proves the existence of segmentation with the overall seller surplus $\SellerSurplus$ such that $\SellerSurplus = \SellerSurplus((\bx, \bz))$ and the overall buyer surplus $\BuyerSurplus$ such that $\MinBuyerSurplus((\bx, \bz)) \le \BuyerSurplus \le \MaxBuyerSurplus((\bx, \bz))$. We complete the proof. 
\end{proof}

\begin{proof}[Proof of \Cref{lem: discount factor}]
    According to \Cref{def:fragment-decomposition}, $w_{j, \ell, k, \FragSuppSet_j} = \ValDist_{\op, \SuppSet}(\val_\ell)$ and $w_{j + 1, \ell, k, \FragSuppSet_{j+1}} = \ValDist_{\op, \SuppSet}(\val_k)$. Since both $\val_\ell$ and $\val_k$ are optimal prices for $\cost_j$ in the extremal market $\ValDist_{\op, \SuppSet}$, $(\val_\ell - \cost_j) \cdot \ValDist_{\op, \SuppSet}(\val_\ell) = (\val_k - \cost_j) \cdot \ValDist_{\op, \SuppSet}(\val_k)$. Then
    \[
     \frac{\esw_{j + 1, k, r, \FragSuppSet_{j+1}}^{\op, \SuppSet}}{\esw_{j, \ell, k, \FragSuppSet_{j}}^{\op, \SuppSet}} = \frac{\ValDist_{\op, \SuppSet}(\val_k)}{\ValDist_{\op, \SuppSet}(\val_\ell)} = \frac{\val_\ell - \cost_j}{\val_k - \cost_j}~.
    \]
    Therefore, $\sfrac{\esw_{j + 1, k, r, \FragSuppSet_{j+1}}^{\op, \SuppSet}}{\esw_{j, \ell, k, \FragSuppSet_{j}}^{\op, \SuppSet}}$ depends only on $j$, $\ell$, and $k$. 
\end{proof}

\begin{proof}[Proof of \Cref{lem: the existance of y}]
    Since $\ValDist \to (\bx, \bz)$, there exists $\ba = (\alpha_{\op, \SuppSet})$ and $\bw = (\sw_{j, \ell, r, \FragSuppSet})$ such that a {\em chain} of decompositions by $F \to \ba \to \bw \to (\bx, \bz)$ exists.

    \smallskip
    \textbf{Exact composition.} Fix $j \in [m]$ and $\ell, r \in [n]$. For each $i \in [\ell, r)$, $\rsdz_{j, \ell, r, i} = \sum_{\FragSuppSet \subseteq \ValSet \cap [\val_\ell, \val_r)} \sw_{j, \ell, r, \FragSuppSet} \cdot \frag_{j, \ell, r, \FragSuppSet}(\val_i) $, where $\frag_{j, \ell, r, \FragSuppSet}$ is the mass function of the fragment $\Frag_{j, \ell, r, \FragSuppSet}$ and $\sw_{j, \ell, r, \FragSuppSet}$ is the corresponding weight in fragment decomposition. By \Cref{remark: Succinct fragments}, the fragment $\Frag_{j, \ell, r, \FragSuppSet}$ is defined only for $\FragSuppSet \subseteq \ValSet \cap [\val_\ell, \val_r)$. Then
    \begin{align*}
        \valpmf(\val_i) & = \sum\nolimits_{j, \ell, r, \FragSuppSet} \sw_{j, \ell, r, \FragSuppSet} \cdot \frag_{j, \ell, r, \FragSuppSet}(\val_i) 
        = \sum\nolimits_{j, \ell, r}  \sum\nolimits_{\FragSuppSet \subseteq \ValSet \cap [\val_\ell, \val_r)} \sw_{j, \ell, r, \FragSuppSet} \cdot \frag_{j, \ell, r, \FragSuppSet}(\val_i) 
        = \sum\nolimits_{j, \ell, r} \rsdz_{j, \ell, r, i}~.
    \end{align*}
    The exact composition is proved. 

    \smallskip
    \textbf{Outgoing flow conservation.} Fix $j \in [m]$ and $\ell, r \in [n]$. Recall $\rsdw_{j, \ell, k}$ defined in \Cref{def: reduced fragment decompositions} and $\sw_{j, \ell, k, \FragSuppSet}$ defined in $\Cref{def:fragment-decomposition}$, 
    \begin{align*}
    \rsdw_{j, \ell, k} 
    & = \sum\nolimits_{\FragSuppSet_j \subseteq \ValSet \cap [\val_\ell, \val_k)} \sw_{j, \ell, k, \FragSuppSet_j} \\
    & =  \sum\nolimits_{\FragSuppSet_j \subseteq \ValSet \cap [\val_\ell, \val_k)} \sum\nolimits_{\op, \SuppSet} \alpha_{\op, \SuppSet} \cdot \esw_{j, \ell, k, \FragSuppSet_j}^{\op, \SuppSet} \\
    & =  \sum\nolimits_{\FragSuppSet_j \subseteq \ValSet \cap [\val_\ell, \val_k)} \sum\nolimits_{\op, \SuppSet: \op_j = \val_\ell, \op_{j+1} = \val_k, \SuppSet\cap [\val_\ell, \val_k) = \FragSuppSet_j} \alpha_{\op, \SuppSet} \cdot \esw_{j, \ell, k, \FragSuppSet_j}^{\op, \SuppSet} \\
    & =  \sum\nolimits_{\op, \SuppSet: \op_j = \val_\ell, \op_{j+1} = \val_k}\alpha_{\op, \SuppSet} \cdot \esw_{j, \ell, k, \SuppSet\cap [\val_\ell, \val_k)}^{\op, \SuppSet} \\
    & = \sum\nolimits_{\op, \SuppSet: \op_j = \val_\ell, \op_{j+1} = \val_k} \sum\nolimits_r  \sum\nolimits_{\op_{j+2} = \val_r}\alpha_{\op, \SuppSet}  \cdot \frac{\esw_{j+1, k, r, \SuppSet\cap [\val_k, \val_r)}^{\op, \SuppSet}}{\df_{j, \ell, k}} \\
     & = \sum\nolimits_r \sum\nolimits_{\op, \SuppSet: \op_j = \val_\ell, \op_{j+1} = \val_k, \op_{j+2} = \val_r} \alpha_{\op, \SuppSet} \cdot \esw_{j, \ell, k,\SuppSet\cap [\val_\ell, \val_k)}^{\op, \SuppSet}~,
    \end{align*}
     where $\sw_{j, \ell, k, \FragSuppSet_j}^{\op, \SuppSet}$ is weight of $\Frag_{j, \ell, k, \FragSuppSet_j}$ in fragment decomposition of the extremal market $\ValDist_{\op, \SuppSet}$. 
     
     Given $\FragSuppSet_j \subseteq \ValSet \cap [\val_\ell, \val_k)$, $\esw_{j, \ell, k, \FragSuppSet_{j}}^{\op, \SuppSet} = 0$ when $\op_j \ne \val_\ell$ or $\op_{j+1} \ne \val_k$ or $ \FragSuppSet_j \ne \SuppSet \cap [\val_\ell, \val_k)$. Thus, in the third step, for each $\FragSuppSet_j \subseteq \ValSet \cap [\val_\ell, \val_k)$, we can consider the pair $(\op, \SuppSet)$ in the set $\{(\op, \SuppSet): \op_j = \val_\ell, \op_{j+1} = \val_k, \FragSuppSet_{j} = \SuppSet \cap [\val_\ell, \val_k)\}$ only. In the forth step, when we sum over all $\FragSuppSet_j \subseteq \ValSet \cap [\val_\ell, \val_k)$, we can get the set $\{(\op, \SuppSet): \op_j = \val_\ell, \op_{j+1} = \val_k\}$ for the pair $(\op, \SuppSet)$ since  $\SuppSet \subseteq \ValSet $. 
     In the fifth step, we invoke the discount factor $\df_{j, \ell, k} = \sw_{j+1, k, r, \FragSuppSet_{j+1}}^{\op, \SuppSet}/ \sw_{j, \ell, k, \FragSuppSet_j}^{\op, \SuppSet}$, where $\val_r = \op_{j+1}$ and $\FragSuppSet_{j+1} = \SuppSet \cap [\val_k, \val_r)$, for each extremal market $(\op, \SuppSet)$ that satisfies $\op_j = \val_\ell$, $\op_{j+1} = \val_k$. Thus, in the last step we sum over all possible $r$ and only consider $(\op, \SuppSet)$ that satisfies $\op_j = \val_\ell$, $\op_{j+1} = \val_k$ and $\op_{j+2} = \val_r$.

    \smallskip
    \textbf{Incoming flow conservation.} Similarly, fix $j \in [m]$ and $\ell, r \in [n]$. For $\rsdw_{j+1, k, r}$, 
    \begin{align*}
    \rsdw_{j+1, k, r} 
    & = \sum\nolimits_{\FragSuppSet_{j+1} \subseteq \ValSet \cap [\val_k, \val_r)} \sw_{j+1, k, r, \FragSuppSet_{j+1}} \\
    & =  \sum\nolimits_{\op, \SuppSet: \op_{j+1} = \val_k, \op_{j+2} = \val_r} \alpha_{\op, \SuppSet} \cdot \esw_{j+1, k, r, \SuppSet \cap [\val_k, \val_r)}^{\op, \SuppSet} \\
    & = \sum\nolimits_{\op, \SuppSet: \op_{j+1} = \val_k, \op_{j+2} = \val_r} \sum\nolimits_\ell \sum\nolimits_{\op_j = \val_\ell}\alpha_{\op, \SuppSet}\cdot \df_{j, \ell, k} \cdot \esw_{j, \ell, k,\SuppSet \cap [\val_\ell, \val_k)}^{\op, \SuppSet} \\
    & = \sum\nolimits_\ell  \df_{j, \ell, k} \sum\nolimits_{\op, \SuppSet: \op_j = \val_\ell, \op_{j+1} = \val_k, \op_{j+2} = \val_r} \alpha_{\op, \SuppSet} \cdot \esw_{j, \ell, k, \SuppSet \cap [\val_\ell, \val_k)}^{\op, \SuppSet}~.
    \end{align*}
    Therefore, if there exists $\ba$ and $\bw$ such that we can find a chain of decomposition $\ValDist \to \ba \to \bw \to (\bx, \bz)$ (i.e., $\ValDist \to (\bx, \bz)$), and we define $\rsdf_{j, \ell, k, r} = \sum_{\op, \SuppSet: \op_j = \val_\ell, \op_{j+1} = \val_k, \op_{j+2} = \val_r} \alpha_{\op, \SuppSet} \cdot \esw_{j, \ell, k,\SuppSet \cap [\val_\ell, \val_k)}^{\op, \SuppSet}$, then it satisfies the outgoing flow conservation $x_{j, \ell, k} = \sum_r \rsdf_{j, \ell, k, r}$ and the ingoing flow conservation $\rsdw_{j+1, k, r} = \sum_\ell \df_{j, \ell, k} \cdot \rsdf_{j, \ell, k, r}$. The outgoing flow conservation and ingoing flow conservation are proved.
\end{proof}

\begin{lemma}
\label{lem: market segmentation - tail probability p_j}
    Fix an iteration $t$ of \Cref{alg: market segmentation}. Assume $\mscdf_1^{(t)}>0$. For each $j\in[j^{(t)}]$ (where $j^{(t)}$ is the terminal value of $j$ set in Line \ref{alg-line: market segmentation - assign j^(t)} of the algorithm), let $i_j^{(t)}$ denote the index $i_j$ selected in iteration $t$. Then
    \begin{align*}
        \MarketSegment^{(t)} \bigl(\val_{i_j^{(t)}}\bigr)=\frac{\mscdf_j^{(t)}}{\mscdf_1^{(t)}}~.
    \end{align*}
    Moreover, for every $j\in[j^{(t)}-1]$,
    \[
    \mscdf_{j+1}^{(t)}=\mscdf_{j}^{(t)}\cdot \df_{j,\, i_j^{(t)},\, i_{j+1}^{(t)}}.
    \]
\end{lemma}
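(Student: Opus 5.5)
We argue from the final values of the $\mscdf_j$ after the inner while-loop of iteration $t$ has ended. There are two claims. The second one, $\mscdf_{j+1}=\mscdf_j\cdot\df_{j,i_j,i_{j+1}}$, concerns only the inner loop, so we prove it first by induction on the loop. Line~\ref{alg-line: market segmentation - p_j+1} sets $\mscdf_{j+1}$ to exactly $\mscdf_j\cdot\df_{j,i_j,i_{j+1}}$ when $j+1$ is first reached. Later, the rescaling at Line~\ref{alg-line: maket segmentation - update p_j} can modify $\mscdf_1,\dots,\mscdf_{j}$ at a later stage $j'$. It multiplies all of them by the same factor $\rsdf/\mscdf_{j'}$, so it keeps every ratio $\mscdf_{j+1}/\mscdf_j$ with $j+1\le j'$. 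The $\mscdf_{j'+1}$ created afterwards is then computed from the updated $\mscdf_{j'}$. Hence the invariant ``$\mscdf_{k+1}=\mscdf_k\,\df_{k,i_k,i_{k+1}}$ for all $k<j$'' holds at the start of every inner step, and therefore also at termination.

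For the first claim, we expand the tail of the segment built at Line~\ref{alg-line: f(v_i)}. Since $\alpha^{(t)}=\mscdf_1$, we have $\MarketSegment^{(t)}(\val_{i_j})=\frac{1}{\mscdf_1}\sum_{k\in[j^{(t)}]}\frac{\mscdf_k}{x_{k,i_k,i_{k+1}}}\sum_{i\ge i_j} z_{k,i_k,i_{k+1},i}$. The $z_{k,i_k,i_{k+1},\cdot}$ are supported on $[i_k,i_{k+1})$ and the indices $i_k$ are nondecreasing. So the pieces with $k<j$ contribute nothing, and the pieces with $k\ge j$ contribute their full mass. By the total-mass part of proper domination (which holds at iteration $t$ by \Cref{lem: market segmentation - x y z feasible}), the $k$-th piece has mass $\mscdf_k\cdot\frac{\val_{i_{k+1}}-\val_{i_k}}{\val_{i_{k+1}}-\cost_k}=\mscdf_k(1-\df_{k,i_k,i_{k+1}})$. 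Combining this with the ratio identity, the sum over $k\ge j$ telescopes: $\sum_{k=j}^{j^{(t)}}(\mscdf_k-\mscdf_{k+1})=\mscdf_j-\mscdf_{j^{(t)}+1}$.

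The step I expect to need care is the boundary term, which I would settle by checking the last stage. At the last stage $i_{j^{(t)}+1}=n+1$, so $\val_{i_{j^{(t)}+1}}=\infty$. The fraction $\frac{\val_r-\val_\ell}{\val_r-\cost_j}$ is then read as $1$, the discount is $0$, and the last piece carries its full $\mscdf_{j^{(t)}}$ with nothing passed on. So $\mscdf_{j^{(t)}+1}=0$ and $\MarketSegment^{(t)}(\val_{i_j})=\mscdf_j/\mscdf_1$.

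Two degenerate cases also need a remark. Empty pieces with $i_k=i_{k+1}$ have discount factor $1$ and zero mass, so they are consistent with the telescoping. Positivity of the $x_{k,i_k,i_{k+1}}$ used as denominators follows from incoming flow conservation together with $\rsdf>0$ along the chosen path.
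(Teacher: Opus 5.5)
Your proposal is correct and follows the paper's proof essentially step for step. You expand the tail of the new segment using the support of the $z$-pieces and the total-mass part of proper domination, collapse the resulting sum via $\mscdf_{k+1}=\mscdf_k\cdot\df_{k,i_k,i_{k+1}}$ (your telescoping is the paper's backward induction, with the same convention $\val_{n+1}=\infty$ for the last piece), and obtain that ratio identity from the initialization of $\mscdf_{j+1}$ together with the uniform rescaling of $\mscdf_1,\dots,\mscdf_j$; proving the ratio identity first is the cleaner order, since the paper's induction uses it before establishing it.
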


\begin{proof}[Proof of \Cref{lem: market segmentation - tail probability p_j}]
 For each $i \in [n]$, $\marketSegmentPmf^{(t)}(\val_i) = \frac{1}{\alpha^{(t)}}\sum_{j \in [j^{(t)}]} \rsdz_{j, i_j^{(t)}, i_{j + 1}^{(t)}, i}^{(t)} \cdot \mscdf_j^{(t)} / \rsdw_{j, i_j^{(t)}, i_{j + 1}^{(t)}}^{(t)}$, where $\alpha^{(t)} = p_1^{(t)}$. We observe that $z_{j, i_j, i_{j + 1}, i} = 0$ if $i \notin [i_j^{(t)}, i_{j + 1}^{(t)})$. Thus, 
\begin{align*}
    \MarketSegment^{(t)}(\val_{i_j^{(t)}}) & = \sum_{i \ge {i_j^{(t)}}} \marketSegmentPmf^{(t)}(\val_i) \\
    & = \sum_{i \ge {i_j^{(t)}}} \frac{1}{\alpha}\sum_{j'\in [j^{(t)}]} \frac{\rsdz_{j', i_{j'}, i_{j' + 1}, i}^{(t)} \cdot \mscdf_{j'}^{(t)}}{\rsdw_{j', i_{j'}, i_{j' + 1}}^{(t)}}\\
    & = \frac{1}{\alpha^{(t)}}\sum_{j' \in [j: j^{(t)}]} \frac{\mscdf_{j'}^{(t)}}{x_{j', i_{j'}^{(t)}, i_{j' + 1}^{(t)}}^{(t)}}\sum_{i \in [i_{j'}^{(t)}: i_{j'+1}^{(t)})} \rsdz_{j', i_{j'}^{(t)}, i_{j' + 1}^{(t)}, i}^{(t)}\\
    & = \frac{1}{\alpha^{(t)}}\sum_{j' \in [j: j^{(t)}]} \frac{\mscdf_{j'}^{(t)}}{\rsdw_{j', i_{j'}^{(t)}, i_{j' + 1}^{(t)}}^{(t)}}\cdot 
    \frac{\val_{i_{j'+1}^{(t)}} - \val_{i_{j'}^{(t)}}}{\val_{i_{j'+1}^{(t)}} - \cost_{j'}} \cdot {\rsdw_{j', i_{j'}^{(t)}, i_{j' + 1}^{(t)}}^{(t)}} \\
    & = \frac{1}{\alpha^{(t)}} \sum_{j' \in [j: j^{(t)}]} \mscdf_{j'}^{(t)}\cdot 
    \frac{\val_{i_{j'+1}^{(t)}} - \val_{i_{j'}^{(t)}}}{\val_{i_{j'+1}^{(t)}} - \cost_{j'}}~.
\end{align*}
Now we would like to prove $\sum_{j' \in [j: j^{(t)}]} \mscdf_{j'}^{(t)}\cdot 
    (\val_{i_{j'+1}^{(t)}} - \val_{i_{j'}^{(t)}})/(\val_{i_{j'+1}^{(t)}} - \cost_{j'}) = \mscdf_j^{(t)}$. We prove it by deduction. 
    \begin{enumerate}
        \item When $j = j^{(t)}$,  since $\val_{i_{j^{(t)}+1}^{(t)}} = \infty$ and $\val_{i_{j^{(t)}}^{(t)}} < \infty$,
        \[
            \sum_{j' \in [j: j^{(t)}]} \mscdf_{j'}^{(t)}\cdot 
    \frac{\val_{i_{j'+1}^{(t)}} - \val_{i_{j'}^{(t)}}}{\val_{i_{j'+1}^{(t)}} - \cost_{j'}} = \mscdf_{j^{(t)}}^{(t)} \cdot (1- \frac{\val_{i_{j^{(t)}}^{(t)}} - \cost_{j^{(t)}}}{\val_{i_{j^{(t)}+1}^{(t)}} - \cost_{j^{(t)}}} ) = p^{t}_{j^{(t)}} ~; 
        \]
    \item assume $\sum_{j' \in [j+1: j^{(t)}]} \mscdf_{j'}^{(t)}\cdot (\val_{i_{j'+1}^{(t)}} - \val_{i_{j'}^{(t)}}) / (\val_{i_{j'+1}^{(t)}} - \cost_{j'}) =  \mscdf_{j+1}^{(t)}$.  Since $\mscdf_{j+1}^{(t)} / \mscdf_j^{(t)} = \df_{j, i_j, i_{j+1}} = \frac{\val_{i_j} - \cost_j}{\val_{i_{j+1}} - \cost_j}$,
    \begin{align*}
        \sum_{j' \in [j: j^{(t)}]} \mscdf_{j'}^{(t)} \cdot  \frac{\val_{i_{j'+1}^{(t)}} - \val_{i_{j'}^{(t)}}}{\val_{i_{j'+1}^{(t)}} - \cost_{j'}} = \mscdf_j^{(t)} \cdot \frac{\val_{i_{j+1}^{(t)}} - \val_{i_{j}^{(t)}}}{\val_{i_{j+1}^{(t)}} - \cost_{j}} + \mscdf_{j+1}^{(t)}  = \mscdf_j^{(t)}~.
    \end{align*}
    \end{enumerate}
Therefore, $\MarketSegment^{(t)}(\val_{i_j^{(t)}}) = \sfrac{\mscdf_j^{(t)}}{\mscdf_1^{(t)}}$, $j \in [j^{(t)}]$.

In \Cref{alg: market segmentation}, there are two places to initiate and update $\mscdf_{j+1}^{(t)}$,  we initiate $\mscdf_{j+1}^{(t)} := p_{j}^{(t)} \cdot \df_{j, i_j^{(t)}, i_{j+1}^{(t)}}$ in Line \ref{alg-line: market segmentation - p_j+1}, and we (might) update $\mscdf_{j+1}^{(t)}$ and $p_{j}^{(t)}$ with the same coefficient in line \ref{alg-line: maket segmentation - update p_j}. Thus, it keeps the relationship
\[
    \mscdf_{j+1}^{(t)} = \mscdf_{j}^{(t)} \cdot \df_{j, i_j^{(t)}, i_{j+1}^{(t)}}~.
\]
We complete the proof.
\end{proof}

\begin{proof}[Proof of \Cref{lem: market segmentation - x y z feasible}]
We denote the remaining fragment decomposition variables at the beginning of iteration $t$ by $\bx^{(t)} = \left(\rsdw_{j, \ell, r}^{(t)}\right)$, $\by^{(t)} = \left(\rsdf_{j, \ell, k, r}^{(t)}\right)$ and $\bz^{(t)} = \left(\rsdz_{j, \ell, r, i}^{(t)}\right)$. 
According to \Cref{alg: market segmentation}, in Line \ref{alg-line: market segmentation - x},
    \[
        \rsdw_{j, \ell, r}^{(t+1)} = \begin{cases}
            \rsdw_{j, \ell, r}^{(t)} - \mscdf_j^{(t)}, & \quad \ell = i_{j}^{(t)}, r = i_{j+1}^{(t)}~;\\
            \rsdw_{j, \ell, r}^{(t)}, & \quad \text{otherwise}~,
        \end{cases}
    \]
    in Line \ref{alg-line: market segmentation - y},
    \[
        \rsdf_{j, \ell, k, r}^{(t+1)} = \begin{cases}
            \rsdf_{j, \ell, k, r}^{(t)} - \mscdf_j^{(t)}, & \quad \ell = i_{j}^{(t)}, k = i_{j+1}^{(t)}, r = i_{j+2}^{(t)}~;\\
            \rsdf_{j, \ell, k, r}^{(t)}, & \quad \text{otherwise}~,
        \end{cases}
    \]
    and in Line \ref{alg-line: market segmentation - z},  
    \[
       \rsdz_{j, \ell, r, i}^{(t+1)} = \begin{cases}
           \rsdz_{j, \ell, r, i}^{(t)} - \alpha^{(t)} \cdot \marketSegmentPmf^{(t)}(\val_i), & \quad \ell = i^{(t)}_{j}, r=i^{(t)}_{j+1}~ i \in [i^{(t)}_{j}: i^{(t)}_{j+1})~;\\
           \rsdz_{j, \ell, r, i}^{(t)}, & \quad \text{otherwise}~,
       \end{cases}
    \]
where $i_j^{(t)}$ refers to the value index $i_j$ for the cost $j$ in iteration $t$.

\textbf{Outgoing flow conservation.} Assume that $(\bx^{(t)}, \by^{(t)}, \bz^{(t)})$ satisfies the outgoing flow conservation. Given the index $j, \ell, k$, then $\rsdw_{j, \ell, k}^{(t)} = \sum_r \rsdf_{j, \ell, k, r}^{(t)} $. We discuss the value of $\rsdw_{j, \ell, k}^{(t+1)}$ in different cases: 
\begin{itemize}
    \item when $\ell = i_{j}^{(t)}$ and $k = i_{j+1}^{(t)}$, 
        \begin{align*}
            \rsdw_{j, \ell, k}^{(t+1)} 
            & = \rsdw_{j, \ell, k}^{(t)} - \mscdf_j^{(t)} \\
            & = \sum_r \rsdf_{j, \ell, k, r}^{(t)} - \mscdf_j^{(t)} \\
            & = \sum_{r \ne i_{j+2}^{(t)}}\rsdf_{j, \ell, k, r}^{(t)} + \rsdf_{j, \ell, k, i_{j+2}^{(t)}}^{(t)}  - \mscdf_j^{(t)} \\
            & = \sum_{r \ne i_{j+2}^{(t)}}\rsdf_{j, \ell, k, r}^{(t+1)} + \rsdf_{j, \ell, k, i_{j+2}^{(t)}}^{(t+1)}  \\
            & = \sum_{r}\rsdf_{j, \ell, k, r}^{(t+1)} ~;
        \end{align*}
    \item when $\ell \ne i_{j}^{(t)}$ or $k \ne i_{j+1}^{(t)}$, for all $r \in [n]$, it satisfies $\rsdf_{j, \ell, k, r}^{(t+1)} = \rsdf_{j, \ell, k, r}^{(t)}$. Thus,
    \[ \rsdw_{j, \ell, k}^{(t+1)} =  \rsdw_{j, \ell, k}^{(t)} = \sum_{r}\rsdf_{j, \ell, k, r}^{(t)} = \sum_{r}\rsdf_{j, \ell, k, r}^{(t+1)}~. \]
\end{itemize}
Therefore, $(\bx^{(t+1)}, \by^{(t+1)}, \bz^{(t+1)})$ satisfies the outgoing flow conservation if $(\bx^{(t)}, \by^{(t)}, \bz^{(t)})$ satisfies it.

\textbf{Incoming flow conservation.} Similarly, assume that $(\bx^{(t)}, \by^{(t)}, \bz^{(t)})$ satisfies the incoming flow conservation. Given the index $j, k , r$, then $\rsdw_{j+1, k, r}^{(t)} = \sum_\ell \rsdf_{j, \ell, k, r}^{(t)} \cdot \df_{j, \ell, k} $. We consider $\rsdw_{j+1, k, r}^{(t+1)}$ in different cases: 
\begin{itemize}
    \item when $k = i_{j+1}^{(t)}$ and $r = i_{j+2}^{(t)}$, 
        \begin{align*}
            \rsdw_{j+1, k, r}^{(t+1)} 
            & = \rsdw_{j+1, \ell, k}^{(t)} - \mscdf_{j+1}^{(t)} \\
            & = \sum_\ell \rsdf_{j, \ell, k, r}^{(t)} \cdot \df_{j, \ell, k} - \mscdf_j^{(t)} \cdot \df_{j, i_j^{(t)}, i_{j+1}^{(t)}}  \\
            & = \sum_{\ell \ne i_{j}^{(t)}}\rsdf_{j, \ell, k, r}^{(t)}  \cdot \df_{j, \ell, k} + \rsdf_{j, i_{j}^{(t)}, k, r}^{(t)} \cdot \df_{j, i_j^{(t)}, k}  - \mscdf_j^{(t)} \cdot \df_{j, i_j^{(t)}, k}\\
            & = \sum_{\ell \ne i_{j}^{(t)}}\rsdf_{j, \ell, k, r}^{(t+1)} \cdot \df_{j, \ell, k} + \rsdf_{j, i_{j}^{(t)}, k, r}^{(t+1)} \cdot \df_{j, i_j^{(t)}, k} \\
            & = \sum_{r}\rsdf_{j, \ell, k, r}^{(t+1)} \cdot \df_{j, \ell, k}~;
        \end{align*}
    \item when $k \ne i_{j+1}^{(t)}$ or $r \ne i_{j+2}^{(t)}$, for all $\ell \in [n]$, it satisfies $\rsdf_{j, \ell, k, r}^{(t+1)} = \rsdf_{j, \ell, k, r}^{(t)}$. Thus,
    \[ \rsdw_{j+1, k, r}^{(t+1)} =  \rsdw_{j+1, k, r}^{(t)} = \sum_{\ell}\rsdf_{j, \ell, k, r}^{(t)} \cdot \df_{j, \ell, k} = \sum_{\ell}\rsdf_{j, \ell, k, r}^{(t+1)} \cdot \df_{j, \ell, k}~. \]
\end{itemize}
Therefore, $(\bx^{(t+1)}, \by^{(t+1)}, \bz^{(t+1)})$ satisfies the incoming flow conservation if $(\bx^{(t)}, \by^{(t)}, \bz^{(t)})$ satisfies it.

\textbf{Proper domination.} 
Assume for $t$, it has $\sum_{i\in [n]} \rsdz_{j, \ell, r, i}^{(t)} = (\val_r - \val_\ell)/(\val_r - \cost_j) \cdot \rsdw_{j, \ell, r}^{(t)} $. Given $j \in [j^{(t)}]$, 
\begin{itemize}
    \item  when $\ell \ne i^{(t)}_{j}$ or $r \ne i^{(t)}_{j+1}$, both $\rsdz_{j, \ell, r, i}^{(t+1)} = \rsdz_{j, \ell, r, i}^{(t)}$ and $\rsdw_{j, \ell, r}^{(t+1)} = \rsdw_{j, \ell, r}^{(t)}$, thus $\sum_{i\in [n]} \rsdz_{j, \ell, r, i}^{(t+1)} = (\val_r - \val_\ell)/(\val_r - \cost_j) \cdot \rsdw_{j, \ell, r}^{(t+1)}$;
    \item when $\ell = i^{(t)}_{j}$ and $r = i^{(t)}_{j+1}$,
    \begin{align*}
       \sum_{i\in [n]} \rsdz_{j, \ell, r, i}^{(t+1)} & = \sum_{i\in [\ell, r)} \rsdz_{j, \ell, r, i}^{(t+1)} \\
       & = \sum_{i\in [\ell, r)} \left(\rsdz_{j, \ell, r, i}^{(t)} - \alpha^{(t)} \cdot \marketSegmentPmf^{(t)}(\val_i)\right)\\
       & = \sum_{i\in [\ell, r)} \rsdz_{j, \ell, r, i}^{(t)} - \alpha^{(t)}  \sum_{i\in [\ell, r)}\marketSegmentPmf^{(t)}(\val_i)~.
   \end{align*}
   In \Cref{lem: market segmentation - tail probability p_j}, it is proved that $\MarketSegment^{(t)}(\val_{\ell}) = \MarketSegment^{(t)} \bigl(\val_{i_j^{(t)}}\bigr)=\mscdf_j^{(t)} / \mscdf_1^{(t)}$ and $\mscdf_{j+1}^{(t)}=\mscdf_{j}^{(t)}\cdot \df_{j, \ell, r}$. Thus, the sum of probability mass $\sum_{i\in [\ell, r)}\marketSegmentPmf^{(t)}(\val_i) = (\mscdf_j^{(t)} - \mscdf_{j+1}^{(t)})/\mscdf_1^{(t)} = \mscdf_j^{(t)} \cdot ( 1 - \df_{j, \ell, r}) / \mscdf_1^{(t)}$. Since $1-\df_{j, \ell, r} = (\val_r - \val_\ell)/(\val_r - \cost_j)$, then 
   \begin{align*}
       \sum_{i\in [n]} \rsdz_{j, \ell, r, i}^{(t+1)} 
       & = \sum_{i\in [\ell, r)} \rsdz_{j, \ell, r, i}^{(t)} - \mscdf_j^{(t)} \cdot \frac{\val_r - \val_\ell}{\val_r - \cost_j}\\
       & =\frac{\val_r - \val_\ell}{\val_r - \cost_j} \cdot (  \rsdw_{j, \ell, r}^{(t)} - \mscdf_j^{(t)}) \\
       & =\frac{\val_r - \val_\ell}{\val_r - \cost_j} \cdot \rsdw_{j, \ell, r}^{(t+1)}~.
    \end{align*}
\end{itemize}
Therefore, when it holds for $t$, it also holds for $t+1$.

Assume that given $t$, for any $i \in [n]$, $\rsdz^{(t)}_{j, \ell, r, i} \le \Frag_{j, \ell, r}(\val_i) \cdot \rsdw^{(t)}_{j, \ell, r}$. 
    For $t+1$, given $j \in [j^{(t)}]$, 
    \begin{itemize}
        \item  when $\ell \ne i^{(t)}_{j}$ or $r \ne i^{(t)}_{j+1}$, both $\rsdz_{j, \ell, r, i}^{(t+1)} = \rsdz_{j, \ell, r, i}^{(t)}$ and $\rsdw_{j, \ell, r}^{(t+1)} = \rsdw_{j, \ell, r}^{(t)}$, thus $\rsdz_{j, \ell, r, i}^{(t+1)} \le \Frag_{j, \ell, r}(\val_i) \cdot \rsdw_{j, \ell, r}^{(t+1)}$;
        \item when $\ell = i^{(t)}_{j}$ and $r = i^{(t)}_{j+1}$, according to Line \ref{alg-line: f(v_i)} and Line \ref{alg-line: alpha^t and market segment F}, $\marketSegmentPmf^{(t)}(\val_i) = \frac{1}{\mscdf_1^{(t)}} \sum_{j \in [j^{(t)}]} \rsdz_{j, i_{j}^{(t)}, i_{j + 1}^{(t)}, i}^{(t)} \cdot \mscdf_{j}^{(t)} / \rsdw_{j, i_{j}^{(t)}, i_{j + 1}^{(t)}}^{(t)}$. When $i \notin [i_{j}^{(t)}, i_{j + 1}^{(t)})$, 
        \[
        \rsdz_{j, i_{j}^{(t)}, i_{j + 1}^{(t)}, i}^{(t+1)}  = \rsdz_{j, i_{j}^{(t)}, i_{j + 1}^{(t)}, i}^{(t)}= 0 \le \Frag_{j, \ell, r}(\val_i) \cdot \rsdw_{j, \ell, r}^{(t+1)}~.
        \]
        When $i \in [i_{j}^{(t)}, i_{j + 1}^{(t)})$, 
        \[
            \marketSegmentPmf^{(t)}(\val_i) = \frac{\rsdz_{j, \ell, r, i}^{(t)} \cdot \mscdf_{j}^{(t)}}{\mscdf_1^{(t)} \cdot \rsdw_{j, \ell, r}^{(t)}}~.
        \]
        Thus, $\rsdz_{j, \ell, r, i}^{(t+1)} =
           \rsdz_{j, \ell, r, i}^{(t)} - \rsdz_{j, \ell, r, i}^{(t)} \cdot \mscdf_{j}^{(t)}/ \rsdw_{j, \ell, r}^{(t)} = (1 - \mscdf_{j}^{(t)}/ \rsdw_{j, \ell, r}^{(t)}) \cdot \rsdz_{j, \ell, r, i}^{(t)}$ and $\rsdw_{j, \ell, r}^{(t+1)} =
            \rsdw_{j, \ell, r}^{(t)} - \mscdf_j^{(t)} = (1 - \sfrac{\mscdf_{j}^{(t)}}{ \rsdw_{j, \ell, r}^{(t)}}) \cdot \rsdw_{j, \ell, r}^{(t)}$. Both are scaled by the same coefficient. Then
            \[
            \rsdz^{(t+1)}_{j, \ell, r, i} =  \left(1 - \frac{\mscdf_{j}^{(t)}}{\rsdw_{j, \ell, r}^{(t)}}\right) \cdot \rsdz_{j, \ell, r, i}^{(t)} \le  \left(1 - \frac{\mscdf_{j}^{(t)}}{\rsdw_{j, \ell, r}^{(t)}}\right) \cdot \Frag_{j, \ell, r}(\val_i) \cdot \rsdw^{(t)}_{j, \ell, r} =  \Frag_{j, \ell, r}(\val_i) \cdot \rsdw^{(t+1)}_{j, \ell, r}~.
            \]   
    Thus, when the domination relationship holds for $t$, it also holds for $t+1$.
    \end{itemize}
Therefore, if $\left(\bx^{(t)}, \by^{(t)}, \bz^{(t)}\right)$ satisfies the constraints: outgoing/incoming flow conservation, fragment composition, and proper domination, $\left(\bx^{(t+1)}, \by^{(t+1)}, \bz^{(t+1)}\right)$ also satisfies all.
\end{proof}

\begin{proof}[Proof of \Cref{lem: market segmentation stop condition}]
    It is clear that when $\bx = \textbf{0}$, \Cref{alg: market segmentation} stops. What we need to prove is when \Cref{alg: market segmentation} stops, $\bx = \textbf{0}$. 
    
    We prove it by contradiction, which means we stop the procedure either we can not find $\rsdw_{1, i_1, i_2} >0$ but for some $j>1$, it has $\rsdw_{j, i_{j}, i_{j+1}} >0$; or we can not find $\rsdf_{j, i_j, i_{j+1}, i_{j+2}} > 0 $ for a specific $j$  when $\bx \ne 0$. However, we proved that the incoming and outgoing flow conservation are always satisfied in each iteration in \Cref{lem: market segmentation - x y z feasible}, which indicates that if $\rsdw_{1, \ell, r} >0$, we can always find outgoing flow and also the corresponding $\rsdf_{j, i_j, i_{j+1}, i_{j+2}} > 0$ for all $j$; if $\rsdf_{j, i_j, i_{j+1}, i_{j+2}} > 0$, we can always find its incoming flow and also has $\rsdf_{j', i_j', i_{j'+1}, i_{j'+2}} > 0 $ for all $j' \in [j]$, thus $\rsdw_{1, i_1, i_2} >0$. These contradicts our assumption. Therefore, when \Cref{alg: market segmentation} stops, $\bx = \textbf{0}$.  
\end{proof}

\begin{proof}[Proof of \Cref{lem: market segment - exists segmentation preserves surplus}]
We denote the remaining fragment decomposition variables at the beginning of iteration $t$ by $\bx^{(t)} = (\rsdw_{j, \ell, r}^{(t)})$ and $\bz^{(t)} = (\rsdz_{j, \ell, r, i}^{(t)})$. According to Line \ref{alg-line: market segmentation - x} of \Cref{alg: market segmentation},
\[
    \rsdw_{j, \ell, k}^{(t+1)} = \begin{cases}
        \rsdw_{j, \ell, k}^{(t)} - \mscdf_j^{(t)}, & \ell = i_{j}^{(t)}, k = i_{j+1}^{(t)}~;\\
        \rsdw_{j, \ell, k}^{(t)}, & \text{otherwise}~,
    \end{cases}
\]
where $i_{j}^{(t)}$ refers to the corresponding or $i_{j}$ calculated in iteration $t$ of \Cref{alg: market segmentation}. 
According to Line \ref{alg-line: market segmentation - z} of \Cref{alg: market segmentation},
\[
\rsdz_{j, \ell, k, i}^{(t+1)} = \begin{cases}
    \rsdz_{j, \ell, k, i}^{(t)} - \alpha^{(t)} \cdot \marketSegmentPmf^{(t)}(\val_i), & \ell = i_{j}^{(t)}, k = i_{j+1}^{(t)}~;\\
    \rsdz_{j, i_j, i_{j + 1}, i}^{(t)}, & \text{otherwise}~.
    \end{cases}
\]
We first prove in each iteration $t$, the seller surplus ($\SellerSurplus(\MarketSegment^{(t)})$) and maximum/minimum buyer surplus ($\MaxBuyerSurplus(\MarketSegment^{(t)})$ / $\MinBuyerSurplus(\MarketSegment^{(t)})$) of $\MarketSegment^{(t)}$ weighted by $\alpha^{(t)}$ is equivalent to the change in seller surplus and maximum / minimum buyer surplus induced by $(\bx, \bz)$.

\textbf{The change in seller surplus and maximum / minimum buyer surplus induced by $(\bx, \bz)$.}
The seller surplus induced by $(\bx^{(t)}, \bz^{(t)})$ is $\SellerSurplus\left((\bx^{(t)}, \bz^{(t)})\right) = \sum_{j, \ell, r} \costpmf(\cost_j) \cdot (\val_\ell - \cost_j) \cdot \rsdw_{j, \ell, r}^{(t)}$ according to \Cref{def: surplus by reduced fragment decompositions (x z)}. Therefore, the change in seller surplus of iteration $t$ equals to
\begin{align*}
    \SellerSurplus\left((\bx, \bz)^{(t)}\right) - \SellerSurplus\left((\bx, \bz)^{(t+1)}\right) & = \sum_{j, \ell, r} \costpmf(\cost_j) \cdot (\val_\ell - \cost_j) \cdot \rsdw_{j, \ell, r}^{(t)} - \sum_{j, \ell, r} \costpmf(\cost_j) \cdot (\val_\ell - \cost_j) \cdot \rsdw_{j, \ell, r}^{(t+1)} \\
    & = \sum_{j \in [j^{(t)}]} \costpmf(\cost_j) \cdot (\val_{i_j^{(t)}} - \cost_j) \cdot \mscdf_j^{(t)}~,
\end{align*}
where $j^{(t)}$ refers to the corresponding $j$ when the while loop (Line \ref{alg-line: market segmentation - iteration t while starts} to Line \ref{alg-line: market segmentation - iteration t while ends}) ends in iteration $t$. 

The maximum social welfare charged by $(\bx^{(t)}, \bz^{(t)})$ is $\MaxSocialWelfare \left((\bx^{(t)}, \bz^{(t)})\right) = \sum_{j, \ell,r} \sum_{j' \in [j]}\sum_{i \in [\ell: r)} (\val_i - \cost_{j'}) \cdot \rsdz_{j, \ell, r, i}^{(t)}$ according to \Cref{eq: maximum social welfare charged by z}. Therefore, the change of maximum social welfare of iteration $t$ equals to
\begin{align*}
    &\quad \MaxSocialWelfare\left((\bx, \bz)^{(t)}\right) - \MaxSocialWelfare\left((\bx, \bz)^{(t+1)}\right) \\
    = &\quad \sum_{j, \ell,r} \sum_{j' \in [j]}\sum_{i \in [\ell: r)} \costpmf(\cost_{j'}) \cdot (\val_i - \cost_{j'}) \cdot \rsdz_{j, \ell, r, i}^{(t)} - \sum_{j, \ell,r} \sum_{j' \in [j]}\sum_{i \in [\ell: r)} \costpmf(\cost_{j'}) \cdot (\val_i - \cost_{j'}) \cdot \rsdz_{j, \ell, r, i}^{(t+1)} \\
    = &\quad \sum_{j \in [j^{(t)}]} \sum_{j' \in [j]} \sum_{i \in [i_j^{(t)}: i_{j+1}^{(t)})}  \costpmf(\cost_{j'}) \cdot (\val_{i} - \cost_{j'}) \cdot \alpha^{(t)} \cdot \marketSegmentPmf^{(t)}(\val_i)~,
\end{align*}
When we consider the social welfare for the cost $\cost_j'$, for all $j \in [j', j^{(t)}]$ and $i \in [i_j^{(t)}: i_{j+1}^{(t)})$ it has the corresponding social welfare $\costpmf(\cost_{j'}) \cdot (\val_{i} - \cost_{j'}) \cdot \alpha^{(t)} \cdot \marketSegmentPmf^{(t)}(\val_i)$ contributed.  Therefore,
\begin{align*}
    \MaxSocialWelfare\left((\bx, \bz)^{(t)}\right) - \MaxSocialWelfare\left((\bx, \bz)^{(t+1)}\right) =  \alpha^{(t)}\sum_{j' \in [j^{(t)}]} \costpmf(\cost_{j'})  \sum_{i \ge i_{j'}^{(t)}} (\val_{i} - \cost_{j'}) \cdot \marketSegmentPmf^{(t)}(\val_i)~.
\end{align*}
Similarly, the minimum social welfare charged by $(\bx^{(t)}, \bz^{(t)})$ is $\MinSocialWelfare \left((\bx^{(t)}, \bz^{(t)})\right) = \sum_{j, \ell,r} \sum_{j' \in [j-1]}\sum_{i \in [\ell: r)} (\val_i - \cost_{j'}) \cdot \rsdz_{j, \ell, r, i}^{(t)}$ according to \Cref{eq: minimum social welfare charged by z}. Therefore, the change in minimum social welfare of iteration $t$ equals to
\begin{align*}
    &\quad \MinSocialWelfare\left((\bx, \bz)^{(t)}\right) - \MinSocialWelfare\left((\bx, \bz)^{(t+1)}\right) \\
    = &\quad \sum_{j, \ell,r} \sum_{j' \in [j-1]}\sum_{i \in [\ell: r)} \costpmf(\cost_{j'}) \cdot (\val_i - \cost_{j'}) \cdot \rsdz_{j, \ell, r, i}^{(t)} - \sum_{j, \ell,r} \sum_{j' \in [j-1]}\sum_{i \in [\ell: r)} \costpmf(\cost_{j'}) \cdot (\val_i - \cost_{j'}) \cdot \rsdz_{j, \ell, r, i}^{(t+1)} \\
    = &\quad \sum_{j \in [j^{(t)}]} \sum_{j' \in [j-1]} \sum_{i \in [i_j^{(t)}: i_{j+1}^{(t)})}  \costpmf(\cost_{j'}) \cdot (\val_{i} - \cost_{j'}) \cdot \alpha^{(t)} \cdot \marketSegmentPmf^{(t)}(\val_i)~,
\end{align*}
When we consider the social welfare for the cost $\cost_j'$, for all $j \in [j'+1, j^{(t)}]$ and $i \in [i_j^{(t)}: i_{j+1}^{(t)})$ it has the corresponding social welfare $\costpmf(\cost_{j'}) \cdot (\val_{i} - \cost_{j'}) \cdot \alpha^{(t)} \cdot \marketSegmentPmf^{(t)}(\val_i)$ contributed.  Therefore,
\begin{align*}
    \MinSocialWelfare\left((\bx, \bz)^{(t)}\right) - \MinSocialWelfare\left((\bx, \bz)^{(t+1)}\right) =  \alpha^{(t)}\sum_{j' \in [j^{(t)}]} \costpmf(\cost_{j'})  \sum_{i \ge i_{j'+1}^{(t)}} (\val_{i} - \cost_{j'}) \cdot \marketSegmentPmf^{(t)}(\val_i)~.
\end{align*}
\textbf{Seller surplus and maximum/minimum buyer surplus of each constructed market $\MarketSegment^{(t)}$}. Since we have already proved $\MarketSegment^{(t)} \bigl(\val_{i_j^{(t)}}\bigr)=\mscdf_j^{(t)}/\mscdf_1^{(t)}$, we discuss the CDF of $\val_i$ for $i \in [i_j^{(t)}: i_{j+1}^{(t)})$, 
\begin{align*}
    \MarketSegment^{(t)}(\val_{i}) 
    &= \frac{\mscdf_{j+1}^{(t)}}{\mscdf_1^{(t)}}  + \sum_{i' \in [i, i_{j+1}^{(t)})} \marketSegmentPmf^{(t)}(\val_{i'})\\
    &= \frac{\mscdf_{j+1}^{(t)}}{\mscdf_1^{(t)}} + \sum_{i' \in [i, i_{j+1}^{(t)})} \frac{1}{\mscdf_1^{(t)}}\sum_{j'\in [j^{(t)}]} \frac{z_{j', i_{j'}^{(t)}, i_{j' + 1}^{(t)}, i'} \cdot \mscdf_{j'}^{(t)}}{x_{j', i_{j'}^{(t)}, i_{j' + 1}^{(t)}}}\\
    &= \frac{\mscdf_{j+1}^{(t)}}{\mscdf_1^{(t)}} + \sum_{i' \in [i, i_{j+1}^{(t)})} \frac{1}{\mscdf_1^{(t)}}\frac{\rsdz_{j, i_{j}^{(t)}, i_{j + 1}^{(t)}, i'} \cdot \mscdf_{j}^{(t)}}{\rsdw_{j, i_{j}^{(t)}, i_{j + 1}^{(t)}}} \\
    &= \frac{\mscdf_{j}^{(t)}}{\mscdf_1^{(t)}} \cdot  \left( \frac{\val_{i_j^{(t)}} - \cost_j}{\val_{i_{j+1}^{(t)}} - \cost_j} + \frac{1}{\rsdw^{(t)}_{j, i_{j}^{(t)}, i_{j + 1}^{(t)}}} \sum_{i' \in [i: i_{j+1}^{(t)})} \rsdz^{(t)}_{j, i_{j}^{(t)}, i_{j + 1}^{(t)}, i'} \right)~.
\end{align*}
In \Cref{lem: market segmentation - x y z feasible}, we have proved that the fragment domination preserves, then $\sum_{i' \in [i: i_{j+1}^{(t)})} \rsdz^{(t)}_{j, i_{j}^{(t)}, i_{j + 1}^{(t)}, i'} \le \Frag_{j, i_{j}^{(t)}, i_{j + 1}^{(t)}}(\val_i) \cdot \rsdw^{(t)}_{j, i_{j}^{(t)}, i_{j + 1}^{(t)}}$ for all $i \in [i_j^{(t)}: i_{j+1}^{(t)})$. Moreover, combining \Cref{lem:locality of fragment} and the definition of undominated fragment $\Frag_{j, i_{j}^{(t)}, i_{j + 1}^{(t)}}$, we have 
\[
    \Frag_{j, \ell, r}(\val_i)  = \frac{\val_{i_{j}^{(t)}} - \cost_j}{\val_i - \cost_j} - \frac{\val_{i_{j}^{(t)}} - \cost_j}{\val_{i_{j+1}^{(t)}} - \cost_j},~\quad i \in [i_{j}^{(t)}, i_{j+1}^{(t)})~.
\]
Then 
\begin{align*}
    \MarketSegment^{(t)}(\val_{i}) 
    &\le \frac{\mscdf_{j}^{(t)}}{\mscdf_1^{(t)}} \cdot  \left( \frac{\val_{i_j^{(t)}} - \cost_j}{\val_{i} - \cost_j}  \right),~\quad i \in [i_{j}^{(t)}, i_{j+1}^{(t)})~.
\end{align*}
Since $\MarketSegment^{(t)}(\val_{i_j^{(t)}}) = \frac{\mscdf_{j}^{(t)}}{\mscdf_1^{(t)}}$, and $\MarketSegment^{(t)}(\val_{i_{j+1}^{(t)}}) = \frac{\mscdf_{j+1}^{(t)}}{\mscdf_1^{(t)}} = \frac{\mscdf_{j}^{(t)}}{\mscdf_1^{(t)}} \cdot \frac{\val_{i_j^{(t)}} - \cost_j}{\val_{i_{j+1}^{(t)}} - \cost_j}$, for all $j \in [j^{(t)}]$, 
\[
    \MarketSegment^{(t)}(\val_{i}) \cdot (\val_{i} - \cost_j) 
    \le \MarketSegment^{(t)}(\val_{i_j^{(t)}}) \cdot ( \val_{i_j^{(t)}} - \cost_j),~\quad i \in [i_{j}^{(t)}, i_{j+1}^{(t)}]~,
\]
and the equality holds when $i = i_{j}^{(t)}$ or $i = i_{j+1}^{(t)}$.
Thus, $\MarketSegment^{(t)}$ is (weakly) dominated by $\ValDist_{\op, \SuppSet}$, where $\op = (\val_{i_j^{(t)}})_{j\in [j^{(t)}]}$ and $\SuppSet = (\val_i)_{i \in [n]}$.
Indicates that $\{\val_{i_j^{(t)}}, \val_{i_{j+1}^{(t)}}\} \subseteq \Op_j(\MarketSegment^{(t)})$. When we use $\val_{i_j^{(t)}}$ as the price for cost $\cost_j$, the seller surplus of the constructed $\MarketSegment$ equals
\begin{align*}
    \SellerSurplus(\MarketSegment^{(t)}) & = \sum_j\costpmf(\cost_j) \cdot (\val_{i_j^{(t)}} - \cost_j) \cdot \MarketSegment^{(t)}(\val_{i_j^{(t)}})
    = \sum_j\costpmf(\cost_j) \cdot (\val_{i_j^{(t)}} - \cost_j) \cdot \frac{\mscdf_j^{(t)}}{\mscdf_1^{(t)}}~.
\end{align*}
Therefore, the weighted seller surplus $\alpha^{(t)} \cdot \SellerSurplus(\MarketSegment^{(t)}) = \sum_j\costpmf(\cost_j) \cdot (\val_{i_j^{(t)}} - \cost_j) \cdot \mscdf_j^{(t)}$ equals the change of the seller surplus $\SellerSurplus\left((\bx, \bz)^{(t)}\right) - \SellerSurplus\left((\bx, \bz)^{(t+1)}\right)$ induced by $(\bx, \bz)$.

Due to the monotonicity of the optimal price set $\Op_j(\MarketSegment^{(t)})$ (\Cref{lem:monotonicity of optimal price set -v2}), we infer that $\val_{i_j^{(t)}}$ is the minimum optimal price and $\val_{i_{j+1}^{(t)}}$ is the maximum optimal price for cost $\cost_j$ in the constructed market $\MarketSegment^{(t)}$. Thus, the maximum(or minimum) social welfare is attained when we apply $\val_{i_{j}^{(t)}}$ (or $\val_{i_{j+1}^{(t)}}$) as the price for cost $\cost_j$, 
\begin{align*}
    \MaxSocialWelfare(\MarketSegment^{(t)}) 
    & = \sum_{j\in j^{(t)}}\costpmf(\cost_j) \sum_{i \ge i_{j}^{(t)}} (\val_{i} - \cost_j) \cdot \marketSegmentPmf^{(t)}(\val_{i}) ~; \\
    \MinSocialWelfare(\MarketSegment^{(t)}) 
    & = \sum_{j\in j^{(t)}}\costpmf(\cost_j) \sum_{i \ge i_{j+1}^{(t)}} (\val_{i} - \cost_j) \cdot \marketSegmentPmf^{(t)}(\val_{i})~.
\end{align*}
Therefore, the weighted maximum social welfare $\alpha^{(t)} \cdot \MaxSocialWelfare(\MarketSegment^{(t)})$ equals the change of maximum social welfare $\MaxSocialWelfare\left((\bx, \bz)^{(t)}\right) - \MaxSocialWelfare\left((\bx, \bz)^{(t+1)}\right)$ induced by $(\bx, \bz)$, and the weighted minimum social welfare $\alpha^{(t)} \cdot \MinSocialWelfare(\MarketSegment^{(t)})$ equals the change of the minimum social welfare $\MinSocialWelfare\left((\bx, \bz)^{(t)}\right) - \MinSocialWelfare\left((\bx, \bz)^{(t+1)}\right)$ induced by $(\bx, \bz)$. Then, the weighted maximum buyer surplus $\alpha^{(t)} \cdot \MaxBuyerSurplus(\MarketSegment^{(t)}) = \alpha^{(t)} \cdot (\MaxSocialWelfare(\MarketSegment^{(t)}) - \SellerSurplus(\MarketSegment^{(t)}))$ equals the change of the maximum buyer surplus $\MaxBuyerSurplus\left((\bx, \bz)^{(t)}\right) - \MaxBuyerSurplus\left((\bx, \bz)^{(t+1)}\right)$ induced by $(\bx, \bz)$, and the weighted minimum buyer surplus $\alpha^{(t)} \cdot \MinBuyerSurplus(\MarketSegment^{(t)}) =\alpha^{(t)} \cdot (\MinSocialWelfare(\MarketSegment^{(t)}) - \SellerSurplus(\MarketSegment^{(t)}))$ equals the change of the minimum buyer surplus $\MinBuyerSurplus\left((\bx, \bz)^{(t)}\right) - \MinBuyerSurplus\left((\bx, \bz)^{(t+1)}\right)$ induced by $(\bx, \bz)$.

Since \Cref{alg: market segmentation} stops iff $\bx=0$ (as in \Cref{lem: market segmentation stop condition}), we can finally conclude that if we segment the aggregate market $\AggMarket$ with \Cref{alg: market segmentation}, the seller surplus and The maximum / minimum buyer surplus of the segmentation ($\alpha^{(t)}, \MarketSegment^{(t)})$ are equivalent to the seller surplus and maximum / minimum buyer surplus induced by $(\bx, \bz)$.
\end{proof}
\end{document}